\documentclass[11pt]{article}

\usepackage[margin=1in]{geometry}

\usepackage{xcolor}
\definecolor{navy}{rgb}{0, 0, 0.75}

\usepackage{graphicx}
\usepackage{tikz}

\usepackage{caption}
\usepackage{subcaption}

\usepackage{amsmath, amsthm, amssymb}
\usepackage{bm}

\usepackage[shortlabels]{enumitem}

\usepackage{algorithm}
\usepackage[noend]{algpseudocode}

\usepackage[colorlinks, allcolors=navy]{hyperref}
\usepackage[capitalize,nosort]{cleveref}

\numberwithin{equation}{section}
\theoremstyle{plain}
\newtheorem{theorem}[equation]{Theorem}
\newtheorem{corollary}[equation]{Corollary}
\newtheorem{lemma}[equation]{Lemma}
\theoremstyle{definition}
\newtheorem{definition}[equation]{Definition}

\usepackage{thmtools}
\usepackage{thm-restate}

\DeclareMathOperator*{\E}{\mathbb{E}}

\newcommand{\iid}{\overset{\textit{iid}}{\sim}}

\newcommand{\eps}{\varepsilon}
\newcommand{\abs}[1]{\lvert#1\rvert}
\newcommand{\Abs}[1]{\left\lvert#1\right\rvert}
\newcommand{\norm}[1]{\lVert#1\rVert}

\newcommand{\R}{\mathbb{R}}
\newcommand{\N}{\mathbb{N}}
\newcommand{\Z}{\mathbb{Z}}
\newcommand{\cA}{\mathcal{A}}
\newcommand{\cB}{\mathcal{B}}
\newcommand{\cC}{\mathcal{C}}
\newcommand{\cD}{\mathcal{D}}
\newcommand{\cF}{\mathcal{F}}
\newcommand{\cG}{\mathcal{G}}

\newcommand{\cP}{\mathcal{P}}
\newcommand{\cQ}{\mathcal{Q}}
\newcommand{\cR}{\mathcal{R}}
\newcommand{\cS}{\mathcal{S}}
\newcommand{\cT}{\mathcal{T}}
\newcommand{\cU}{\mathcal{U}}
\newcommand{\cV}{\mathcal{V}}

\newcommand{\cX}{\mathcal{X}}
\newcommand{\cY}{\mathcal{Y}}

\newcommand{\hcD}{\widehat{\mathcal{D}}}
\newcommand{\tcD}{\widetilde{\mathcal{D}}}
\newcommand{\TV}{d_{\mathrm{TV}}}


\newcommand\blfootnote[1]{
    \begingroup \renewcommand\thefootnote{}\footnote{#1}
    \addtocounter{footnote}{-1}
    \endgroup
}

\setcounter{tocdepth}{2}

\title{Efficient and Private Property Testing via Indistinguishability}
\author{
    Cynthia Dwork \\
    Harvard University \\
    \url{dwork@seas.harvard.edu}
\and
    Pranay Tankala \\
    Harvard University \\
    \url{pranay_tankala@g.harvard.edu}
}
\date{April 6, 2026}

\begin{document}

\maketitle

\thispagestyle{empty}

\begin{abstract}
    Given a small random sample of $n$-bit strings labeled by an unknown Boolean function, which properties of this function can be tested computationally efficiently? We show an equivalence between properties that are efficiently testable from few samples and properties with \emph{structured symmetry}, which depend only on the function's average values on an efficiently computable partition of the domain. Without the efficiency constraint, a similar characterization in terms of unstructured symmetry was obtained by Blais and Yoshida (2019). We also give a function testing analogue of the classic characterization of testable graph properties in terms of regular partitions, as well as a sublinear time and \emph{differentially private} algorithm to compute concise summaries of such partitions of graphs. Finally, we tighten a recent characterization of the computational indistinguishability of product distributions, which encompasses the related task of efficiently testing which of \emph{two} candidate functions labeled the observed samples.

    Essential to our proofs is the following observation of independent interest: Every randomized Boolean function, no matter how complex, admits a \emph{supersimulator}: a randomized polynomial-size circuit whose output on random inputs cannot be efficiently distinguished from reality with constant advantage, {\em even by polynomially larger distinguishers}. This surprising fact is implicit in a theorem of Dwork et al. (2021) in the context of algorithmic fairness, but its complexity-theoretic implications were not previously explored. We give a new proof of this lemma using an iteration technique from the graph regularity literature, and we observe that a subtle quantifier switch allows it to powerfully circumvent known barriers to improving the landmark \emph{complexity-theoretic regularity lemma} of Trevisan, Tulsiani, and Vadhan (2009).
\end{abstract}

\blfootnote{This work was supported in part by Simons Foundation Grant 733782 and Cooperative Agreement CB20ADR0160001 with the United States Census Bureau.}

\newpage
\thispagestyle{empty}
\tableofcontents
\newpage
\pagestyle{plain}
\setcounter{page}{1}

\section{Introduction}
\label[section]{sec:intro}

The goal in distribution testing \cite{goldreich1996property, batu2000testing} is to determine whether an unknown probability distribution $\cD$ has a particular property of interest, such as uniformity or equality to a reference distribution. Rather than being given a complete description of $\cD$, the algorithm has access to an oracle that generates independent samples from $\cD$ on request. Ideally, we would like to make as few requests as possible, run a fairly simple computation on the resulting samples, and determine whether $\cD$ has the property or is far from having the property.

In this work, we focus on distributions over pairs $(x, y) \in \{0, 1\}^n \times \{0, 1\}$, where $x$ is a uniform $n$-bit string, and $y = f(x)$ for some unknown Boolean function $f : \{0, 1\}^n \to \{0, 1\}$. Our main result is a characterization of which properties of $f$ can be efficiently tested from few samples in this framework. By ``efficient,'' we mean that the testing algorithm can be represented as a small Boolean circuit that takes as input a set of labeled samples and outputs \textsc{Accept} or \textsc{Reject}. We also design a sublinear time \emph{differentially private}  \cite{dwork2006calibrating} algorithm to compute concise summaries of \emph{regular partitions} of graphs, which are known to characterize testability for graph properties \cite{alon2006regularity}, and give a function testing version of the characterization. Finally, we tighten a recent characterization of the computational indistinguishability of product distributions, which corresponds to the conceptually easier task of \emph{simple hypothesis testing}.

\subsection{Background}

Without the requirement of computational efficiency, an elegant result of Blais and Yoshida fully characterizes the properties of Boolean functions that are testable from a constant number of samples (i.e. not scaling with $n$) \cite{blais2019testable}. They showed that a property is constant-sample testable if and only if it has constant-part \emph{symmetry}, where a property of $f$ is said to have \emph{$k$-part symmetry} if permutations of the domain within parts of some fixed partition $S_1, \ldots, S_k \subseteq \{0, 1\}^n$ do not affect whether or not $f$ has the property. Equivalently, the property depends only on the average values of $f$ within each part.

More formally, let $\cP$ be a property, which we represent as a set of Boolean functions. Let $\cP_\eps$ denote the set of functions that are \emph{$\eps$-close} to some function in $\cP$, where distance is measured by the fraction of inputs on which two functions disagree. We say that a tester has \emph{proximity parameter $\eps$} if it accepts all $f \in \cP$ with probability at least $2/3$ and rejects all $f \notin \cP_\eps$ with probability at least $2/3$ (see \cref{sec:preliminaries} for more detail on the setup).  In this language, Blais and Yoshida showed:

\begin{theorem}[\cite{blais2019testable}]
\label[theorem]{thm:blais}
    If a property $\cP$ of Boolean functions is testable with proximity $\eps$ using $m$ samples, then $\cP \subseteq \cQ \subseteq \cP_{\eps}$ for some $2^{2^{O(m)}}$-part symmetric property $\cQ$.\footnote{Note that $m$ may scale with $n$ and $\eps$ arbitrarily, but the conclusion will be vacuous if $2^{2^{O(m)}}$ exceeds $2^n$.} Conversely, any $k$-part symmetric property $\cQ$ is testable with proximity $\eps$ using $(k/\eps)^{O(1)}$ samples.
\end{theorem}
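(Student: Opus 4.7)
I would address the two directions separately: the converse is a routine sample-complexity argument, while the forward direction requires extracting a low-complexity partition from the tester.

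\emph{Converse direction.} Given a $k$-part symmetric $\cQ$ with partition $\{0, 1\}^n = S_1 \cup \cdots \cup S_k$, the tester draws $m = O((k/\eps)^2 \log(k/\eps))$ iid uniform labeled samples. For every part $S_i$ of relative weight $|S_i|/2^n \geq \eps/(3k)$, it estimates both the empirical weight $\hat p_i$ and the empirical conditional label average $\hat\mu_i$; by Chernoff and union bounds, all are within $\eps/(3k)$ of their true values with probability at least $2/3$. The tester accepts iff there exists some $g \in \cQ$ whose part-averages $\nu_1, \ldots, \nu_k$ satisfy $\sum_i \hat p_i |\hat\mu_i - \nu_i| \leq 2\eps/3$. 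Parts of relative weight $< \eps/(3k)$ are ignored, contributing $< \eps/3$ total to any distance. Completeness is immediate, and soundness follows because the weighted $\ell_1$ distance between part-average vectors lower bounds the Hamming distance between the underlying functions.

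\emph{Forward direction.} After reducing to a canonical symmetric $m$-sample tester (a standard preprocessing), write the acceptance probability as a multilinear polynomial of degree $m$ in $f$: $a_T(f) = \sum_{S \subseteq [m]} \E_{\vec x}[a_S(\vec x) \prod_{i \in S} f(x_i)]$. For each $x \in \{0, 1\}^n$, define a marginal role vector
\[
\tau(x) \;=\; \Bigl(\E_{x_2, \ldots, x_m \sim U}\, T(x, x_2, \ldots, x_m, \vec y)\Bigr)_{\vec y \in \{0, 1\}^m} \in [0, 1]^{2^m},
\]
discretize each coordinate to $O(m/\eps)$ levels, and partition $\{0, 1\}^n$ by equality of the discretized role vectors. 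This yields at most $(O(m/\eps))^{2^m} = 2^{2^{O(m)}}$ parts. Define $\cQ = \{f : \Pr[T \text{ accepts } f] \geq 1/2\}$; the completeness and soundness of $T$ give $\cP \subseteq \cQ \subseteq \cP_\eps$, and it suffices to verify that $\cQ$ is symmetric with respect to this partition.

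\emph{Main obstacle.} The delicate step is showing that $a_T(f) \approx a_T(g)$ whenever $f$ and $g$ share the same average on every part. My plan is a hybrid argument that swaps $f$-values one part at a time while preserving the part-average, bounding each swap's contribution by the discretization precision. The principal subtlety is that $\tau(x)$ treats the other $m-1$ samples' labels as iid, whereas in $a_T$ those labels are drawn from $f$, introducing correlations that $\tau$ does not capture. I expect to resolve this either by (i) iteratively refining the partition to stabilize every conditional marginal $\E_{\vec x_{-i}}[a_S(\vec x) \prod_{j \neq i} h_j(x_j)]$ over all $S$ and all $h_j \in \{f, g\}$, which preserves the $2^{2^{O(m)}}$ bound with careful bookkeeping, or (ii) applying a functional regularity lemma directly to the coefficient functions $a_S$ to obtain a single product partition on which each $a_S$ is approximately constant on product cells.
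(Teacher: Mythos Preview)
This theorem is cited from \cite{blais2019testable} rather than proved in the paper, but the paper's own techniques (the proofs of \cref{thm:main-easy} and of \cref{thm:cc}, stripped of circuit-size bookkeeping) yield both directions. Your converse argument is essentially the one the paper gives for \cref{thm:main-easy}.

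For the forward direction, your choice $\cQ=\{f:\Pr[T\text{ accepts }f]\ge 1/2\}$ matches the paper's, and you have correctly isolated the obstacle: the first-order role vector $\tau(x)$ averages over \emph{uniform} co-samples and so ignores how $T$ depends jointly on several $f$-labels; there is no reason the partition it induces makes $\cQ$ symmetric. As written this is a genuine gap---neither of your proposed repairs is actually carried out, and option~(i) in particular would need a full hypergraph-regularity-style argument (this is essentially the original route of \cite{blais2019testable}, which the paper explicitly sets out to avoid).

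Your option~(ii) is exactly the paper's route, with one specific instantiation that dissolves the correlation problem. Rather than discretizing marginals of $T$, apply the complexity-theoretic regularity lemma (\cref{thm:ttv}) directly to $\bar T$ on the domain $(\cX\times\{0,1\})^m$, with the distinguisher family taken to be the \emph{consistency indicators} $(x,y)\mapsto\bm{1}[\forall i,\,y_i=f'(x_i)]$ ranging over all $f':\cX\to\{0,1\}$ (the class $\cC$ in the proof of \cref{thm:cc}). With $\gamma=\Theta(2^{-m})$ this yields a simulator $\widetilde T$ that is a clipped signed sum of $k<2/\gamma^2=2^{O(m)}$ such indicators, for specific functions $f_1,\dots,f_k$. \cref{thm:tester-sim} then gives $\bigl|\E[\bar T(x,f(x))]-\E[\widetilde T(x,f(x))]\bigr|\le 2^m\gamma<1/6$ for \emph{every} $f$, and $\E[\widetilde T(x,f(x))]$ manifestly depends only on the densities of $f$ on the $2^k=2^{2^{O(m)}}$ atoms generated by the level sets $\{f_j=1\}$. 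The key point is that each consistency indicator already encodes the full $m$-way interaction among labels, so the hybrid over parts you were planning---and the correlation issue it runs into---never arises.
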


Theorem~\ref{thm:blais} gives a transformation from $\cP$ to $\cQ$ that loses computational efficiency (ours will not). Indeed, $\cQ$ may be computationally intractable to test even when $\cP$ has a computationally efficient tester. 
To illustrate this, consider any $k$-part symmetric property $\cQ$ with label-invariant sets $S_1, \ldots, S_k$. Clearly, $\cQ$ can be tested from a handful of $(x_i, y_i)$ samples---just classify each $x_i$ according to the part $S_j$ that contains it, and use the corresponding labels $y_i$ to empirically estimate the sum of $f$ over $S_j$. Given these $k$ sums, testing for $\cQ$ can always be done efficiently, simply because any function on few inputs can be computed with a small circuit by brute force. However, the overall procedure may still be computationally inefficient---indeed, because the sets $S_j$ are unstructured, performing even a single classification could require up to $\exp(n)$ time!

\subsection{Results Overview}

The previous example raises a natural question: does there exist an analogue of \cref{thm:blais} that characterizes the properties of Boolean functions that are \emph{efficiently} testable from a constant number of samples? Our first result shows that this is indeed the case. We prove that if the tester for $\cP$ is a small circuit, then $\cP$ is close to a property $\cQ$ with \emph{structured} symmetry. By this, we mean that $\cQ$ is not only $k$-part symmetric with respect to some partition $S_1, \ldots, S_k$, but also has a \emph{classifier} circuit of size at most $s$ that computes the index $j \in [k]$ of the part $S_j$ containing a given input $x \in \{0, 1\}^n$. In this case, we say that $\cQ$ has \emph{computational partition complexity $s$}, or simply \emph{partition complexity $s$}.

We first strengthen the forward direction of \cref{thm:blais}. While \cref{thm:blais} does not bound the partition complexity of $\cQ$, which may be as large as $\exp(n)$, the following theorem bounds the partition complexity by $2^{O(m)}s$, which is a potentially much smaller quantity.

\begin{restatable}{theorem}{mainhard}
\label[theorem]{thm:main-hard}
    If $\cP$ is testable with proximity $\eps$ using $m$ samples and a circuit of size $s$, then $\cP \subseteq \cQ \subseteq \cP_\eps$ for some $2^{2^{O(m)}}$-part symmetric property $\cQ$ with partition complexity $2^{O(m)}s$.
\end{restatable}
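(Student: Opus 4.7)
The plan is to follow the Blais--Yoshida strategy underlying \cref{thm:blais}---defining a partition directly from the tester $T$ and taking $\cQ$ to be the set of functions whose acceptance probability $p_f = \Pr[T(\text{samples from }f) = 1]$ lies above a suitably chosen threshold in $(1/3, 2/3)$---but to engineer the partition so that it comes with a small classifier circuit built from $T$. Throughout, I may assume without loss of generality that $T$ is symmetric in its $m$ labeled samples, at only a constant-factor cost in circuit size absorbed into the $2^{O(m)}$ overhead.

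The first step is to fix the form of the partition. For each input $x \in \{0,1\}^n$, each label $y \in \{0,1\}$, and each ``context'' $u \in (\{0,1\}^n \times \{0,1\})^{m-1}$, I write $T_u(x, y) = T(x, y, u)$ for the result of feeding $(x, y)$ together with $u$ into $T$. Given a set $\cU$ of contexts, I define the signature $\sigma_\cU(x) = (T_u(x, y))_{u \in \cU,\, y \in \{0,1\}} \in \{0,1\}^{2\abs{\cU}}$ and let the parts of the partition be the preimages of $\sigma_\cU$. Its classifier circuit simply evaluates $T$ on $2\abs{\cU}$ restrictions, giving partition complexity $O(\abs{\cU}\cdot s)$, and the number of parts is at most $2^{2\abs{\cU}}$. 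Thus it suffices to exhibit $\cU$ of size $2^{O(m)}$ for which the induced partition makes the resulting $\cQ$ symmetric.

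To construct such a $\cU$, I would apply an iterative \emph{supersimulation}-style procedure. Starting from $\cU = \emptyset$, at each stage I check whether the current partition already certifies symmetry---i.e., whether $p_f$ depends only on the vector of part-averages. If not, there must exist two functions $f, g$ with identical part-averages for which $\abs{p_f - p_g}$ is non-negligible, and expanding out $p_f - p_g$ as an expectation over $m$ samples must single out a context $u^{\star}$ on which $T$ reacts differently to some pair of inputs now in the same part; I add $u^{\star}$ to $\cU$, strictly refining the partition. A potential function tracking the residual variance of $p_f$ within an equivalence class (in the spirit of multicalibration/outcome-indistinguishability arguments underlying supersimulation) then bounds the number of iterations: the potential starts at $O(1)$, and each round decreases it by at least $2^{-O(m)}$ because the tester depends on only $m$ samples and the relevant label-pattern probabilities are $2^{-O(m)}$-granular. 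This gives $\abs{\cU} = 2^{O(m)}$. I expect this to be the main obstacle: securing a per-round decrement of $2^{-O(m)}$ rather than $2^{-\mathrm{poly}(n)}$ is precisely what keeps the number of parts at $2^{2^{O(m)}}$ and the classifier size at $2^{O(m)} s$ instead of a tower.

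With this partition in hand, the remainder is bookkeeping. The partition has $2^{2^{O(m)}}$ parts with a classifier of size $2^{O(m)} s$, and by construction $p_f$ is a function of the part-averages up to additive slack $o(1)$, so I can pick a threshold in $(1/3, 2/3)$ avoided by all part-average profiles and define $\cQ$ accordingly. The inclusions $\cP \subseteq \cQ \subseteq \cP_\eps$ then follow immediately from the completeness and soundness guarantees of $T$, exactly as in the proof of \cref{thm:blais}.
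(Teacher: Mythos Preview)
Your proposal has a genuine gap at precisely the step you flag as the main obstacle. You assert that a potential ``tracking the residual variance of $p_f$ within an equivalence class'' drops by $2^{-O(m)}$ each time you add a context $u^\star$, but neither the potential nor the decrement is actually pinned down. The witnesses $f,g$ are arbitrary Boolean functions and your update refines the partition in a way tailored to that single pair; there is no $\ell_2$-type quantity on which ``add one restriction of $T$'' acts as a gradient step with uniform gain. Granularity of the $2^m$ label patterns constrains $T$ on a fixed $m$-tuple but says nothing about how $p_f$ varies across functions sharing a part-average profile, and the energy-increment arguments that \emph{do} give clean per-round bounds (weak regularity, multicalibration) control quantities too weak to force an $m$-linear functional like $p_f$ to depend only on part averages. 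A second, smaller gap: thresholding the original $p_f$ yields a $\cQ$ that is only \emph{approximately} partition-invariant, so the ``pick a threshold avoided by all profiles'' maneuver does not produce an exactly $k$-part symmetric property as the theorem requires.

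The paper's proof is structured quite differently and never refines a partition. It runs two simulations with the roles of target and distinguisher swapped. First, it builds a \emph{supersimulator} $\widetilde T$ for $\bar T$ with respect to the growth function $\cG(\widetilde T)=\Gamma_m\bigl(\cS_{O(m^2),\Theta(1/m)}(\cR(T)\cup\cR(\widetilde T))\bigr)$: here the potential is simply $\lVert\bar T-\widetilde T\rVert_2^2$, each round adds one consistency indicator and drops the potential by $\Omega(\gamma^2)$ with $\gamma=\Theta(2^{-m})$, giving $2^{O(m)}$ terms. Second, \emph{oracle simulation}---the step your plan omits---shows that for every $f$ there is some $\tilde f$ in that same structured-sum class with $\E[T(x,f(x),r)]\approx\E[\bar T(x,\tilde y)]$ and $\E[\widetilde T(x,f(x))]\approx\E[\widetilde T(x,\tilde y)]$; since $\widetilde T$ was built to fool exactly such $\tilde f$, chaining the three approximations gives $\E[T]\approx\E[\widetilde T]$ for \emph{all} $f$. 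The property $\cQ=\{f:\E[\widetilde T(x,f(x))]\ge 1/2\}$ is then exactly symmetric because $\widetilde T$ is by construction a function of the bits $\bm 1[f_j(x_i)\ge t_{ij}]$, and a circuit-counting lemma unwinds the recursive definition of the $f_j$ down to $2^{O(m)}$ one-way restrictions of $T$. The self-referential growth function---not a partition-refinement loop---is what closes the circularity you were trying to handle by hand.
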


Next, we strengthen the reverse direction of \cref{thm:blais}.

\begin{restatable}{theorem}{maineasy}
\label[theorem]{thm:main-easy}
    Any $k$-part symmetric property $\cQ$ with partition complexity at most $s$ is testable with proximity $\eps$ using $(k/\eps)^{O(1)}$ samples and a circuit of size $(k/\eps)^{O(1)}s + (k/\eps)^{O(k)}$.
\end{restatable}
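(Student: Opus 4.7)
Fix a classifier circuit $C$ of size at most $s$ for the partition $S_1,\dots,S_k$, write $p_j := \abs{S_j}/2^n$, and for any Boolean function $h$ define the vector $r(h) := (r_1(h),\dots,r_k(h)) \in [0,1]^k$ by $r_j(h) := \Pr_x[x \in S_j \wedge h(x)=1]$. Since $\cQ$ is $k$-part symmetric, membership in $\cQ$ is determined entirely by $r(h)$, so we may set $T := \{r(g) : g \in \cQ\}$.

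The plan is to reduce the testing problem to an $\ell_1$-closeness test for $r(f)$ against $T$. The key reduction to prove is
\[
    f \in \cQ_\eps \iff \min_{r^* \in T} \norm{r(f) - r^*}_1 \le \eps.
\]
The forward direction is immediate: whenever $\Pr_x[f(x) \ne g(x)] \le \eps$, marginalizing over parts yields $\norm{r(f) - r(g)}_1 \le \eps$. For the converse, given $r^* \in T$ with $\norm{r(f)-r^*}_1 \le \eps$, I construct $g'$ by flipping an optimal number of labels of $f$ within each $S_j$ (feasible since $0 \le r^*_j \le p_j$), so that $r(g') = r^*$ and $\Pr_x[f \ne g'] = \norm{r(f)-r^*}_1 \le \eps$; by $k$-part symmetry $g' \in \cQ$, hence $f \in \cQ_\eps$.

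The tester draws $m = O((k^2/\eps^2)\log k) = (k/\eps)^{O(1)}$ labeled samples $(x_i,y_i)$, classifies each $x_i$ via $C$, forms the empirical estimates $\hat{r}_j := m^{-1}\abs{\{i : C(x_i)=j,\ y_i = 1\}}$, and accepts iff $\min_{r^* \in T} \norm{\hat{r}-r^*}_1 \le \eps/2$. Since each $\hat{r}_j$ is an empirical mean of Bernoulli indicators with expectation $r_j(f)$, a Hoeffding bound at accuracy $\eps/(4k)$ together with a union bound over $j \in [k]$ gives $\norm{\hat{r} - r(f)}_1 \le \eps/4$ with probability at least $2/3$; the triangle inequality then yields both completeness and soundness.

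For circuit size, the classification step uses $m$ copies of $C$ for $ms = (k/\eps)^{O(1)}\, s$ gates, and tallying contributes $\mathrm{poly}(k/\eps)$ gates. The acceptance predicate is a fixed Boolean function of the discretized vector $\hat{r} \in \{0,1/m,\dots,1\}^k$, which takes at most $(m+1)^k = (k/\eps)^{O(k)}$ distinct values and can therefore be implemented by a brute-force truth-table circuit of that size. Summing the three contributions gives the claimed bound. The main obstacle is this last step: because $\cQ$ may depend arbitrarily on the averages, the acceptance predicate on $\hat{r}$ is essentially unrestricted, and there appears to be no way in general to avoid the $(k/\eps)^{O(k)}$ truth-table cost of hard-coding it, which is precisely the source of the additive term in the stated bound.
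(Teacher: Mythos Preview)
Your proof is correct and essentially identical to the paper's: both estimate the density vector empirically via Hoeffding plus a union bound, accept based on $\ell_1$-closeness of the empirical profile to the set of profiles realized by $\cQ$, and hard-code the final acceptance predicate as a brute-force lookup over the $(k/\eps)^{O(k)}$ possible discretized profiles. If anything, your explicit equivalence $f \in \cQ_\eps \iff \min_{r^*\in T}\norm{r(f)-r^*}_1 \le \eps$ (and the label-flipping construction witnessing the backward direction) is spelled out more carefully than in the paper, which asserts the analogous step via the equality $\norm{\tilde f - f}_1 = \norm{\mu(\tilde f)-\mu(f)}_1$ without justification.
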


When combined, \cref{thm:main-hard,thm:main-easy} give a characterization of properties \emph{efficiently} testable from a constant number of samples. To illustrate this, suppose that $m$, $k$, and $\eps$ are constants, so that the circuit size $s$ is the only parameter varying with $n$. In this case, \cref{thm:blais} states that testability is equivalent to symmetry, while our \cref{thm:main-hard,thm:main-easy} state that testability with a circuit of size $O(s)$ is equivalent to symmetry with partition complexity $O(s)$. We prove these theorems in \cref{sec:symmetry-thm}. Of the two, \Cref{thm:main-hard} is the harder to prove. Its proof uses a tool that we call \emph{supersimulators}, which we discuss further in the technical overview in \cref{sec:tech}. At a high level, the tool is a strengthening of the \emph{complexity-theoretic regularity lemma} of \cite{trevisan2009regularity} that arose implicitly in the algorithmic fairness literature. Thus, \cref{thm:main-hard} and its proof can be viewed as situated at the intersection of property testing, pseudorandomness (specifically, the structure-vs-randomness paradigm exemplified by the regularity lemma), and algorithmic fairness.

\Cref{thm:blais,thm:main-hard,thm:main-easy} characterize not only \emph{constant}-sample testability but also, for example, testability with $m(n)$ samples where $m(n) = O(\log^{(k)}(n))$ for all $k \in \N$ and $2^n$ is the domain size. Here, $\log^{(k)}$ denotes the $k$-fold iterated logarithm, so this class includes, for example, the function $m(n) = \log^*(n)$. In fact, even for properties testable from as many as $O(\log n)$ samples, the more interesting direction of our equivalence---efficient testing implies structured symmetry---continues to hold. Our next result pushes further in this direction, shedding light on the structure of properties that may require even more than $O(\log n)$ samples to test. Specifically, we present a modification to \cref{thm:blais} that circumvents the $2^{2^{O(m)}}$ dependence on the tester's sample size $m$, which is clearly a prerequisite for making any meaningful statement about sample sizes larger than $O(\log n)$. Specifically, we will show in \cref{sec:cc} that any $m$-sample testable property essentially boils down to what we call \emph{consistency counting} on a collection of $2^{O(m)}$ functions.

\begin{restatable}[Consistency Counting]{definition}{defcc}
\label[definition]{def:cc}
    We say that a collection of $m$ samples $(x_i, y_i)$ is \emph{consistent} with a function $f$ if the label $y_i = f(x_i)$ for all indices $i \in [m]$. Given a family of \emph{good} functions $\cF_+$ and a family of \emph{bad} functions $\cF_-$, the \emph{$(m, \cF_+, \cF_-)$-consistency counter} is the tester that receives $m$ inputs and outputs \textsc{Accept} iff they are consistent with strictly more good functions than bad.
\end{restatable}

Consistency counters are especially simple testers, which just check whether each of the prespecified functions in $\cF_+ \cup\cF_-$ could have plausibly labeled the $m$ observed samples. The following theorem states that every property that is testable from few samples has a consistency counter using a similar number of samples. From another point of view, it gives a \emph{two-sided error} analogue of the following folklore fact regarding one-sided sample-based property testing: If a property is testable with one-sided error, then it is testable with the ``canonical'' one-sided tester which exhaustively checks whether all observed samples are consistent with at least one function with the property.

\begin{restatable}{theorem}{thmcc}
\label[theorem]{thm:cc}
    If a property $\cP$ is testable with proximity $\eps$ using $m$ samples, then it is also testable with proximity $\eps$ using some $(O(m), \cF_+, \cF_-)$-consistency counter with $\abs{\cF_+ \cup \cF_-} \le 2^{O(m)}$.
\end{restatable}

There is a straightforward converse to \cref{thm:cc}: By definition, the $(m, \cF_+, \cF_-)$-consistency counter uses just $m$ samples, so if a property is testable with such a consistency counter, then it is testable with $m$ samples.

So far, we have framed our discussion around sample-testing Boolean functions, but in some sense, this framing is more restrictive than necessary. We remark that our main result, \cref{thm:main-hard}, can be straightforwardly extended to properties of distributions on an arbitrary domain $\cX$, provided that the distribution to be tested is known in advance to have high \emph{density} or, equivalently, high \emph{min-entropy}. We state the extension in terms of \emph{$\mu$-densely testable} properties, whose formal definition we defer to \cref{sec:dt}.

\begin{restatable}{theorem}{distrhard}
\label[theorem]{thm:distr-hard}
    Let $0 < \mu \le 1/2$. If a distribution property $\cP$ is $\mu$-densely testable with proximity $\eps$ using $m$ samples and a circuit of size $s$, then $\cP \subseteq \cQ \subseteq \cP_\eps$ for some $2^{(1/\mu)^{O(m)}}$-part symmetric distribution property $\cQ$ with partition complexity $(1/\mu)^{O(m)}s$.
\end{restatable}

\subsubsection{Graph Property Testing}
\label[section]{sec:intro-privacy}

Our next several results are motivated by the classic characterization of testable graph properties of Alon et al. \cite{alon2006regularity}. In that work, it was shown that a graph property is testable from few edge queries if and only if the property is roughly equivalent to having a \emph{regular partition} compatible with one of finitely many prespecified \emph{templates}. A ``regular partition'' refers to the kind provided by the \emph{Szemer\'{e}di regularity lemma}: a partition of the graph's vertices into a small number of parts between which the graph's edges are distributed pseudorandomly \cite{szemeredi1975regular}. Roughly speaking, a ``template'' is represented by a list of the pairwise densities between pairs of such parts.

Our first result in this setting is an extremely efficient (sublinear time) \emph{node-level differentially private} algorithm to compute templates corresponding to such regular partitions. Although our algorithm is extremely natural, we will need to leverage nontrivial results from \cite{alon2006regularity} for its analysis. Our second result in this setting will be an analogue of the characterization for efficiently testable Boolean function properties, as opposed to graph properties.

To state our results on private graph testing, we must first review some key notation regarding regular partitions of graphs. For convenience, we focus on simple graphs $G = (V, E)$. Given vertex sets $A, B \subseteq V$, we write $e_G(A, B) = \abs{E \cap (A \times B)}$ for the number of edges from $A$ to $B$. We write $d_G(A, B) = e_G(A,B)/\abs{A}\abs{B}$ for the edge density from $A$ to $B$. We omit the subscript $G$ from $e_G$ and $d_G$ when clear from context. A partition $(V_1, \ldots, V_k)$ of the vertices $V$ is an \emph{equipartition} if any two parts $V_i, V_j$ differ in size by at most $1$. We denote the total number of vertices by $n = \abs{V}$, and we write $G[Q]$ for the subgraph of $G$ induced by a vertex subset $Q \subseteq V$.

\begin{definition}[Graph Regularity \cite{szemeredi1975regular}]
\label[definition]{def:graph-regularity}
    A pair of vertex sets $(A, B)$ in a graph $G = (V, E)$ is \emph{$\gamma$-regular} if for all subsets $A' \subseteq A$ and $B' \subseteq B$ satisfying $\abs{A'} \ge \gamma \abs{A}$ and $\abs{B'} \ge \gamma \abs{B}$,
    \[
        \bigl\lvert d(A',B')-d(A,B) \bigr\rvert \le \gamma.
    \]
    An equipartition $(V_1, \ldots, V_k)$ is \emph{$\gamma$-regular} if at most $\gamma k^2$ of the pairs $(V_i, V_j)$ fail to be $\gamma$-regular.
\end{definition}

\begin{theorem}[\cite{szemeredi1975regular}]
\label[theorem]{thm:sz}
    Every graph has a $\gamma$-regular equipartition of size $k = O_\gamma(1)$.\footnote{The suppressed constant is a power tower of $2$s of height $\mathrm{poly}(1/\gamma)$, which is large but does not scale with $n$.}
\end{theorem}

The main result of \cite{alon2006regularity} is a characterization in terms of regular partitions of testable properties in the \emph{dense graph model}. In this model, properties are required to be invariant under vertex relabeling, and the tester accesses the graph via edge membership queries (i.e. ``is an edge present between these vertices $u$ and $v$?''). The distance between two graphs on the same vertex set is the fraction of edges on which they differ, or, more precisely, their edit distance under edge additions/deletions, divided by $n^2$. The characterization of testability is stated in terms of \emph{regularity templates} and \emph{regular-reducibility}:

\begin{definition}[Regularity Template \cite{alon2006regularity}]
\label[definition]{def:regularity-template}
    A \emph{regularity template} $R$ is defined by parameters $\gamma \in (0, 1]$ and $k \in \N$, a function $\eta : [k]^2 \to [0, 1]$, and a set $\bar{R} \subseteq [k]^2$ of size at most $\gamma k^2$. A graph $G$ \emph{satisfies} $R$ if it has an equipartition $(V_1, \ldots, V_k)$ such that $d(V_i, V_j) = \eta_{ij}$ for all pairs $(i, j) \in [k]^2$ and $(V_i, V_j)$ is $\gamma$-regular for all pairs $(i, j) \notin \bar{R}$.\footnote{As in \cite{alon2006regularity}, when we say $d(V_i, V_j) = \eta_{ij}$, we really mean that $e(V_i, V_j) = \lfloor \eta_{ij} \abs{V_i}\abs{V_j} \rfloor$.} We say that $G$ is \emph{$\eps$-close to satisfying} $R$ if $G$ is $\eps$-close to some graph $G'$ satisfying $R$. The \emph{complexity} of $R$ is $\max(k, 1/\gamma)$.
\end{definition}

\begin{definition}[Regular-Reducible \cite{alon2006regularity}]
\label[definition]{def:regular-reducible}
    A property $\cP$ of $n$-vertex graphs is \emph{regular-reducible} if for all $\eps > 0$, there exists a set $\cR$ of $O_\eps(1)$ templates, each of complexity $O_\eps(1)$, such that $\cP \subseteq \cQ_\eps$ and $\cQ \subseteq \cP_\eps$, where $\cQ$ denotes the property of satisfying some template in $\cR$.\footnote{The choice of which regularity templates belong to the set $\cR$ is allowed to depend on $n$, as long the number of templates and their complexity remain bounded by a constant $O_\eps(1)$ independent of $n$.}\footnote{The condition ``$\cP \subseteq \cQ_\eps$ and $\cQ \subseteq \cP_\eps$'' is slightly weaker than the condition $\cP \subseteq \cQ \subseteq \cP_\eps$ that appears elsewhere in this paper. The latter ensures that any tester for $\cQ$ is also a tester for $\cP$. The former only ensures that any \emph{tolerant} tester for $\cQ$ is also a tester for $\cP$. In the case of graph property testing, however, tolerant and non-tolerant testing are known to be roughly equivalent \cite{fischer2005estimation,gishboliner2023estimation}.}
\end{definition}

\begin{theorem}[\cite{alon2006regularity}]
\label[theorem]{thm:non-dp-graph-testing}
    A property $\cP$ of $n$-vertex graphs is regular-reducible if and only if for all $\eps > 0$, the property $\cP$ is testable with proximity $\eps$ using $O_\eps(1)$ edge queries.
\end{theorem}

Given the importance of regular partitions of graphs, a natural question is whether such summaries can be computed efficiently, ideally in time sublinear in the number of vertices $n$. When data privacy is not a concern, the answer turns out to be yes: one particularly elegant and simple algorithm is to take a uniformly random sample of ``anchor'' vertices, fully reveal their neighborhoods, and use the intersections of these neighborhoods to partition the graph \cite{tao2009random}. However, this approach may be unsatisfactory in real-world networks, where privacy is a concern and the presence or absence of edges between nodes may be sensitive information. Our first result in this section is a modified algorithm for efficiently computing regularity templates while satisfying \emph{differential privacy (DP)} \cite{dwork2006calibrating}. In fact, we go beyond edge privacy to the challenging setting of \emph{node-level} differential privacy, whose formal definition we defer to \cref{sec:preliminaries}.

\begin{restatable}{theorem}{dppartition}
\label[theorem]{thm:dp-partition}
    For all $\gamma, \eps > 0$, there exist parameters $k, q = O_{\gamma, \eps}(1)$ and $\lambda = \Omega_{\gamma,\eps}(1)$ such that for all input graphs $G$ on $n$ vertices, with probability at least $3/5$, \cref{alg:dp-partition} with these parameters outputs a regularity template $R$ with $\gamma$ error and $k$ parts such that $G$ is $\eps$-close to satisfying $R$. Moreover, \cref{alg:dp-partition} satisfies node-level $\alpha$-differential privacy for $\alpha = O_{\gamma,\eps}(1/n)$.
\end{restatable}

\begin{algorithm}
    \caption{Node-Level Differentially Private Regular Partition}
    \label{alg:dp-partition}
    \begin{algorithmic}[1]
        \State \textbf{input:} $G = (V, E)$ and $k, q \in \N$ and $\gamma, \lambda \in (0, 1)$
        \State $Q \leftarrow \text{uniformly random subset of $V$ of size $q$}$
        \State $\cR \leftarrow \text{set of all templates $(\gamma, k, \eta, \bar{R})$ such that $\eta_{ij}/\lambda \in \Z$ for all $(i, j) \in [k]^2$}$
        \ForAll{$R \in \cR$}
            \State $\ell(R) \leftarrow \text{fewest edge additions to (or deletions from) $G[Q]$ required to satisfy $R$}$
        \EndFor
        \State $R' \leftarrow \text{single sample from $\cR$ with probability proportional to $\exp(-\ell(R')/q)$}$
        \State \Return $R'$
    \end{algorithmic}
\end{algorithm}

Although we defer the full analysis of \cref{alg:dp-partition} to \cref{sec:dp-graphs}, the idea behind it is very simple. First, we take a random subsample of $q$ vertices from the graph, and we perform all $\binom{q}{2}$ edge queries between pairs of these vertices to obtain the induced subgraph $G[Q]$. Then, we exhaustively search over all possible regularity templates of a fixed granularity. We assign each candidate template $R$ a score depending on how close $G[Q]$ is to satisfying $R$, and then sample one such $R$ from a probability distribution weighted according to these scores. This is the classic \emph{exponential mechanism} \cite{mcsherry2007mechanism} from the DP literature.

Ultimately, since the number of templates is independent of $n$, our algorithm runs in constant $O_{\gamma, \eps}(1)$ time, not counting the time required to sample $q = O_{\gamma, \eps}(1)$ vertices from $V$. The fact that analyzing such a small number of vertices suffices to determine the partition structure of the overall graph is highly nontrivial, and it will require us to leverage results from \cite{alon2006regularity}. We emphasize that what makes \cref{alg:dp-partition} interesting is its efficiency. For example, if one were concerned \emph{only} with privacy and approximation quality, but not efficiency, one could simply post-process any state-of-the-art private cut sparsifier, which are known to be computable in $\mathrm{poly}(n)$ time with error subsumed by that of the regularity construction \cite{gupta2012iterative,arora2019sparsification,marek2020synthetic,liu2024optimal,zou2025almost,amand2025breaking}. In contrast, our algorithm takes sublinear time.

We also extend the characterization of testability of \cref{thm:non-dp-graph-testing} from the non-private to the private setting. The proof of the following theorem, which appears in \cref{sec:dp-graphs}, is based on a reduction from the private case to the non-private case, although it can also be proved directly using techniques similar to our analysis of \cref{alg:dp-partition}.

\begin{restatable}{theorem}{dptesting}
\label[theorem]{thm:dp-graph-testing}
    A property $\cP$ of $n$-vertex graphs is regular-reducible if and only if for all $\eps > 0$, the property $\cP$ is testable with proximity $\eps$ using $O_\eps(1)$ edge queries and node-level $O_\eps(1/n)$-DP.
\end{restatable}

To conclude this section, we briefly shift our attention from graph testing back to function testing, with the goal of proving a function testing analogue of the (original, non-private) characterization from \cite{alon2006regularity}. In this setting, we show that if a property of Boolean functions is testable using $m$ samples and a Boolean circuit of size $s$, then the property is essentially assessing compatibility with a collection $\cT$ of appropriately defined regularity templates, each of complexity $O(m^2s)$. In particular, this result is nontrivial for all arbitrary sample sizes $m$ and circuit sizes $s$, as long as they are subexponential in $n$. In the Boolean circuit context, we shall model a template by a \emph{simulator} for the function to be tested, as per the \emph{complexity-theoretic regularity lemma} of \cite{trevisan2009regularity}. For us, \emph{compatibility} will be the usual notion of computational indistinguishability.\footnote{We say two functions $g,h$ are $(s, \delta)$-indistinguishable if $\abs{\E[f(x)(g(x)-h(x))]} \le \delta$ for all functions $f$ of complexity at most $s$. See \cref{def:indistinguishability} for more details.}

\begin{restatable}{definition}{compatibility}
\label[definition]{def:compatibility}
    Given a family $\cT \subseteq \{\{0, 1\}^n \to [0, 1]\}$ (the \emph{templates}), the \emph{compatibility} property $\cT_{s,\delta}$ is the set of Boolean functions that are $(s, \delta)$-indistinguishable from some function in $\cT$. 
\end{restatable}

The following result states that small-circuit testability, even for a moderate or large number of samples, reduces to testing compatibility with a size-bounded collection of regularity templates.

\begin{restatable}{theorem}{templateshard}
\label[theorem]{thm:templates-hard}
    If $\cP$ is testable with proximity $\eps$ using $m$ samples and a circuit of size $s$, then there exists a set $\cT$ of templates, each of circuit complexity $O(m^2s)$, such that $\cP \subseteq \cT_{s,\frac{1}{13m}} \subseteq \cP_\eps$.
\end{restatable}

We also prove a partial converse to the preceding theorem. We say ``partial'' because it only preserves sample efficiency, not computational efficiency.

\begin{restatable}{theorem}{templateseasy}
\label[theorem]{thm:templates-easy}
    Let $s \le t$ with $t \log(t) \le O(\eps^2\delta^2 2^n)$. If $\cT$ is a set of templates, each of circuit complexity at most $t$, then  $\cT_{s,\delta}$ is testable with proximity $\eps$ from $O(t\log(t)/\eps^2\delta^2)$ samples.
\end{restatable}

\subsubsection{Simple Hypothesis Testing}

All of the function testing results discussed so far concern the broad task of distinguishing a property $\cP$ from the complement of $\cP_\eps$, both of which may be doubly exponentially large sets of Boolean functions. Building on the recent work of \cite{marcussen2024characterizing}, our next few results concern the narrower task of \emph{simple hypothesis testing}, which requires distinguishing between just \emph{two} candidates.

For context, consider the following slightly more general question: Given $k$ independent samples from a distribution $\cD_b \in \{\cD_0, \cD_1\}$ over $\{0, 1\}^n$, where $b \in \{0, 1\}$ is an unknown bit, what is the best distinguishing advantage between $\cD_0$ and $\cD_1$ that one could hope to efficiently achieve? With a computationally unbounded distinguisher, the answer is the \emph{total variation distance} between the $k$-fold products of $\cD_0$ and $\cD_1$, denoted $d_{\mathrm{TV}}(\cD_0^{\otimes k}, \cD_1^{\otimes k})$. It is easy to show that this quantity relates to $d_{\mathrm{TV}}(\cD_0, \cD_1)$ via the following standard inequalities:
\[
    d_{\mathrm{TV}}(\cD_0^{\otimes k}, \cD_1^{\otimes k}) \le 1 - (1 - d_{\mathrm{TV}}(\cD_0, \cD_1))^k \le k \cdot d_{\mathrm{TV}}(\cD_0, \cD_1).
\]
If we let $d_s(\cdot,\cdot)$ denote the best distinguishing advantage attainable by a circuit of size at most $s$, then it turns out that proving the analogous relationship between $d_s(\cD_0^{\otimes k}, \cD_1^{\otimes k})$ and $d_s(\cD_0, \cD_1)$ is significantly more involved \cite{halevi2008degradation, geier2022tight}.

To facilitate reasoning about the $k$-fold product, it would be ideal if each $\cD_b$ had a computationally indistinguishable proxy distribution $\widetilde{\cD}_b$ such that $d_s(\cD_0^{\otimes k}, \cD_1^{\otimes k}) \approx d_{\mathrm{TV}}(\widetilde{\cD}_0^{\otimes k}, \widetilde{\cD}_1^{\otimes k})$. A recent result \cite{marcussen2024characterizing} based on \emph{multicalibration} \cite{hkrr} achieves almost exactly this, but requires a second size bound $s' = O(sk/\eps^{6}) + (k/\eps)^{O(1)}$ to make the connection bidirectional. Our first result in this setting improves the leading exponent on $1/\eps$ from $6$ to $2$ using \emph{calibrated multiaccuracy}, a weaker but more efficiently achievable condition than multicalibration that has been a theme of several recent works, as we discuss further in \cref{sec:related}.

\begin{restatable}{theorem}{mpvcma}
\label[theorem]{thm:mpv-cma-succinct}
    Given $s, k \in \N$, $\eps > 0$ and $\cD_0, \cD_1$, there exist $\widetilde{\cD}_0, \widetilde{\cD}_1$ such that $d_s(\cD_b, \widetilde{\cD}_b) \le \eps$ and
    \[
        d_{s}(\cD_0^{\otimes k}, \cD_1^{\otimes k}) - k\eps \le d_{\mathrm{TV}}(\widetilde{\cD}_0^{\otimes k}, \widetilde{\cD}_1^{\otimes k}) \le d_{s'}(\cD_0^{\otimes k}, \cD_1^{\otimes k}) + k\eps,
    \]
    where $s' = O(sk/\eps^2) + (k/\eps)^{O(1)}$. One can enforce $\widetilde{\cD}_0 = \cD_0$ with $s' = O(sk/\eps^4) + (k/\eps)^{O(1)}$.
\end{restatable}

Much like the main result of \cite{marcussen2024characterizing}, \cref{thm:mpv-cma-succinct} is powerful because it allows one to translate statistical or information-theoretic arguments about $k$ independent draws from $\widetilde{\cD}_0$ or $\widetilde{\cD}_1$ to computational statements about $\cD_0$ and $\cD_1$. The advantage of our theorem compared to that of \cite{marcussen2024characterizing} lies in the smaller complexity gap ($1/\eps^{2}$ versus $1/\eps^{6}$, or $1/\eps^4$ versus $1/\eps^{12}$ for the single-proxy version), which measures the quantitative relationship between $s$ and $s'$.

Our second result in this setting eliminates the complexity gap entirely using supersimulators, the same technical tool underlying the proof of our first result, \cref{thm:main-hard}. For the following result, we now allow ourselves to alter the distinguisher size bound $s$ within a certain interval, with the specific choice depending on the pair $\cD_0, \cD_1$. The reason for this will become clear shortly when we discuss supersimulators in more detail.

\begin{restatable}{theorem}{mpvsuper}
\label[theorem]{thm:mpv-super-succinct}
    Given $s, k \in \N$, $\eps > 0$, $\cD_0, \cD_1$, there exist $s', \widetilde{\cD}_0, \widetilde{\cD}_1$ such that $d_{s'}(\cD_b, \widetilde{\cD}_b) \le \eps$ and
    \[d_{s'}(\cD_0^{\otimes k}, \cD_1^{\otimes k}) - k\eps \le d_{\mathrm{TV}}(\widetilde{\cD}_0^{\otimes k}, \widetilde{\cD}_1^{\otimes k}) \le d_{s'}(\cD_0^{\otimes k}, \cD_1^{\otimes k}) + k\eps,\]
    where $s' \in [s, k^{O(1/\eps^2)}s]$. One can enforce $\tcD_0 = \cD_0$ with $s' \in [s, k^{O(1/\eps^4)}s]$.
\end{restatable}

The advantage of \cref{thm:mpv-super-succinct} is that the same term $d_{s'}(\cD_0^{\otimes k}, \cD_1^{\otimes k})$ appears on both the left and right side of the chain of inequalities. This yields a tighter characterization in terms of $d_{\mathrm{TV}}(\widetilde{\cD}_0^{\otimes k}, \widetilde{\cD}_1^{\otimes k})$ than in \cref{thm:mpv-cma-succinct}, in which distinct terms $d_{s}(\cD_0^{\otimes k}, \cD_1^{\otimes k}) \le d_{s'}(\cD_0^{\otimes k}, \cD_1^{\otimes k})$, which potentially differ significantly, are used for the left and right side of the chain of inequalities. We prove both \cref{thm:mpv-cma-succinct,thm:mpv-super-succinct} in \cref{sec:mpv}.

\subsection{Technical Overview}
\label[section]{sec:tech}

The results of this paper largely rely on a technical tool of independent interest, which we call \emph{supersimulators}. Although the existence of supersimulators is implicit in prior work \cite{dwork2021outcome}, the present work is the first to explore their complexity-theoretic implications. In \cref{sec:tech-super}, we motivate supersimulators, discuss their connection to graph regularity, and explain how, surprisingly, they circumvent known lower bounds to improving the complexity-theoretic regularity lemma of \cite{trevisan2009regularity}. In \cref{sec:tech-sketch}, we explain how they are used in the proof of our first main result, \cref{thm:main-hard}.

\subsubsection{Supersimulators and the Regularity Lemma}
\label[section]{sec:tech-super}

As previously discussed, the Szemer\'{e}di regularity lemma \cite{szemeredi1975regular} is a cornerstone result in graph theory that splits any large, dense graph---no matter how complex---into a small number of parts between which the graph's edges are distributed \emph{pseudorandomly}.  Although Szemer\'{e}di's regularity lemma spawned many variants, these versions were still fairly specialized (e.g. to cut sizes in graphs, or Fourier uniformity in vector spaces over finite fields) until the arrival of the \emph{complexity-theoretic regularity lemma} \cite{trevisan2009regularity}, which generalized and abstracted the concept of regularity by considering \emph{indistinguishability} with respect to an arbitrary collection of \emph{distinguisher} functions defined on an arbitrary domain.

For example, consider a setting especially relevant to applications in complexity theory and cryptography: distinguishers computable by size-bounded Boolean circuits that receive a uniformly random input from $\{0, 1\}^n$. In this setting, the regularity lemma states that every randomized Boolean function of arbitrary complexity can be \emph{simulated} by a randomized circuit of size $O(s)$ that fools all distinguishers of size at most $s$, for any $s$.

\begin{lemma}[Special Case of Theorem 1.1 of \cite{trevisan2009regularity}]
\label[lemma]{thm:ttv-for-circuits}
    For all target functions $g : \{0, 1\}^n \to [0, 1]$, sizes $s \in \N$, and error tolerances $\eps > 0$, there exists a simulator $h : \{0, 1\}^n \to [0, 1]$ of circuit size\footnote{We say a circuit $c$ with $n$ input bits, $m$ output bits, and $s$ logic gates computes a real-valued function $h : \{0, 1\}^n \to [0, 1]$ in binary if $h(x) = \sum_{i=1}^m  c_i(x) / 2^{i-1}$ for all inputs $x \in \{0, 1\}^n$, where $c_i$ denotes the $i$\textsuperscript{th} output bit of $c$.} at most $O(s/\eps^2)$ such that for all distinguishers $A : \{0, 1\}^{n+1} \to \{0, 1\}$ of size at most $s$,
    \[
        \Bigl\lvert{\scalebox{1.2}{$\mathbb{E}$}_{\substack{x \sim \{0, 1\}^n \\ y|x \sim \cB(g(x))}} \big[A(x, y)\bigr] - \scalebox{1.2}{$\mathbb{E}$}_{\substack{x \sim \{0, 1\}^n \\ y|x \sim \cB(h(x))}} \bigl[A(x, y)\bigr]\Bigr\rvert}\le \eps.\footnote{$\cB(p) \in \Delta(\{0, 1\})$ denotes the Bernoulli distribution with parameter $p$.}
    \]
\end{lemma}

\Cref{thm:ttv-for-circuits} ensures the existence of a simulator that is at most
a constant factor larger than the distinguishers, but makes no guarantees about the existence of \textit{smaller} simulators.  At least three different arguments in the literature, two in the original paper \cite{trevisan2009regularity} and one in a subsequent work \cite{chen2018complexity}, assert that this complexity gap is inevitable, and the simulator must always be allowed to be larger than the distinguishers that it is asked to fool. For more on these barriers, see the discussion of related work in \cref{sec:related}.

Surprisingly, all three lower bound arguments can be circumvented by relaxing the requirement that the result hold for all size bounds $s$.  By instead allowing the choice of $s$ to depend on the target function, while remaining below an upper bound independent of the target function, it is possible to design simulators that fool families of distinguishers far more powerful than themselves. In this work, we will refer to such an object as a \emph{supersimulator}. As our main results demonstrate, this relaxed quantification in which $s$ depends on $g$ is still useful for applications.

The existence of supersimulators is implicit in the \emph{code-access outcome indistinguishability} construction of \cite{dwork2021outcome}, in which the distinguishers to be fooled have access to the code of the simulator (and are therefore perforce larger). Thus, the following statement can be viewed as a rephrasing of Theorem 5.10 of \cite{dwork2021outcome}, unencumbered by questions of the distinguishers' access to the code of the simulator. A related result appeared even earlier, albeit in a slightly different computational model, in the study of key-agreement in cryptography \cite{haitner2018dichotomy}.

\begin{lemma}[Supersimulators, Special Case \cite{dwork2021outcome}]
\label[lemma]{thm:super-succinct}
    For all $g : \{0, 1\}^n \to \{0, 1\}$, $k \in \N$, and $\eps > 0$, there exists a size bound $s \in [n, n^{k^{O(1/\eps^2)}}]$ and a simulator $h : \{0, 1\}^n \to [0, 1]$ of size at most $s$ such that for all distinguishers $A : \{0, 1\}^{n+1} \to \{0, 1\}$ of size at most $s^{k}$,
    \[
        \Bigl\lvert{\scalebox{1.2}{$\mathbb{E}$}_{\substack{x \sim \{0, 1\}^n \\ y|x \sim \cB(g(x))}} \big[A(x, y)\bigr] - \scalebox{1.2}{$\mathbb{E}$}_{\substack{x \sim \{0, 1\}^n \\ y|x \sim \cB(h(x))}} \bigl[A(x, y)\bigr]\Bigr\rvert}\le \eps.
    \]
\end{lemma}

To understand the statement of \cref{thm:super-succinct}, suppose that $k = 100$, $\eps = 1/10$, and the average-case complexity of $g$ is superpolynomial in $n$. In this case, one cannot hope to approximate $g$ accurately with a function $h$ of complexity $s = \mathrm{poly}(n)$, but \cref{thm:super-succinct} nevertheless guarantees that there exists a simulator of this size fooling all distinguishers of size up to $s^{100}$, well beyond the capabilities of the simulator provided by \cref{thm:ttv-for-circuits}. The particular choice of $s$ may depend on $g$, but it never leaves the interval $[n, n^{10^{300}}]$, say, which is fixed and independent of $g$. For clarity, we state this concrete version separately, as a corollary:
\begin{corollary}
    For all $g : \{0, 1\}^n \to \{0, 1\}$, $k \in \N$, and $\eps > 0$, there exists $s = \mathrm{poly}(n)$ and $h : \{0, 1\}^n \to [0, 1]$ of size at most $s$ such that for all $A : \{0, 1\}^{n + 1} \to \{0, 1\}$ of size at most $s^{100}$,
    \[
        \Bigl\lvert{\scalebox{1.2}{$\mathbb{E}$}_{\substack{x \sim \{0, 1\}^n \\ y|x \sim \cB(g(x))}} \big[A(x, y)\bigr] - \scalebox{1.2}{$\mathbb{E}$}_{\substack{x \sim \{0, 1\}^n \\ y|x \sim \cB(h(x))}} \bigl[A(x, y)\bigr]\Bigr\rvert}\le \frac{1}{10}.
    \]
\end{corollary}

While we have stated both \cref{thm:ttv-for-circuits,thm:super-succinct} in the setting of size-bounded Boolean circuits on the domain $\cX = \{0, 1\}^n$, both results generalize to arbitrary distinguishers families on an arbitrary domain. We call this the \emph{abstract setting} and review it further in \cref{sec:preliminaries}.

In \cref{sec:supersimulators}, we present two proofs of \cref{thm:super-succinct}. The first more closely resembles the proof in \cite{dwork2021outcome}, as well as the proof of the original complexity-theoretic regularity lemma in \cite{trevisan2009regularity}, but we use a slightly modified projection operation which will be important for our downstream applications.

Our second proof is based on an iteration technique from the graph regularity literature, in which strong notions of regularity are achieved by iterating cheaper constructions with a shrinking sequence of error parameters. In graph theory, this technique has been used to achieve state-of-the-art quantitative bounds for applications to graph removal \cite{fox2011new, conlon2012bounds} and to establish relationships between existing notions of regularity \cite{rodl2010regularity}. In our setting, we use it to increase the power of the simulators, allowing them to fool distinguishers whose complexity surpasses their own by any specified ``growth function.'' This proof technique also has the advantage of a slightly stronger theorem statement, which incorporates a shrinking error tolerance that decays as a function of the simulator size. We will not, however, require this additional power for our applications to property testing.

\subsubsection{Finding Structured Symmetry}
\label[section]{sec:tech-sketch}

In this section, we sketch the proof of our first main result, \cref{thm:main-hard}. This result states that any property $\cP$ with an efficient tester $T$ is roughly equivalent to a property $\cQ$ with structured symmetry. Here, ``roughly equivalent'' means that $\cP \subseteq \cQ \subseteq \cP_\eps$.

Our approach will be to construct a simulator $\widetilde{T}$ for $T$ that has a particularly simple form, which we will then use to  define the property $\cQ$. Specifically, we shall say that a function $f$ has the property $\cQ$ if the tester $\widetilde{T}$ is more likely to output \textsc{Accept} than \textsc{Reject} when it receives $m$ independent inputs $(x_i, y_i)$ labeled by $f$. Here, $x_i \sim \{0, 1\}^n$ is a uniform $n$-bit string and $y_i = f(x_i)$. We will show that $\cP \subseteq \cQ \subseteq \cP_\eps$ by using the fact that $\widetilde{T}$ is a simulator for a tester for $\cP$. Similarly, we will argue that $\cQ$ has structured symmetry by reasoning about the simple form of $\widetilde{T}$.

Next, we describe how the simulator $\widetilde{T}$ is actually constructed. Our goal is for $\widetilde{T}$ to be a valid simulator for $T$, which means that $\E[\widetilde{T}(x, y)] \approx \E[T(x, y)]$ for any possible labeling function $f \in \cP$ (or $f \notin \cP_\eps$) that may have been used to label the input samples $(x_i, y_i)$.\footnote{Here, $T(x, y)$ is shorthand for $T((x_1, y_1), \ldots, (x_m, y_m))$, and $\widetilde{T}(x, y)$ is defined similarly.} To achieve this condition, it will be convenient to first replace the real labels $y_i$ with much simpler \emph{modeled} labels $\tilde{y_i}$, and instead argue that $\E[\widetilde{T}(x, \tilde{y})] \approx \E[T(x, \tilde{y})]$. While the real labels were defined by the equation $y_i = f(x_i)$, these modeled labels will be defined by the conditional probability distribution $\tilde{y}_i|x_i \sim \cB(\tilde{f}(x_i))$, where $\tilde{f}$ is itself a low-complexity simulator of $f$. For this step (simulating the sample-generating oracle), we will use the complexity-theoretic regularity lemma. For the other step (simulating the tester), we will need to use the stronger supersimulators lemma, for reasons that will soon become apparent. Intuitively, the reason is that the function $\tilde{f}$ and hence the modeled labels $\tilde{y}$ will be of slightly higher complexity than that of both $T$ and $\widetilde{T}$, so in order for $\widetilde{T}$ to fool them, it must be capable of fooling things slightly more powerful than itself.  Interestingly, between the two steps, the roles of the simulators and distinguishers switch.

\paragraph{Step 0: Notation} Before proceeding in more detail, it will be helpful to review the relevant lemmas using slightly simpler notation than before. Indeed, the complexity-theoretic regularity lemma can be viewed as stating that every function $g : \cX \to \{0, 1\}$ has a \emph{low-complexity simulator} $h : \cX \to [0, 1]$. By ``simulator,'' we mean that the error of $h$, namely $g-h$, is not too correlated with any function in a prespecified family $\cF$ of \emph{distinguisher} functions:
\[
    \max_{f \in \cF}\,\Bigl\lvert\scalebox{1.2}{$\mathbb{E}$}_{x \sim \cD}\bigl[f(x)\bigl(g(x)-h(x)\bigr)\bigr]\Bigr\rvert \le \delta.
\]
By ``low-complexity,'' we mean that $h$ is a weighted sum of a handful of functions from $\cF$:
\[
    h(x) = h_k(x) = \Bigl[\delta\cdot \bigl(f_1(x) + \cdots + f_k(x)\bigr)\Bigr]_0^1
\]
for some $f_1, \ldots, f_k \in \pm \cF$, where $[\cdot]_0^1$ denotes projection onto the interval $[0, 1]$. The supersimulators lemma replaces the fixed family $\cF$ with a variable family $\cF(h)$ which grows steadily more complex as $h$ grows in circuit size, meaning that for each index $j \in [k]$, we have $f_j \in \pm \cF(h_{j-1})$, where $h_{j-1}$ is defined analogously to $h_k$ above.

In what follows, we model the tester as a function $T : (\cX \times \{0, 1\})^m \to \{0, 1\}$ which takes as input $m$ labeled samples $(x_i, y_i) \in \cX \times \{0, 1\}$ and outputs either $0$ or $1$, corresponding to \textsc{Reject} or \textsc{Accept}. For simplicity, suppose for now that $T$ is deterministic. (We will handle randomized testers in the full version of the proof.) Also note that $\cX = \{0, 1\}^n$.

\paragraph{Step 1: Simulating the Oracle} First, we argue that if the property $\cP$ of interest has a tester $T$ computable by a small circuit of size $s$, then we need not consider arbitrarily complex labeling functions $f : \{0, 1\}^n \to \{0, 1\}$. Using the complexity-theoretic regularity lemma and a simple hybrid argument, we show that it suffices to study functions $\tilde{f} : \{0, 1\}^n \to [0, 1]$ that are computable by circuits of size $O(s)$, which is to say at most a constant factor larger than $T$. Now, the oracle simulator returns \emph{modeled} labels $\tilde{y}_i$ sampled from the Bernoulli distribution $\cB(\tilde{f}(x_i))$, rather than real labels $y_i = f(x_i)$, but they look no different to the tester. Slightly more formally, we suppose that $\tilde{f}$ is a $(\cF, \delta)$-regular simulator for $f$ with respect to the class $\cF$ of \emph{one-way restrictions} of $T$, which are functions obtained by hardwiring all but one input to $T$ (note that if $T$ has a circuit of size at most $s$, then so do functions in $\cF$). Under this assumption on $\tilde{f}$, we show that
\[
    \Bigl\lvert \E\bigl[T(x, y)\bigr] - \E\bigl[T(x, \tilde{y})\bigr] \Bigr\rvert \le 2m\delta.
\]
By the complexity-theoretic regularity lemma, there always exists such a simulator $\tilde{f}$ of circuit size $O(s)$, which, notably, is at most slightly larger than the circuit complexity of $T$.

\paragraph{Step 2: Simulating the Tester}

Next, consider two $m$-sample testers $T$ and $\widetilde{T}$. Suppose that they both receive labels generated from a circuit $\tilde{f}$ that is slightly larger than both testers, as suggested by the construction at the end of the previous step. Suppose also that $\tilde{f}$ happens to be $\{0, 1\}$-valued. (We will address the challenge of fractional values in the full version of the proof.) We show that $T$ and $\widetilde{T}$ have similar probabilities of outputting \textsc{Accept} if $T$ and $\widetilde{T}$ are indistinguishable by a certain function derived from $\tilde{f}$, called $\tilde{f}'$. This function receives $m$ labeled samples and checks whether they are all consistent with $\tilde{f}$:
\begin{equation}
\label{eq:tech-overview-f-prime}
    \tilde{f}'(x, y) = \bm{1}\bigl[\forall i \in [m],\, y_i = \tilde{f}(x_i) \bigr].
\end{equation}
Slightly more formally, we show that if $\widetilde{T}$ is a simulator for $T$ with error $\gamma$ with respect to the class of functions $\tilde{f}'$, where $\tilde{f}$ is any circuit of size at most $O(\max\{\mathrm{size}(T), \mathrm{size}(\widetilde{T})\})$, then
\[
    \Bigl\lvert \E\bigl[T(x, \tilde{y})\bigr] - \E\bigl[\widetilde{T}(x, \tilde{y})\bigr] \Bigr\rvert \le 2^m \gamma.
\]
(Recall from the previous step that $\tilde{y}_i = \tilde{f}(x_i)$ are modeled labels generated by $\tilde{f}$.)

Ideally, we would like to apply the complexity-theoretic regularity lemma once more to construct a small simulator $\widetilde{T}$ that fools $\tilde{f}'$ for any circuit $\tilde{f}$ that is slightly larger than $T$ and $\widetilde{T}$. This includes, in particular, the function $\tilde{f}$ constructed in the previous step, applied to both $T$ and $\widetilde{T}$. Indeed, if we could do this, then we would be done. We would simply define $\cQ$ to be the property of a function $f$ such that $f \in \cQ$ iff the tester $\widetilde{T}$ is more likely to accept than reject on labels generated by $f$. We would be able to argue that $\cP \subseteq \cQ \subseteq \cP_\eps$ using the relationship between the output of $T$ and $\widetilde{T}$ on labels $y$ generated by $f$:
\[
    \E\bigl[T(x, y)\bigr] \approx_{2m\delta} \E\bigl[T(x, \tilde{y})\bigr] \approx_{2^m \gamma} \E\bigl[\widetilde{T}(x, \tilde{y})\bigr] \approx_{2m\delta} \E\bigl[\widetilde{T}(x, y)\bigr].
\]
Moreover, setting $\delta = O(1/m)$ and $\gamma = O(1/2^m)$, the lemma would decompose $\widetilde{T}$ into a simple weighted sum of $\mathrm{poly}(1/\gamma) = 2^{O(m)}$ functions $\tilde{f}_1', \ldots, \tilde{f}_k'$. Taking all intersections of level sets of the low-complexity functions $\tilde{f}_1, \ldots, \tilde{f}_k$ would give us the efficiently computable, symmetric partition into $2^{2^{O(m)}}$ parts that we desire.

\paragraph{Step 3: Supersimulators}
There is a flaw in the preceding argument, as stated: We require the simulator $\widetilde{T}$ to fool distinguishers $\tilde{f}'$ that are slightly more complex than itself, which the complexity-theoretic regularity lemma cannot provide! In the full proof, we overcome this obstacle using supersimulators.

Unpacking definitions, we see that the supersimulator construction we employ is iterative, generating a sequence of approximations $\widetilde{T}_1, \widetilde{T}_2, \ldots$ to the original tester $T$. At the $j$\textsuperscript{th} step, we derive $\widetilde{T}_{j+1}$ from $\widetilde{T}_j$ by first choosing a function $\tilde{f}_j$ slightly larger than $\widetilde{T}_j$ and then adding an appropriate multiple of  $\tilde{f}_j'$ to $\widetilde{T}_j$. Specifically, the function $\tilde{f}_j : \cX \to [0, 1]$ will be a combination of several one-way restrictions of $\widetilde{T}_j$, which are obtained by hard-wiring all but one input to the current simulator $\widetilde{T}_j$, and the function $\tilde{f}_j' : (\cX \times \{0, 1\})^m \to \{0, 1\}$ is defined with respect to $\tilde{f}_j$ as per equation~\eqref{eq:tech-overview-f-prime}.

\subsection{Related Work}
\label[section]{sec:related}

Our paper extends a recent line of work investigating the interplay between classical results in pseudorandomness (specifically, the structure-vs-randomness paradigm) and recent results in the algorithmic fairness literature. In this work, we do so through the lens of property testing.

\paragraph{Structure vs Randomness}

There is a large body of work devoted to decomposing complex objects into their structured and pseudorandom components. As already discussed, an early result in this space is Szemer\'{e}di's regularity lemma \cite{szemeredi1975regular}. There is a vast literature on the regularity lemma and its variants, producing breakthrough results in pseudorandomness and additive combinatorics to this day \cite{kelley2023progressions,jaber2025corners}. For more, see the surveys \cite{tao2007structure,zhao2023gtac}.

The complexity-theoretic regularity lemma of \cite{trevisan2009regularity} gave a unified perspective on such disparate topics as the Frieze-Kannan weak regularity lemma for graphs \cite{frieze1996regularity, frieze1999approximation}, Impagliazzo's hardcore lemma \cite{impagliazzo1995hard}, and the dense model theorem \cite{green2008primes,tao2008primes,reingold2008dense}. It also led to a deeper understanding of computational entropy \cite{vadhan2012pseudoentropy,vadhan2013uniform,zheng2014thesis} and techniques for leakage simulation and key-agreement in cryptography \cite{jetchev2014fake, chen2018complexity, haitner2018dichotomy}. Similar ideas to those of \cite{trevisan2009regularity} also appeared in \cite{feldman2012fourier, de2012chow}.

\paragraph{The Regularity Barrier} In the Boolean circuit case, the regularity lemma (\cref{thm:ttv-for-circuits}) asserts that for any size bound $s \in \N$, there exists a simulator $h$ of size $s' = O(s)$ that fools distinguishers of size $s < s'$. The question of the existence of size-$s'$ simulators $h$ that fool distinguishers of size $s  \gg s'$ has already been considered by multiple works \cite{trevisan2009regularity, chen2018complexity}, with negative results.

First, Remark 1.6 of \cite{trevisan2009regularity} constructed two counterexamples ruling out the possibility of any version of \cref{thm:ttv-for-circuits} with $s \ge (ns')^{1 + \Omega(1)}$. One of their examples involves a target function $g : \{0, 1\} \to \{0, 1\}$ of complexity $\tilde{O}(n s')$ sampled from a family of $O(s' \log s')$-wise independent hash functions. Using a Chernoff-like concentration inequality for $k$-wise independence (e.g. Problem 3.8 of \cite{vadhan2012pseudorandomness}), they show that with high probability over the choice of $g$, every function $h$ of complexity at most $s'$ has correlation at most $1/10$ with $g$. Consequently, one cannot hope for $h$ to fool the distinguisher $f = g$ of size $s = \tilde{O}(ns')$.

Crucially, in the aforementioned counterexample, the target function $g$ depends on the simulator size $s'$ (for example, in one standard construction of the hash family, $g$ would be a polynomial whose degree grows with $s'$). The same is true of their other counterexample, which is based on a black-box application of randomness extractors for efficiently sampleable distributions with high min-entropy. Therefore, these counterexamples do not rule out supersimulators (\cref{thm:super-succinct}), which only guarantee that for all $g$, there exists at least one ``good'' simulator size $s'$ in a bounded interval.

The subsequent work of \cite{chen2018complexity} provided an even stronger lower bound on the simulator size, under additional assumptions on its structure. Specifically, they argue that any ``black-box'' $(\cF, \eps)$-regular simulator $h$ for $g$ under $\cD$ must make $\Omega(1/\eps^2)$ oracle calls to functions in $\cF$. In the Boolean circuit case, this result may lead one to believe that simulators of size $s'$ can only hope to fool \emph{substantially} smaller distinguishers, and not even those of slightly smaller sizes $s \in [\eps^2 s', s']$. From one point of view, \cref{thm:super-succinct} shows that in the context of Boolean circuits, removing this structural requirement on $h$ impacts the range of attainable simulator and distinguisher sizes.

\paragraph{Algorithmic Fairness}

In the context of algorithmic fairness for machine learning systems, modern concepts like \emph{outcome indistinguishability} \cite{dwork2021outcome} and \emph{multicalibration} \cite{hkrr,kearns2018gerrymandering}, studied further in work on \emph{omniprediction} \cite{omnipredictors, gopalan2023lossoi}, can be viewed as stronger, constructive versions of complexity-theoretic regularity that utilize practical learning-theoretic primitives.
\

The connection between these modern machine learning tools and older notions of regularity was made explicit by \cite{dwork2023pseudorandomness,casacuberta2024complexity}, and this point of view has proven to be fruitful. Indeed, these works and their \textit{sequelae} have led to new insights into graph regularity, hardcore set construction, dense models, omniprediction, computational hardness and entropy, and the computational indistinguishability of product distributions \cite{marcussen2024characterizing,casacuberta2025global,hu2025generalized}. The present work similarly builds on the ideas from this line of work, but now with a view toward property testing. In particular, \emph{supersimulation}, which is our main technical tool, is closely related to both outcome indistinguishability and multicalibration.

\paragraph{Calibrated Multiaccuracy} 
The idea behind our proof of \cref{thm:mpv-cma-succinct}, which concerns the computational indistinguishability of product distributions as in \cite{marcussen2024characterizing}, is to replace the use of multicalibration (a sometimes costly strengthening of the regularity lemma) with calibrated multiaccuracy (a lightweight strengthening of the regularity lemma). The use of this idea to achieve quantitative gains in downstream applications has been a theme of multiple recent and concurrent works.

One such work shows that the concept of omniprediction in machine learning can be achieved via calibrated multiaccuracy \cite{gopalan2023lossoi} (see also \cite{dwork2023pseudorandomness}), which had previously been shown to be achievable via multicalibration \cite{omnipredictors}. Another work shows that for hardcore set construction, optimal density can be achieved via calibrated multiaccuracy \cite{casacuberta2025global}, which had previously been shown to be achievable via multicalibration \cite{casacuberta2024complexity}, and even earlier with suboptimal density via the original regularity lemma \cite{trevisan2009regularity}. Most recently, \cite{hu2025generalized} showed that calibrated multiaccuracy can be used to extend and improve the efficiency of prior regularity-based characterizations of computational notions of entropy \cite{vadhan2012pseudoentropy, vadhan2013uniform, zheng2014thesis, casacuberta2024complexity}.

\paragraph{Property Testing}

The study of property testing writ large was initiated by \cite{rubinfeld1996property,goldreich1996property}. Within this field, several works have attempted to characterize which properties are testable under various constraints on the tester's power. For example, in the graph context, \cite{alon2006regularity} proved, roughly speaking, that a property of dense graphs is testable if and only if it can be determined from a Szemer\'{e}di regular partition of the graph. In some sense, this result was a capstone to a large body of work devoted to understanding the testability of graph properties, including \emph{monotone} properties \cite{alon2005monotone}, \emph{hereditary} properties \cite{alon2005hereditary}, and more.

In the present paper, we focus not only on properties of dense graphs, but also on properties of a certain class of dense distributions defined by Boolean functions. As already discussed, in this context, the closest related work to ours is the characterization of constant-sample testability in terms of constant-part symmetry, due to \cite{blais2019testable}. Even earlier, \cite{kaufman2008invariance,sudan2010invariance} pioneered the idea that the symmetries of a property play a central role in understanding its testability. Of course, one difference is that our work concerns the \emph{computational} complexity of property testing, which has recently received renewed interest \cite{pinto2025testing}.

\paragraph{Private Graph Algorithms} There have been several works designing differentially private algorithms for approximating the \emph{cut function} of a graph $G = (V, E)$ \cite{gupta2012iterative,arora2019sparsification,marek2020synthetic,liu2024optimal,zou2025almost,amand2025breaking}. This function takes as input two vertex subsets $S,T \subseteq V$, and outputs their edge count $e_G(S, T)$. Our algorithm for privately compute regularity templates has similar (but, technically speaking, incomparable) goals and guarantees. While state-of-the-art algorithms for privately approximating the cut function generally take $\mathrm{poly}(n)$ or $\exp(n)$ time and achieve $O(n^c)$ error for $c < 2$, our algorithm for compute regularity templates allows one to approximate cut sizes (without location information) with $\eps n^2$ by inspecting only a constant $O_\eps(1)$ number of vertices of the graph. Phrased differently, our algorithm computes a coarser approximation of the graph, but does so much more efficiently---indeed, more efficiently than one might expect to be possible.

\subsection{Paper Organization}

In \cref{sec:preliminaries}, we cover preliminaries on property testing, regularity, and supersimulators. In \cref{sec:symmetry-thm}, we prove our main results on efficient property testing, \cref{thm:main-hard,thm:main-easy}, as well as our result on consistency counting, \cref{thm:cc}, and our extension to dense distribution testing, \cref{thm:distr-hard}. In \cref{sec:dp-graphs}, we present our results on node-level private graph testing, \cref{thm:dp-partition,thm:dp-graph-testing}, as well as the related function testing results, \cref{thm:templates-hard,thm:templates-easy}. In \cref{sec:mpv}, we prove our results on the computational indistinguishability of product distributions: \cref{thm:mpv-cma-succinct,thm:mpv-super-succinct}. In \cref{sec:supersimulators}, we present the constructions of supersimulators on which we rely. These imply the version stated in \cref{thm:super-succinct} as a special case.
 
\section{Preliminaries}
\label[section]{sec:preliminaries}

In this paper, $\cX$ denotes an arbitrary finite set, $\Delta(\cX)$ denotes the set of probability distributions on $\cX$, and $\{\cX \to \cY\}$ denotes the set of functions from $\cX$ to a set $\cY$. Given $\cF \subseteq \{\cX \to \R\}$ and $c \in \R$, let $c \cdot \cF$ denote the set of functions $c \cdot f$ for $f \in \cF$. Let $-\cF = (-1) \cdot \cF$ and let $\pm \cF = \cF \cup -\cF$. Let $\cB(p)$ denote the Bernoulli distribution with parameter $p \in [0, 1]$. Finally, let $[t]_a^b$ denote the projection of $t \in \R$ onto the interval $[a, b]$.

\paragraph{Property Testing}
A \emph{property} $\cP$ is a set of Boolean functions $f: \{0, 1\}^n \to \{0, 1\}$. We say that \emph{$f$ has the property $\cP$} if $f \in \cP$. The \emph{distance} between two functions is the fraction of inputs on which they disagree. We write $f \in \cP_\eps$ if $f$ is $\eps$-close to some $g \in \cP$. While property testing can be studied in either a query-based or sample-based access model, we focus on the latter perspective, which was introduced by \cite{goldreich1996property}:

\begin{definition}[Sample-Testable Property]
\label[definition]{def:sample-testability}
    Let $\cP$ be a property of Boolean functions. We say that $\cP$ is \emph{sample-testable with proximity parameter $\eps > 0$} if there is a randomized circuit $T$ of size $s$ that receives as input $m$ independent samples $x_i \sim \{0, 1\}^n$ and their labels $y_i = f(x_i)$, always outputs either \textsc{Accept} or \textsc{Reject}, and meets the following two requirements:
    \begin{itemize}
        \item If $f$ has the property $\cP$, then $T$ outputs \textsc{Accept} with probability at least $2/3$.
        \item If $f$ is $\eps$-far from having $\cP$, then $T$ outputs \textsc{Reject} with probability at least $2/3$.
    \end{itemize}
    In both conditions, the probability is computed over randomness in the sample and internal to $T$.
\end{definition}

More formally, we will model the tester as a deterministic function $T : (\cX \times \{0, 1\})^m \times \{0, 1\}^\ell \to \{0, 1\}$ that receives as input $m$ labeled samples $(x_i, y_i) \in \cX \times \{0, 1\}$ and a uniform $\ell$-bit random seed $r \in \{0, 1\}^\ell$. Often, we will write  $x \in \cX^m$ and $y \in \{0, 1\}^m$ and use the abbreviation
\[
    T(x, y, r) = T\bigl((x_1, y_1), \ldots, (x_m, y_m), r\bigr).
\]
We say that $T$ ``accepts'' when it outputs $1$, and ``rejects'' when it outputs $0$. Sometimes, it will be convenient to work directly with the expected value of $T$ over its internal randomness, or, equivalently, the mean function $\bar{T} : (\cX \times \{0, 1\})^m \to [0, 1]$ defined by the formula
\[
    \bar{T}(x, y) = 2^{-\ell}\sum_{r \in \{0, 1\}^\ell} T(x, y, r).
\]
When $\cX = \{0, 1\}^n$, we will often discuss the circuit size of the tester, by which we mean $T$, not $\bar{T}$.

\paragraph{Structured Symmetry} A property $\cP$ of Boolean functions is \emph{$k$-part symmetric} if there is a partition of $\{0, 1\}^n$ into disjoint parts $S_1, \ldots, S_k$ such that $\cP$ is invariant under permutations of the domain within each part. Equivalently, whether or not a function $f$ has the property $\cP$ can be completely determined from the $k$ scalar \emph{densities} $\E[f(x) \bm{1}[x \in S_j]]$, where the expectation is computed over a random input $x \sim \{0, 1\}^n$.

In general, such parts need not have any special structure, and may be very complex. In contrast, we say that a part $S_j$ has \emph{computational complexity} at most $s$ if there is a circuit of size $s$ that decides whether or not a given input belongs to $S_j$. We say that a partition $\cP$ has computational complexity at most $s$ if there is a circuit of size at most $s$ computing its \emph{classification function}, which outputs the index of the unique part $S_j$ containing a given input $x$.

\paragraph{Complexity-Theoretic Regularity}
First, we define regularity and indistinguishability.

\begin{definition}[Regularity and Indistinguishability]
\label[definition]{def:indistinguishability}
    Given a family $\cF \subseteq \{\cX \to [-1, 1]\}$, an error parameter $\delta > 0$, and two functions $g, h : \cX \to [0, 1]$, we say that $h$ is a \emph{$(\cF, \delta)$-regular simulator for $g$ under $\cD$} if for all distinguisher functions $f \in \cF$,
    \[
        \Bigl\lvert\scalebox{1.2}{$\mathbb{E}$}_{x \sim \cD}\bigl[f(x)\bigl(g(x)-h(x)\bigr)\bigr]\Bigr\rvert \le \delta.
    \]
    Equivalently, we say that $g$ and $h$ are \emph{$(\cF, \delta)$-indistinguishable} (when clear from context, we will omit the phrase ``under $\cD$''). When $\cD$ is the uniform distribution over $\cX = \{0, 1\}^n$ and $\cF$ is the collection of all Boolean circuits of size at most $s$, we say that $h$ is an \emph{$(s, \delta)$-regular} simulator for $g$, or that the functions are \emph{$(s, \delta)$-indistinguishable}.
\end{definition}

Given any target function $g : \cX \to [0, 1]$, there exists a trivial $(\cF, \delta)$-regular simulator for $g$, namely $h = g$. The \emph{complexity-theoretic regularity lemma} guarantees the existence of a much better simulator, whose complexity does not scale with $\cX$ or $g$, but rather depends only on $\cF$ and $\delta$. We will state the lemma in terms of the distinguisher family's \emph{structured sums} $\cS_{k,\delta}(\cF)$.

\begin{definition}[Structured Sums]
\label[definition]{def:struct-sum}
    Given a distinguisher family $\cF \subseteq \{\cX \to [-1, 1]\}$ and an initialization function $h_0 : \cX \to [0, 1]$, the set $\cS_{k,\delta}(\cF, h_0)$ comprises all functions of the form \[h(x) = \Bigl[h_0(x) + \delta\cdot \bigl(f_1(x) + \cdots + f_k(x)\bigr)\Bigr]_0^1\] for some $f_1, \ldots, f_k \in \pm \cF$.\footnote{Recall that $[\cdot]_0^1$ projects onto $[0, 1]$.} Let $\cS_{< k, \delta}(\cF; h_0) = \bigcup_{j<k} \cS_{j,\delta}(\cF, h_0)$.
\end{definition}

We will often use the initialization $h_0 = 0$. When this is clear from context, we will simply write $\cS_{k,\delta}(\cF)$ and $\cS_{<k,\delta}(\cF)$ in place of $\cS_{k,\delta}(\cF, 0)$ and $\cS_{<k,\delta}(\cF, 0)$.

\begin{lemma}[Complexity-Theoretic Regularity \cite{trevisan2009regularity}]
\label[lemma]{thm:ttv}
    Fix $\cD \in \Delta(\cX)$, $\cF \subseteq \{\cX \to [-1, 1]\}$, $h_0 : \cX \to [0, 1]$ and $\delta > 0$. Every $g : \cX \to [0, 1]$ has an $(\cF, \delta)$-regular simulator $h \in \cS_{<(2/\delta^2),(\delta/2)}(\cF, h_0)$.
\end{lemma}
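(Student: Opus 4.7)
The plan is to prove the lemma by an iterative boosting argument. I maintain a ``raw'' sum $\tilde{h}_j := (\delta/2)\sum_{i=1}^{j} \sigma_i f_i$ together with its clipping $h_j := [\tilde{h}_j]_0^1 \in \cS_{j,\delta/2}(\cF)$, starting from $\tilde{h}_0 \equiv 0$. At step $j$, I test whether $h_j$ is already an $(\cF, \delta)$-regular simulator for $g$ under $\cD$; if so, I halt and output $h_j$. Otherwise, the failure of indistinguishability furnishes a distinguisher $f_{j+1} \in \cF$ and a sign $\sigma_{j+1} \in \{-1, +1\}$ for which $\sigma_{j+1}\E_{\cD}[f_{j+1}(g - h_j)] > \delta$, and I set $\tilde{h}_{j+1} := \tilde{h}_j + (\delta/2)\sigma_{j+1} f_{j+1}$ and $h_{j+1} := [\tilde{h}_{j+1}]_0^1$.

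For the bound $k < 2/\delta^2$, I track the convex potential
\[
    \Psi(\tilde{h}) := \E_{x \sim \cD}\bigl[\phi(\tilde{h}(x)) - g(x)\tilde{h}(x)\bigr],
\]
where $\phi : \R \to \R$ is the $1$-smooth convex antiderivative of the clipping map, defined by $\phi(t) = t^2/2$ on $[0, 1]$, $\phi(t) = t - 1/2$ for $t > 1$, and $\phi(t) = 0$ for $t < 0$. Because $\phi'(t) = [t]_0^1$, a second-order Taylor expansion of $\Psi$ around $\tilde{h}_j$ along the update direction $(\delta/2)\sigma_{j+1}f_{j+1}$ yields
\[
    \Psi(\tilde{h}_{j+1}) - \Psi(\tilde{h}_j) \le (\delta/2)\sigma_{j+1}\E_{\cD}[(h_j - g)f_{j+1}] + (\delta^2/8)\E_{\cD}[f_{j+1}^2] < -3\delta^2/8,
\]
where the first inequality uses $1$-smoothness of $\phi$ and the second uses the distinguisher guarantee together with $\E[f_{j+1}^2] \le 1$. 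Since $\Psi(\tilde{h}_0) = 0$ and $\Psi(\tilde{h}) \ge -1/2$ pointwise (by case analysis on whether $\tilde{h}(x)$ lies below $0$, inside $[0, 1]$, or above $1$), the procedure must halt within $4/(3\delta^2) < 2/\delta^2$ steps, producing an $h \in \cS_{<2/\delta^2, \delta/2}(\cF)$ as desired.

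The main technical subtlety is reconciling the raw updates in $\R$ that drive the potential decrease with the outer clipping appearing both in the definition of $\cS_{k,\delta}(\cF)$ and in the distinguishing condition. A naive choice such as $\Psi(h) = \E[(g - h)^2]$ stumbles because projecting $\tilde{h}_j$ onto $[0,1]$ ``wastes'' raw-direction progress that the next distinguisher update cannot recover, breaking the per-step decrement. The trick is to take $\phi$ to be the convex antiderivative of $[\cdot]_0^1$: the potential is automatically insensitive (to first order) to components of the raw sum that have been pushed outside $[0, 1]$, so the first-order analysis naturally sees the projected $h_j$ against which the distinguisher was chosen, while the $\ell_2$-smoothness of $\phi$ controls the quadratic term in the familiar way.
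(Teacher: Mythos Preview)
Your proof is correct and follows essentially the same approach as the paper: both build the identical lazy-projection iterates $h_j=[\,(\delta/2)\sum_{i\le j}\sigma_i f_i\,]_0^1$ and bound the number of boosting rounds via an online-learning potential argument on the \emph{unclipped} sums. The only difference is packaging---the paper isolates an elementary ``prefix sums'' inequality (\cref{thm:ftrl}) in lieu of your $1$-smooth potential $\Psi(\tilde h)=\E[\phi(\tilde h)-g\tilde h]$ with $\phi'=[\cdot]_0^1$; as the paper itself remarks, both are instances of the FTRL/lazy-mirror-descent analysis, and your version even yields the slightly sharper bound $k<4/(3\delta^2)$.
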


In the introduction, we had focused on the special case where $\cD$ is the uniform distribution on $\cX = \{0, 1\}^n$ and $\cF$ contains all circuits of size at most $s$, but this need not be the case. We had also considered randomized distinguishers $A(x, y)$ that take as a second input a binary label $y$ sampled from either $\cB(g(x))$ or $\cB(h(x))$, but here we consider deterministic distinguishers $f(x)$. Of course, the latter perspective subsumes the former. Given $A$, simply set $f_A(x) = \E[A(x, 1) - A(x, 0)]$, where the expectation is taken over the randomness of $A$. Then, the distinguishing advantage of $A$ is precisely $\abs{\E_{x \sim \cD}[f_A(x)(g(x) - h(x))]}$.

The complexity-theoretic regularity lemma, as stated in \cref{thm:ttv}, may produce a simulator $h$ that is more complex than the functions in $\cF$, but for our purposes, we will need $h$ to be a \emph{supersimulator}, which fools distinguishers more complex than itself. We formalize this in terms of a \emph{growth function} $\cG$ that takes as input any function $h : \cX \to [0, 1]$ and outputs the distinguisher family $\cG(h) \subseteq \{\cX \to [-1, 1]\}$ that we would like $h$ to fool. Now, the notion of structured sums from \cref{def:struct-sum} must be adjusted accordingly.

\begin{definition}[Structured Sums]
    Fix $\delta > 0$. Given an initialization $h_0 : \cX \to [0, 1]$ and distinguisher functions $f_1, f_2, \ldots : \cX \to [-1, 1]$, let $f_{1:0} = h_0$ and for each $j \in [k]$, let
    \[
        f_{1:j}(x) = \Bigl[h_0(x) + \delta\cdot \bigl(f_1(x) + \cdots + f_j(x)\bigr)\Bigr]_0^1.
    \]
    If $\cG$ is a growth function, then $\cS_{k,\delta}(\cG, h_0)$ is the set of functions of the form of $f_{1:k}$ for some $f_1, f_2, \ldots$ such that each $f_j$ belongs to the previous family $\pm \cG(f_{1:j-1})$. Let $\cS_{< k,\delta}(\cG, h_0) = \bigcup_{j<k} \cS_{j,\delta}(\cG, h_0)$.
\end{definition}

Once again, we omit the second argument $h_0$ when it is clear from context that $h_0 = 0$.

\begin{restatable}[Supersimulators]{lemma}{supersimulators}
\label[lemma]{thm:super}
    Fix $\cD \in \Delta(\cX)$, a growth function $\cG$, an initialization $h_0 : \cX \to [0, 1]$, and $\delta > 0$. Every $g : \cX \to [0, 1]$ has a $(\cG(h), \delta)$-regular simulator $h \in \cS_{<(2/\delta^2),(\delta/2)}(\cG, h_0)$.
\end{restatable}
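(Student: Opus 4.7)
The plan is an iterative boosting-style construction that maintains an unprojected running sum $\sigma_j = (\delta/2) \sum_{i \le j} f_i$ whose clipping $h_j := [\sigma_j]_0^1$ is the candidate simulator at step $j$; by construction, $h_j \in \cS_{j, \delta/2}(\cG)$. At each step, if $h_j$ is already a $(\cG(h_j), \delta)$-regular simulator for $g$ under $\cD$, we halt. Otherwise, the failure of regularity produces some $f_{j+1} \in \pm \cG(h_j)$ with $\E_{\cD}[f_{j+1}(g - h_j)] > \delta$, and we update $\sigma_{j+1} = \sigma_j + (\delta/2) f_{j+1}$, $h_{j+1} = [\sigma_{j+1}]_0^1 \in \cS_{j+1, \delta/2}(\cG)$.

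For the termination bound, introduce the convex, $1$-smooth function $F : \R \to \R$ defined by $F(s) = \int_0^s [t]_0^1 \, dt$, which satisfies $F'(s) = [s]_0^1$, and the Bregman-style potential $\Phi_j = \E_{\cD}[F(\sigma_j) - g \cdot \sigma_j]$. Because $F'(\sigma_j) = h_j$ pointwise, a first-order Taylor expansion together with $1$-smoothness of $F$ gives
\[
    \Phi_{j+1} - \Phi_j \;\le\; -(\delta/2)\, \E\bigl[(g - h_j)\, f_{j+1}\bigr] + \tfrac{1}{2}(\delta/2)^2\, \E[f_{j+1}^2] \;<\; -\tfrac{\delta^2}{2} + \tfrac{\delta^2}{8} \;=\; -\tfrac{3\delta^2}{8},
\]
where we used $\E[f_{j+1}(g - h_j)] > \delta$ and $|f_{j+1}| \le 1$. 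Since $\Phi_0 = 0$ and minimizing $F(\sigma) - g(x)\sigma$ pointwise at $\sigma = g(x) \in [0,1]$ yields $-g(x)^2/2$, we have $\Phi_j \ge -\E[g^2]/2 \ge -1/2$ throughout. Hence the loop halts after fewer than $(1/2)/(3\delta^2/8) < 2/\delta^2$ iterations, producing $h_k \in \cS_{<2/\delta^2, \delta/2}(\cG)$ as claimed.

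I expect the main obstacle to be selecting this potential. The familiar $L^2$ potential $\E[(g - h_j)^2]$ used in the original complexity-theoretic regularity proof (\cref{thm:ttv}) pairs cleanly with an iteratively-projected update $h_{j+1} = [h_j + (\delta/2) f_{j+1}]_0^1$, but does not obviously decrease under the fully-projected update $h_{j+1} = [\sigma_j + (\delta/2) f_{j+1}]_0^1$ required by the definition of $\cS_{k, \delta/2}(\cG)$ in this paper, because the correction $\sigma_j - h_j$ on the saturated region can correlate with $f_{j+1}$ in either direction. The Bregman-style potential above is designed so that its pointwise gradient in $\sigma_j$ equals $h_j - g$, so that the regularity violation at the projected simulator $h_j$---the only correlation available from the stopping condition---translates directly into a first-order drop, with $1$-smoothness of $F$ controlling the quadratic remainder.
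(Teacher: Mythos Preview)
Your proof is correct and takes essentially the same approach as the paper: the identical iterative construction with unprojected running sums $\sigma_j$ and single-projection simulators $h_j=[\sigma_j]_0^1$, terminating in under $2/\delta^2$ steps. The only difference is the packaging of the termination argument---you use the smooth potential $\Phi_j=\E[F(\sigma_j)-g\sigma_j]$ with $F'=[\cdot]_0^1$ (the lazy-mirror-descent/FTRL analysis), whereas the paper isolates an equivalent pointwise ``prefix sums'' inequality $\sum_{j} a_j(b-s_j)\le b^2/2$ (with $s_j=[a_1+\cdots+a_j]_0^1$), proves it by a short induction, and then shifts from $h_{j-1}$ to $h_j$ via $\lvert h_j-h_{j-1}\rvert\le\delta/2$; the paper itself remarks that its argument is FTRL in disguise, so the two are morally identical.
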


Note that \cref{thm:ttv} is a special case of \cref{thm:super} corresponding to a constant growth function that always outputs $\cF$. While supersimulators appeared implicitly in \cite{dwork2021outcome}, the statement of \cref{thm:super} differs slightly from the version in \cite{dwork2021outcome}. For this reason, we give a short proof of \cref{thm:super} in \cref{sec:supersimulators}, along with our alternate construction of supersimulators using an iteration technique from the graph regularity literature.

\paragraph{Multicalibration}

A simulator (a.k.a. predictor) $h$ is said to be \emph{calibrated} if \(\E_{x \sim \cD}[g(x) | h(x)] = h(x)\). Although the simulator provided by \cref{thm:ttv} is not necessarily perfectly calibrated, one can ensure that it is \emph{approximately calibrated} at the cost of a small increase in the complexity of $h$. Specifically, following \cite{gopalan2022low}, we say that $h$ is $\gamma$-calibrated if for all functions $w : [0, 1] \to [0, 1]$,
\[
    \Bigl\lvert\E_{x \sim \cD}[w(h(x))(g(x) - h(x)]\Bigr\rvert \le \gamma.
\]
The following known result requires only slight modifications to the proof of \cref{thm:ttv}.  We will state it in terms of the class $\cF_{(s_1, s_2)}$ of functions $h$ with the following property: there exist functions $f_1, \ldots, f_k \in \cF$ with $k \le s_1$ and a Boolean circuit of size at most $s_2$ that computes the output of $h(x)$ in binary given only $f_1(x), \ldots, f_k(x)$ as input.

\begin{lemma}
\label[lemma]{thm:ttv-calibrated}
    For all $\cD \in \Delta(\cX)$, $\cF \subseteq \{\cX \to [-1, 1]\}$, $g : \cX \to [0, 1]$, and $0 < \gamma \le \eps \le 1$, there exists an $(\cF, \eps)$-regular and $\gamma$-calibrated simulator in $\cF_{(O(1/\eps^{2}), \, \tilde{O}(1/\gamma^3))}$.
\end{lemma}

We have chosen to work with a fairly strong definition of approximate calibration. Various relaxations exist in the literature that would lead to an even milder dependence on $\gamma$ in the complexity of $h$ \cite{gopalan2022low}. More stringent than calibration is \emph{multicalibration} \cite{hkrr, kearns2018gerrymandering}, which is essentially a per-level-set regularity requirement on the simulator. We say that $h$ is $(\cF, \eps)$-multicalibrated if
\[
    \Pr_x\biggl[\max_{f \in \cF} \Bigl\lvert\E\bigl[f(x)(g(x) - h(x)) \,\big|\,h(x)\bigr]\Bigr\rvert \le \eps\biggr] \ge 1 - \eps.
\]
(Several variations of the definition of multicalibration exist in the literature, and the one we have stated is often referred to as \emph{swap} multicalibration or \emph{strict} multicalibration.) To emphasize the relationship between $(\cF, \eps)$-regularity and $(\cF, \eps)$-multicalibration, we will sometimes follow the now-standard convention of referring to the former as \emph{$(\cF, \eps)$-multiaccuracy}, where ``accuracy'' is short for accuracy in expectation.

One can check that $(\cF, \eps)$-multicalibration implies both $(\cF, O(\eps))$-multiaccuracy and, if $\cF$ contains the constant $0$ and $1$ functions, $O(\eps)$-calibration. Moreover, $(\cF, \eps)$-multicalibration can be achieved at the cost of a moderate increase in the complexity of $h$ relative to $\cF$:

\begin{lemma}[\cite{hkrr}]
\label[lemma]{thm:multicalibration}
    For all $\cD \in \Delta(\cX)$, $\cF \subseteq \{\cX \to [-1, 1]\}$, $g : \cX \to [0, 1]$, and $0 < \eps \le 1$, there exists an $(\cF, \eps)$-multicalibrated simulator in $\cF_{(O(1/\eps^{4}), \, \tilde{O}(1/\eps^4))}$.
\end{lemma}

We emphasize that \cref{thm:multicalibration} will not be used directly in any of our proofs, but it is nevertheless a useful point of reference. We also remark all of these concepts have been studied in the so-called \emph{multiclass} case, corresponding to $g : \cX \to \Delta([k])$. In this case, which we do not study in this paper, the complexity of known multicalibration constructions are dramatically worse, degrading exponentially with $k$, motivating the study of whether more lightweight regularity notions suffice for downstream applications.

\paragraph{Graph Regularity Iteration} Given a simple graph $G = (V, E)$, let $g = \bm{1}_E: V \times V \to \{0,1\}$ denote its edge indicator function, and let $\cF = \{\bm{1}_{S \times T} : S, T \subseteq V\}$ contain the indicator functions for all rectangles $S \times T \subseteq V \times V$, sometimes called \emph{cuts}. Famous and well-studied notions of graph regularity, such as Szemer\'{e}di regularity \cite{szemeredi1975regular} and Frieze-Kannan regularity \cite{frieze1996regularity, frieze1999approximation}, are closely related to multicalibration and multiaccuracy with respect to $\cF$, respectively \cite{trevisan2009regularity, skorski2017crypto, dwork2023pseudorandomness}. This connection suggests that techniques developed in the graph regularity literature may have counterparts in our complexity-theoretic setting.

We confirm this intuition in \cref{sec:supersimulators}, where we adapt the technique of \emph{iteration}, which constructs a sequence of regular partitions $\cP_1, \cP_2, \ldots$ of the same graph $G$. In this method, each $\cP_{i+1}$ is a refinement of $\cP_i$, and each $\cP_{i+1}$ is extremely regular relative to the complexity of $\cP_i$ (more formally, $\cP_{i+1}$ achieves regularity with an error parameter $\eps_{\abs{\cP_i}}$ that decays with the number of parts of $\cP_i$). With some care, one can ensure that there exist consecutive partitions $\cP = \cP_{i^\star}$ and $\cQ = \cP_{i^\star + 1}$, where $i^\star \le O(1/\eps^2)$, such that the $\cP$ and $\cQ$ are distance at most $\eps$ in an appropriately defined metric. Depending on the base regularity notion chosen to instantiate this technique, the resulting pair $(\cP, \cQ)$ may have substantially stronger regularity properties than $\cP$ alone, allowing one to derive, for example, Szemer\'{e}di regularity from the weaker notion of Frieze-Kannan regularity, as well as improved bounds for various combinatorial applications. For more detail about this technique in the graph setting, we refer the reader to \cite{zhao2023gtac}, as well as \cite{rodl2010regularity, fox2011new, conlon2012bounds}.

\paragraph{Graph Regularity Templates} In \cref{sec:intro-privacy}, we defined \emph{regular partitions}, \emph{regularity templates}, and \emph{regular-reducibility}. In order to prove our results regarding node-level private graph property testing, we will need a few more technical definitions and results regarding these concepts. The first lemma that we need measures the number of edge edits required to improve regularity by a desired amount $\delta$.

\begin{lemma}[\cite{alon2006regularity}]
\label[lemma]{thm:tweak}
    Let $R = (\gamma, k, \eta, \bar{R})$ be a regularity template and let $(V_1, \ldots, V_k)$ be an equipartition of a graph $G$ such that $\abs{d(V_i, V_j) - \eta_{ij}} \le \delta$ for all pairs $(i, j) \in [k]^2$ and $(V_i, V_j)$ is $(\gamma + \delta)$-regular for all pairs $(i, j) \notin \bar{R}$. Then $G$ is $O(\delta/\gamma^2)$-close to satisfying $R$.
\end{lemma}

The next definition that we need is a notion of similarity between two vertex equipartitions (not necessarily of the same underlying graph).

\begin{definition}[\cite{alon2006regularity}]
\label[definition]{def:reg-sim}
    An equipartition $\cU = \{U_1, \ldots, U_k\}$ of a graph $G$ is \emph{$\delta$-similar} to a $\gamma$-regular equipartition $\cV = \{V_1, \ldots, V_k\}$ of a graph $H$ if $\abs{d_G(U_i, U_j) - d_H(V_i, V_j)} \le \delta$ for all pairs $(i, j) \in [k]^2$ and if $(U_i, U_j)$ is $(\gamma+\delta)$-regular whenever $(V_i, V_j)$ is $\gamma$-regular.
\end{definition}

The next lemma states that one can identify regular partitions of a graph by analyzing a constant-size subsample of the vertices. This lemma is intuitive but certainly not trivial, since it is far from obvious that so few samples suffice for this purpose.

\begin{lemma}[\cite{alon2006regularity}]
\label[lemma]{thm:reg-samp}
    For all $r \in \N$ and $\delta > 0$, there exists $q = O_{r,\delta}(1)$ such that if $\gamma \ge \delta$ and $k \le r$, then the following property holds with probability at least $2/3$ over a uniformly random sample $Q$ of $q$ vertices from a graph $G$: for every size-$k$ $\gamma$-regular equipartition $\cV$ of $G$, there exists a size-$k$ equipartition $\cU$ of $G[Q]$ that is $\delta$-similar to $\cV$, and for every size-$k$ $\gamma$-regular equipartition $\cU$ of $G[Q]$, there exists a size-$k$ equipartition $\cV$ of $G$ that is $\delta$-similar to $\cU$.
\end{lemma}

The next lemma asserts that any graph property that is testable from from a constant number of edge queries is also testable by inspecting the subgraph induced by a constant number of \emph{randomly} sampled vertices. The intuition behind the lemma is that a graph property is by definition invariant under vertex relabeling, so it should not matter which part of the graph we inspect.

\begin{lemma}[\cite{alon1999efficient,goldreich2001three}]
\label[lemma]{thm:graph-canonical}
    If property $\cP$ of $n$-vertex graphs is testable for all $\eps > 0$ with proximity $\eps$ and $q/2 = O_\eps(1)$ edge queries, then $\cP$ is testable for all $\eps > 0$ with a tester that first selects a uniformly random subset $Q$ of $q$ vertices from $G$, then queries all the pairs in $Q$, and finally outputs \textsc{Accept} or \textsc{Reject} in a deterministic manner that depends only on the isomorphism class of the induced subgraph $G[Q]$.
\end{lemma}

The final result we need is the equivalence between tolerant and non-tolerant property testing in the dense graph model.

\begin{lemma}[\cite{fischer2005estimation}]
\label[lemma]{thm:graph-tol}
    If for all $\eps > 0$, a graph property $\cP$ is testable with proximity $\eps$ using $O_\eps(1)$ edge queries, then for all $\eps > 0$ and $\delta \ge 0$, the graph property $\cP_{\delta}$ is testable with proximity $\eps$ using $O_{\eps,\delta}(1)$ edge queries (i.e. $\cP$ is tolerantly testable with proximity parameters $\delta$ and $\delta + \eps$).
\end{lemma}

\paragraph{Differential Privacy}

For our results on private graph regularity, we will need the definition of DP from \cite{dwork2006calibrating}, specialized to \emph{node-adjacent} graphs, as in \cite{blocki2013social,chen2013recursive,kasiviswanathan2013node}, the utility guarantees for the classic \emph{exponential mechanism} \cite{mcsherry2007mechanism}, and the powerful technique of \emph{privacy amplification by subsampling} (see \cite{balle2018amplification,steinke2022composition} for an overview of this technique).

\begin{definition}[Node-Level Differential Privacy \cite{blocki2013social,chen2013recursive,kasiviswanathan2013node}]
    Two graphs $G, G'$ are \emph{node-adjacent} if $G'$ can be obtained by selecting a single vertex $v$ and arbitrarily adding or removing edges from $G$ that are incident to $v$. A randomized algorithm $\cA$ that takes as input a graph and outputs an element of a set $\cS$ satisfies \emph{node-level $\alpha$-differential privacy ($\alpha$-DP)} if for all node-adjacent graphs $G, G'$ and all possible outputs $s \in \cS$,
    \[
        \Pr[\cA(G) = s] \le e^\alpha \Pr[\cA(G') = s],
    \]
    where the probabilities are computed over \emph{only} the randomness internal to the algorithm $\cA$.
\end{definition}

\begin{lemma}[Exponential Mechanism \cite{mcsherry2007mechanism}]
\label[lemma]{thm:expmech}
    Let $\ell_1, \ell_2$ be two real-valued functions on a finite domain $\cR$ that differ pointwise by at most $\Delta$, and let $\cD_i$ be the distribution over $\cR$ proportional to $\exp(-(\alpha/2\Delta) \cdot \ell_i(R))$. Then, $\cD_1(R) \le e^\alpha \cdot \cD_2(R)$ for each $R \in \cR$, and for all $\beta > 0$,
    \[
        \Pr_{R \sim \cD_1} \Bigl[\ell_1(R) \le \min_{R^* \in \cR} \ell_1(R^*) + \frac{2\Delta}{\alpha}\log\Bigl(\frac{\abs{\cR}}{\beta}\Bigr)\Bigr] \ge 1-\beta.
    \]
\end{lemma}

\begin{lemma}[Privacy Amplification by Subsampling, Variant of Theorem 29 of \cite{steinke2022composition}]
\label[lemma]{thm:amp}
    Let $Q \subseteq [n]$ be a random subset. Assume that, for all $i \in [n]$, we can define a random subset $Q_{-i} \subseteq [n] \setminus \{i\}$ distributed jointly with $Q$ such that the following two conditions hold for all $i \in [n]$:
    \begin{itemize}
        \item For all $n$-vertex graphs $G$, the induced subgraphs $G[Q]$ and $G[Q_{-i}]$ are node-adjacent with probability $1$ over the randomness of $Q$ and $Q_{-i}$,
        \item The marginal distribution of $Q_{-i}$ conditioned on $i \in Q$ is equal to the marginal distribution of $Q$ conditioned on $i \notin Q$.
    \end{itemize}
    Let $M$ be an $\alpha$-DP mechanism that takes as input an $n$-vertex graph. Define the mechanism $M^Q(G) = M(G[Q])$. Let $p = \max_{i \in [n]} \Pr_Q[i \in Q]$. Then $M^Q$ is $\alpha'$-DP for $\alpha' = \log(1 + p(e^\alpha-1))$.
\end{lemma}

Whenever we use \cref{thm:amp}, we will let $Q$ be a uniformly random subset of $[n]$ of size exactly $q$, for some desired size parameter $q \in [n]$. We will let $Q_{-i}$ be the result of replacing the element $i$ (if present in $Q$) with a uniformly random element in $[n] \setminus Q$, as described in \cite{steinke2022composition}.

\section{Finding Structured Symmetry}
\label[section]{sec:symmetry-thm}

In this section, we prove \cref{thm:main-hard,thm:main-easy}, which comprise our main equivalence between efficient testability and structured symmetry. As discussed in the introduction, this result strengthens the main theorem of \cite{blais2019testable} by capturing the relationship between the tester's complexity and the structure of the property's invariant sets. For convenience, we recall the statements of these two theorems below. The first, \cref{thm:main-hard}, is the harder of the two, showing that efficient testing implies structured symmetry.

\mainhard*

Like the main theorem of \cite{blais2019testable}, our \cref{thm:main-hard} is most interesting for properties testable from at most $O(\log n)$ samples, where $2^n$ is the domain size. We will extend our results to properties testable with larger sample sizes in subsequent sections. The second theorem, \cref{thm:main-easy}, establishes the converse to \cref{thm:main-hard} by showing that structured symmetry implies efficient testing. 

\maineasy*

To see why \cref{thm:main-easy} is indeed a converse to \cref{thm:main-hard}, note that if $\cP \subseteq \cQ \subseteq \cP_{\eps_1}$ for some $\eps_1 > 0$, then $\cQ_{\eps_2} \subseteq \cP_{\eps_1 + \eps_2}$ for any $\eps_2 > 0$. Consequently, any tester for $\cQ$ with proximity $\eps_2$ is also a tester for $\cP$ with proximity $\eps_1 + \eps_2$. Thus, when combined, \cref{thm:main-hard,thm:main-easy} imply that a property of Boolean functions is constant-sample testable with a small circuit if and only if it is close to having constant-part structured symmetry.

While the proof of \cref{thm:main-easy} is quite simple, the proof of \cref{thm:main-hard} is not, requiring two key lemmas. We call them the \emph{oracle simulation} lemma and the \emph{tester simulation} lemma, and we prove them in \cref{sec:oracle-sim,sec:tester-sim}, respectively. While both lemmas involve regular simulators, they differ in terms of which objects play the roles of the target function and distinguisher family. For the oracle simulation lemma, the target function is the one we wish to test for $\cP$, which defines the sample-generating oracle, and the distinguishers are derived from the tester. For the tester simulation lemma, the tester plays the role of the target function, and the distinguisher family comprises low-complexity approximations to the functions to be tested for $\cP$. In \cref{sec:constant-sample-proofs}, we combine these lemmas to prove \cref{thm:main-hard}, and we also prove \cref{thm:main-easy}.

\subsection{Simulating the Oracle}
\label[section]{sec:oracle-sim}

In this section, we consider the probability that an $m$-sample tester $T$ accepts or rejects inputs generated by any particular sample-generating oracle, which is defined by a distribution $\cD \in \Delta(\cX)$ and a target function $f : \cX \to [0, 1]$. We show that this probability remains roughly the same upon replacing $f$ with any $(\cR(T), \delta)$-regular simulator $\tilde{f}$, where $\cR(T)$ is the simple distinguisher family described in \cref{def:restrictions} below. Specifically, $\cR(T)$ is a family of \emph{one-way restrictions} of $T$, obtained by hard-wiring fixed values for all but one of its inputs. Later, when we take $T$ to be a size-$s$ circuit, it will be clear that every distinguisher of this form is also a size-$s$ circuit.

The main result of this section, which we state formally in \cref{thm:oracle-sim}, is significant because of the way it facilitates the construction of additional property testers. To illustrate this informally, suppose that a property is testable by a simple circuit $T$ and that we would like to argue that some other function $\widetilde{T}$ is \emph{also} a valid tester. One way to do this would be to argue that $\widetilde{T}$ behaves similarly to $T$ for all possible sample-generating oracles, but this may be difficult. \Cref{thm:oracle-sim} shows it suffices to check that $\widetilde{T}$ behaves similarly to $T$ for sample-generating oracles defined by \emph{low-complexity} target functions $\tilde{f}$. We note, however, that the result of this section will be stated without the language of property testing, as it depends only on the function $T$.

Before proceeding with the statement and proof, we first define the relevant distinguisher class. The definition is precisely the one needed to enable a certain hybrid argument that we plan to carry out. We state the definition in a general manner so that it is applicable to both the actual tester $T : (\cX \times \{0, 1\})^m \times \{0, 1\}^\ell \to \{0, 1\}$ and its mean function $\bar{T} : (\cX \times \{0, 1\})^m \to [0, 1]$, which averages over the choice of the random seed $r\in\{0,1\}^\ell$.

\begin{definition}[Restriction Distinguishers]
\label[definition]{def:restrictions}
    Given a function $T : (\cX \times \{0, 1\})^m \times \{0, 1\}^\ell \to [0, 1]$, consider the function $T_{x_{\neq i}, y, r} : \cX \to [0, 1]$ which hard-wires all inputs to $T$ except for $x_i$:
    \[
        T_{x_{\neq i}, y, r}(x) = T\bigl((x_1, y_1), \ldots, (x_{i-1}, y_{i-1}), (x, y_i), (x_{i+1}, y_{i+1}), \ldots, (x_m, y_m), r \bigr).
    \]
    We define $\cR(T)$ to be the set of these \emph{one-way restriction} functions $T_{x_{\neq i}, y, r}$ for all indices $i \in [m]$, sequences $x_{\neq i}$ comprising values $x_j \in \cX$ for each $j \neq i$, labels $y \in \{0, 1\}^m$, and seeds $r \in \{0, 1\}^\ell$.
\end{definition}

In the special case that $T$ is Boolean-valued, each function $T_{x_{\neq i}, y, r}$ is also Boolean-valued. With this definition in hand, we are now ready to state the main result of this section:

\begin{lemma}[Oracle Simulation]
\label[lemma]{thm:oracle-sim}
    Fix a function $T : (\cX \times \{0, 1\})^m \times \{0, 1\}^\ell \to [0, 1]$, a distribution $\cD \in \Delta(\cX)$, and $\delta > 0$. If $\tilde{f}$ is an $(\cR(T), \delta)$-regular simulator for  $f : \cX \to [0, 1]$, then
    \[
        \Bigl\lvert \E\bigl[T(x, y, r)\bigr] - \E\bigl[T(x, \tilde{y}, r)\bigr] \Bigr\rvert \le 2m\delta,
    \]
    where $x_i \sim \cD$ and $y_i|x_i \sim \cB(f(x_i))$ and $\tilde{y}_i|x_i \sim \cB(\tilde{f}(x_i))$ for each $i \in [m]$ and $r \sim \{0, 1\}^\ell$.
\end{lemma}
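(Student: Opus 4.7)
The plan is to use a standard hybrid argument over the $m$ coordinates of the label vector, transitioning one label at a time from the distribution induced by $f$ to the distribution induced by $\tilde f$. For $j \in \{0, 1, \ldots, m\}$, define the hybrid $H_j$ as follows: draw $x_1, \ldots, x_m \sim \cD$ independently, draw $r \sim \{0,1\}^\ell$, and draw labels $u_1, \ldots, u_m$ conditionally independently given the $x_i$'s, with $u_i \sim \cB(\tilde f(x_i))$ for $i \le j$ and $u_i \sim \cB(f(x_i))$ for $i > j$. Let $p_j = \E[T(x, u, r)]$ in hybrid $H_j$. Then $p_0$ and $p_m$ are precisely $\E[T(x, y, r)]$ and $\E[T(x, \tilde y, r)]$, so by the triangle inequality it suffices to establish $|p_j - p_{j+1}| \le 2\delta$ for each $j \in \{0, \ldots, m-1\}$.

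To bound one step of the hybrid, I condition on all randomness except the single label $u_{j+1}$. For any fixed value $p$, the conditional expectation $\E_{u_{j+1} \sim \cB(p)}[T(x,u,r)]$ is the linear function $T_0 + p\,(T_1 - T_0)$, where $T_b$ denotes $T$ evaluated with $u_{j+1}$ hard-wired to $b$. Hybrids $H_j$ and $H_{j+1}$ differ only in whether $p = f(x_{j+1})$ or $p = \tilde f(x_{j+1})$, so their conditional difference is exactly $(f(x_{j+1}) - \tilde f(x_{j+1}))(T_1 - T_0)$. Averaging over $x_{j+1} \sim \cD$ with the other randomness still fixed, this becomes
\[
    \E_{x_{j+1} \sim \cD}\Bigl[\bigl(f(x_{j+1}) - \tilde f(x_{j+1})\bigr)\bigl(T_{x_{\neq (j+1)}, u^{(1)}, r}(x_{j+1}) - T_{x_{\neq (j+1)}, u^{(0)}, r}(x_{j+1})\bigr)\Bigr],
\]
where $u^{(b)}$ denotes the current label vector with coordinate $j+1$ overwritten by $b$. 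By \cref{def:restrictions}, each restriction $T_{x_{\neq (j+1)}, u^{(b)}, r}$ lies in $\cR(T)$, so invoking the $(\cR(T), \delta)$-regularity of $\tilde f$ on each of the two terms separately bounds the displayed quantity by $2\delta$ in absolute value, uniformly over the conditioning. Taking the outer expectation over the remaining randomness preserves this pointwise bound, giving $|p_j - p_{j+1}| \le 2\delta$.

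Summing over $j = 0, \ldots, m-1$ yields $|p_0 - p_m| \le 2m\delta$, as required. The argument is essentially mechanical once the hybrid is set up; the only point requiring a little care is the bookkeeping that certifies that each of the two restrictions arising in the step bound is actually a function of $x_{j+1}$ alone with codomain matching $T$, so that it legitimately appears in $\cR(T)$ and the indistinguishability hypothesis can be applied. I anticipate no serious obstacle.
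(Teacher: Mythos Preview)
Your proposal is correct and follows essentially the same approach as the paper: a hybrid argument over the $m$ label coordinates, conditioning on all but the changing coordinate, expanding the Bernoulli expectation linearly, and applying the $(\cR(T),\delta)$-regularity hypothesis separately to the two one-way restrictions $T_{x_{\neq i}, u^{(1)}, r}$ and $T_{x_{\neq i}, u^{(0)}, r}$ to obtain the per-step bound $2\delta$. The only differences from the paper's presentation are cosmetic (the direction of the hybrid interpolation is reversed, and the conditioning is staged slightly differently).
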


A remark is in order regarding the random variables $y_i$ and $\tilde{y}_i$. In the language of \emph{Outcome Indistinguishability (OI)} \cite{dwork2021outcome} from the algorithmic fairness literature, $y_i$ and $\tilde{y}_i$ correspond precisely to \emph{real} and \emph{modeled} outcomes of an individual represented by the features $x_i$. From this point of view, $f$ defines the ground-truth or Bayes optimal conditional probability distribution, and $\tilde{f}$ corresponds to a predictor of $f$ satisfying \emph{no-access OI} with respect to $\cR(T)$.

The proof of \cref{thm:oracle-sim} involves two simple components. The first component is a hybrid argument, similar to one used in recent work on multicalibration-based characterizations of the indistinguishability of product distributions \cite{marcussen2024characterizing}. The other component is a standard transformation between distinguishers that receive a labeled or unlabeled input.

\begin{proof}
    First, for each integer $1 \le i \le m$, define
    \[
        T_i(x,y) = T\bigl((x_1, y_1), \ldots, (x_{i-1}, y_{i-1}), (x, y), (x_{i+1}, \tilde{y}_{i+1}), \ldots, (x_m, \tilde{y}_m), r\bigr).
    \]
    Observe that the function $T_i : \cX \times \{0, 1\} \to \{0, 1\}$ depends on the values of $x_j$ for all $j \neq i$, the values of $y_j$ for all $j < i$, the values of $\tilde{y}_j$ for all $j > i$, and the tester's random seed $r$. Next, for each $0 \le i \le m$, let the (deterministic) scalar $a_i$ denote the expected output of the tester $T$ under the \emph{$i$\textsuperscript{th} hybrid distribution}, in which the first $i$ labels are real and the remaining $m-i$ labels are modeled. More formally, for $1 \le i \le m-1$, we define
    \[
        a_i = \E\bigl[T_i(x_i, y_i)\bigr] = \E\bigl[T_{i+1}(x_{i+1}, \tilde{y}_{i+1})\bigr].
    \]
    When $i = m$, we define $a_m$ via the first of these two expressions, and when $i = 0$, we define $a_0$ via the second of these two expressions. Ultimately, our goal is to bound $\abs{a_m - a_0} \le 2m\delta$. By the triangle inequality, it suffices to show that $\abs{a_i - a_{i-1}}\le 2\delta$ for each index $i \in [m]$. For this, we condition on $x_j$ for all $j \neq i$ and $y_j$ for all $j < i$ and $\tilde{y}_j$ for all $j > i$ (i.e. everything except for $x_i$, $y_i$ and $\tilde{y}_i$):
    \[
        \abs{a_i - a_{i-1}} \le \E\Bigl\lvert\E\bigl[T_i(x_i, y_i) - T_i(x_i, \tilde{y}_i)\,\big|\,x_{\neq i}, y_{<i}, \tilde{y}_{>i},r\bigr]\Bigr\rvert.
    \]
    Note that the inner expectation is \emph{only} over the randomness in the $i$\textsuperscript{th} coordinate (i.e. $x_i$, $y_i$, and $\tilde{y}_i$), since we have conditioned on everything else. The outer expectation is over the randomness of these other variables, namely $x_{\neq i}$, $y_{<i}$, $\tilde{y}_{>i}$ and $r$.
    
    At this point, we have computed an upper bound on $\abs{a_i - a_{i-1}}$ in terms of the distinguishing advantage of a circuit $T_i$ derived from hard-wiring all but two of the inputs to $T$. However, we are not quite done until we have hard-wired all but one input, since this is the form required by $\cR(T)$. To this end, we apply a standard transformation to $T_i$. First, since $y_i|x_i \sim \cB(f(x_i))$, we rewrite the conditional expectation of $T_i(x_i, y_i)$ in terms of $f(x_i)$ instead of $y_i$:
    \[
        \E\bigl[T_i(x_i, y_i)\bigr] = \E\bigl[T_i(x_i, 0) + (T_i(x_i,1) - T_i(x_i,0))f(x_i)\bigr].
    \]
    Since $\tilde{y}_i|x_i \sim \cB(\tilde{f}(x_i))$, a similar formula holds for $\tilde{y}_i$ with $\tilde{f}$ in place of $f$:
    \[
        \E\bigl[T_i(x_i, \tilde{y}_i)\bigr] = \E\bigl[T_i(x_i, 0) + (T_i(x_i,1) - T_i(x_i,0))\tilde{f}(x_i)\bigr].
    \]
    We observe that these two formulas remain true even if we condition on the values of $x_i$, $\,x_{\neq i}$, $y_{<i}$, $\tilde{y}_{>i}$, and $r$. Therefore, subtracting the two equations yields
    \[
        \abs{a_i - a_{i-1}} \le \E\Bigl\lvert\E\bigl[\bigl(T_i(x_i, 1) - T_i(x_i, 0)\bigr)\bigl(f(x_i)-\tilde{f}(x_i)\bigr)\,\big|\,x_{\neq i}, y_{<i}, \tilde{y}_{>i},r\bigr]\Bigr\rvert.
    \]
    The functions $T_i(x, 1)$ and $T_i(x, 0)$ clearly belong to $\cR(T)$. 
    Since $\tilde{f}$ is an $(\cR(T), \delta)$-regular simulator for $f$, we have $\abs{a_i - a_{i-1}} \le 2\delta$, so we conclude that $\abs{a_m - a_0} \le 2m\delta$.
\end{proof}

\subsection{Simulating the Tester}
\label[section]{sec:tester-sim}

In this section, we show that replacing $T$ with a suitable \emph{supersimulator} $\widetilde{T}$ only slightly affects our chance of accepting or rejecting any sufficiently low-complexity sample-generating oracle. This promise problem, in which we assume that the sample-generating oracle has low complexity, is a natural variant of the property testing framework defined in \cref{sec:preliminaries}. It is motivated by our result from the preceding section, which showed that for some purposes, it suffices to restrict attention to oracles defined by low-complexity target functions. Later, in \cref{sec:constant-sample-proofs}, we will show how to use $\widetilde{T}$ to establish the structured symmetry of a property under consideration.

The main result of this section, which we state formally in \cref{thm:tester-sim}, will require the notion of \emph{consistency indicators} of a distinguisher family.

\begin{definition}[Consistency Indicators]
\label[definition]{def:ci}
    Given a function family $\cF \subseteq \{\cX \to [0, 1]\}$, consider the family of \emph{consistency indicators} $\Gamma_m(\cF) \subseteq \{(\cX \times \{0, 1\})^m \to \{0, 1\}\}$, which take as input $m$ labeled pairs $(x_i, y_i)$ and checks whether they are all consistent with some function in $\cF$, after thresholding:
    \[
        \Gamma_m(\cF) = \Bigl\{(x, y) \mapsto \bm{1}\bigl[\forall i \in [m],\, y_i = \bm{1}[f(x_i) \ge t_i]\bigr]  \,\Big|\, f \in \cF,\,t_1, \ldots, t_m \in \R\Bigr\}.
    \]
    As usual, $(x, y)$ is our abbreviation for the $m$-tuple of pairs $(x_i, y_i)$.    
\end{definition}

Much like our proof of the oracle simulation lemma in \cref{sec:oracle-sim}, our proof of \cref{thm:tester-sim} is based on indistinguishability. While we will eventually take the simulator in \cref{sec:oracle-sim} to be the one provided by the complexity-theoretic regularity lemma, the simulator in the present section will eventually come from the supersimulators lemma. Interestingly, in \cref{sec:oracle-sim}, the oracle played the role of the object to be simulated, and the property tester played the role of the distinguisher. In contrast, in this section, the tester shall play the role of the object to be simulated, and the oracle shall play the role of the distinguisher. 

\begin{lemma}[Tester Simulation]
\label[lemma]{thm:tester-sim}
    Fix a distribution $\cD \in \Delta(\cX)$ and a family $\cF \subseteq \{\cX \to [0, 1]\}$. If $\widetilde{T}$ is a $(\Gamma_m(\cF), \gamma)$-regular simulator for $\bar{T} : (\cX \times \{0, 1\})^m \to [0, 1]$, then for all $\tilde{f} \in \cF$,
    \[
        \Bigl\lvert\E\bigl[\bar{T}(x, \tilde{y})\bigr] - \E\bigl[\widetilde{T}(x, \tilde{y})\bigr]\Bigr\rvert \le 2^m \gamma,
    \]
    where $x_1, \ldots, x_m \iid \cD$ and $\tilde{y}_i|x_i \sim \cB(\tilde{f}(x_i))$.
\end{lemma}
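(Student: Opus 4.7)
The plan is to translate the desired bound on the difference of expectations under the Bernoulli sampling distribution into an indistinguishability statement under a simpler ``reference'' distribution where the labels are uniform. Let $D_{\tilde f}$ denote the distribution on $(\cX \times \{0,1\})^m$ where $x_i \sim \cD$ and $\tilde y_i \mid x_i \sim \cB(\tilde f(x_i))$, and let $\cU$ denote the distribution where $x_i \sim \cD$ and $y_i \sim \cB(1/2)$ independently. The goal is to express $\E_{D_{\tilde f}}[F]$, for $F = \bar T - \widetilde T$, as a $2^m$-scaled average of expectations $\E_{\cU}[\phi \cdot F]$ for suitable $\phi \in \Gamma_m(\cF)$, and then invoke the hypothesized indistinguishability of $\bar T$ and $\widetilde T$ under $\cU$.

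To build such a representation, I would reparametrize the Bernoulli draws via uniform thresholds. Let $u_1, \ldots, u_m$ be i.i.d.\ uniform on $[0,1]$ and independent of $x$. Then $\bm{1}[\tilde f(x_i) \ge u_i] \mid x_i$ is distributed as $\cB(\tilde f(x_i))$, so the random vector $\hat y(x,u)$ with $\hat y_i(x,u) = \bm{1}[\tilde f(x_i) \ge u_i]$ makes $(x, \hat y(x,u))$ equal in law to $(x, \tilde y) \sim D_{\tilde f}$. For each fixed $u \in [0,1]^m$, define
\[
    \phi_u(x, y) = \bm{1}\bigl[\forall i \in [m],\, y_i = \bm{1}[\tilde f(x_i) \ge u_i]\bigr],
\]
which lies in $\Gamma_m(\cF)$ since $\tilde f \in \cF$ and the thresholds $t_i = u_i$ may be arbitrary reals. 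For any $F : (\cX \times \{0,1\})^m \to \R$, the indicator $\phi_u$ pins down the single label vector $y = \hat y(x,u)$ out of the $2^m$ equally weighted vectors under $\cU$, which gives
\[
    \E_{\cU}[\phi_u(x,y) \, F(x,y)] = 2^{-m}\, \E_{x \sim \cD^m}\bigl[F(x, \hat y(x,u))\bigr].
\]
Averaging over $u$ and swapping order of integration, together with the equality in distribution between $(x, \hat y(x,u))$ and $(x, \tilde y)$, yields the key identity
\[
    \E_u \E_{\cU}[\phi_u \cdot F] = 2^{-m}\, \E_{D_{\tilde f}}[F].
\]

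Applying this with $F = \bar T - \widetilde T$, then invoking the triangle inequality and the $(\Gamma_m(\cF), \gamma)$-indistinguishability of $\bar T$ and $\widetilde T$ for each fixed $u$, completes the argument:
\[
    \bigl\lvert\E_{D_{\tilde f}}[\bar T - \widetilde T]\bigr\rvert = 2^m \bigl\lvert \E_u \E_{\cU}[\phi_u \cdot (\bar T - \widetilde T)]\bigr\rvert \le 2^m \E_u \bigl\lvert \E_{\cU}[\phi_u \cdot (\bar T - \widetilde T)]\bigr\rvert \le 2^m \gamma.
\]

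The main obstacle is conceptual: finding the right reference distribution and reparametrization so that the product-Bernoulli structure of $D_{\tilde f}$ unfolds into an average of indicators that genuinely lie in $\Gamma_m(\cF)$. The uniform-threshold trick is exactly what makes the Bernoulli sampling expressible via the thresholding operation baked into \cref{def:ci}. The factor $2^m$ then emerges naturally from the $2^m$-fold coarsening caused by replacing the uniform $y$ with an indicator pinning down one outcome; it also explains why the lemma is only useful when $m$ is comparably small relative to $1/\gamma$, mirroring the final sample-complexity dependence of \cref{thm:main-hard}.
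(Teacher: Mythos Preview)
Your proof is correct and essentially identical to the paper's own argument: both reparametrize the Bernoulli labels via uniform thresholds, recognize that the resulting consistency indicator $\phi_u$ (the paper's $\bm{1}[\tilde y = z]$) lies in $\Gamma_m(\cF)$, extract the $2^m$ factor from the uniform-label reference measure, and then apply indistinguishability pointwise in the threshold vector before averaging. The only difference is cosmetic---you work directly with $F = \bar T - \widetilde T$ and phrase the identity as $\E_u\E_{\cU}[\phi_u F] = 2^{-m}\E_{D_{\tilde f}}[F]$, whereas the paper states the pointwise identity $(*)$ first and then averages.
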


\Cref{thm:tester-sim} can be viewed as a generalization of a technical result from \cite{blais2019testable}. While our result applies to a general family $\cF \subseteq \{\cX \to [0, 1]\}$, the technical result from \cite{blais2019testable} corresponds to the special case of the class $\cF_0 = \{\cX \to \{0, 1\}\}$. Note that $\cF$ may contain $[0, 1]$-valued functions, which is why we need to incorporate the notion of consistency indicators $\Gamma_m$ after thresholding. These considerations were not needed for the case of $\cF_0$, which contains only Boolean functions.

While \cite{blais2019testable} obtain their regular approximation to $T$ using a hypergraph regularity lemma, we will eventually acquire $\widetilde{T}$ from the supersimulators lemma, thus circumventing the hypergraph-based formalism.

\begin{proof}[Proof of \Cref{thm:tester-sim}]
    We first rewrite the labels $\tilde{y}_i$ in a convenient form. Specifically, we write
    \[
        \tilde{y}_i = \bm{1}[\tilde{f}(x_i) \ge t_i]
    \]
    for a sequence of uniformly random thresholds $t_1, \ldots, t_m \sim [0, 1]$, which are independent of each other and $x_1, \ldots, x_m$. Since each $t_i$ is uniform, it is clear that $\tilde{y}_i|x_i \sim \cB(\tilde{f}(x_i))$, as required. Next, we relate $\bar{T}(x, \tilde{y})$ to $\bar{T}(x, z)$, where $z_1, \ldots, z_m \in \{0, 1\}$ are independent and uniformly random labels. To do so, let us condition on the values of $x$ and $t$. Then, there is a $2^{-m}$ probability that $z_i = \tilde{y}_i$ for all indices $i \in [m]$. Therefore, if we condition on the values of $x$ and $t$, we have
    \begin{equation}
    \label{eq:conditioning}
        \bar{T}(x, \tilde{y}) = 2^m \cdot \E_{z}\bigl[\bar{T}(x, z)\, \bm{1}[\tilde{y} = z] \,\big|\, x, t\bigr].
    \end{equation}
    Next, observe that the indicator function $\bm{1}[\tilde{y} = z]$ can be written as $\bm{1}[z = \bm{1}[\tilde{f}(x) \ge t]]$. (Here, the condition $\tilde{f}(x) \ge t$ should be read coordinate-wise.) When viewed as a function of $x$ and $z$, this function belongs to the distinguisher family $\Gamma_m(\cF)$, by definition. Note that this is a \emph{random function}, which depends on the thresholds $t$. By assumption, $\widetilde{T}$ is a $(\Gamma_m(\cF), \gamma)$-regular simulator for $\bar{T}$. Therefore, taking the expectation over $x$,
    \[
        \Bigl\lvert\E_{x,z}\bigl[\bar{T}(x, z) \bm{1}[\tilde{y} = z] \,\big|\, t\bigr] - \E_{x,z}\bigl[\widetilde{T}(x, z) \bm{1}[\tilde{y} = z] \,\big|\, t\bigr]\Bigr\rvert \le \gamma.
    \]
    Finally, taking the expectation over $t$ and using equation \eqref{eq:conditioning} yields
    \[
        \Bigl\lvert\E_{x,t}\bigl[\bar{T}(x, \tilde{y}) \bigr] - \E_{x,t}\bigl[\widetilde{T}(x, \tilde{y})\bigr]\Bigr\rvert \le 2^m\gamma. \qedhere
    \]
\end{proof}

\subsection{Constructing the Partition}
\label[section]{sec:constant-sample-proofs}

In this section, we prove \cref{thm:main-hard,thm:main-easy}. First, we instantiate \cref{thm:oracle-sim} (oracle simulation) with \cref{thm:ttv} (complexity-theoretic regularity). Next, we instantiate \cref{thm:tester-sim} (tester simulation) with \cref{thm:super} (supersimulators). Finally, we turn our attention to small circuit testers and use the fact that operations like restrictions and thresholding barely increase circuit size.

\mainhard*

\begin{proof}
    Let $T : (\cX \times \{0, 1\})^m \times \{0, 1\}^\ell \to \{0, 1\}$ be a valid tester for $\cP$. This means that $T$ outputs \textsc{Accept} with probability at least $2/3$ when $f \in \cP$ and outputs \textsc{Reject} with probability at least $2/3$ when $f$ is $\eps$-far from $\cP$. As usual, these probabilities are with respect to independent samples $x_i \sim \cD$ with labels $y_i = f(x_i)$, and a uniform $\ell$-bit random seed $r \in \{0, 1\}^\ell$. As usual, let $\bar{T}(x, y)$ denote the expected value of $T(x, y, r)$ over $r$. We proceed in several steps.
    
    \paragraph{Step 1: Constructing the Supersimulator}
    In order to approximate $\cP$ by a property $\cQ$ with structured symmetry, we will first construct a supersimulator $\widetilde{T}$ for $T$, and then extract the desired partition from the inner structure of $\widetilde{T}$. To this end, fix $\delta,\gamma > 0$. Let $\cG$ be the growth function that takes as input a function $T' : (\cX \times \{0, 1\})^m \to [0, 1]$ and outputs the distinguisher family
    \[
        \cG(T') = \Gamma_m\Bigl(\cS_{<(2/\delta^2),(\delta/2)}\bigl(\cR(T) \cup \cR(T')\bigr)\Bigr).
    \]
    By \cref{thm:super} (supersimulators), $\bar{T}$ has a $(\cG(\widetilde{T}), \gamma)$-regular supersimulator
    \[
        \widetilde{T} \in \cS_{<(2/\gamma^2),(\gamma/2)}(\cG).
    \]
    By \cref{thm:ttv} (complexity-theoretic regularity), $f$ has an $(\cR(T) \cup \cR(\widetilde{T}), \delta)$-regular simulator
    \[
        \tilde{f} \in \cS_{<(2/\delta^2),(\delta/2)}\bigl(\cR(T) \cup \cR(\widetilde{T})\bigr).
    \]
    As usual, we let $\tilde{y}_i$ denote the modeled labels generated from $\tilde{f}$, which means that $\tilde{y}_i|x_i \sim \cB(\tilde{f}(x_i))$. 

    \paragraph{Step 2: Applying the Two Key Lemmas}
    Next, we will use \cref{thm:oracle-sim} (oracle simulation) and \cref{thm:tester-sim} (tester simulation) to show that $T$ and $\widetilde{T}$ have similar probabilities of outputting \textsc{Accept} regardless of the labeling function $f : \cX \to \{0, 1\}$. To make the argument more concise, we will write $a \approx_\rho b$ if two scalars $a,b \in \R$ differ in absolute value by at most $\rho$. First, since $\tilde{f}$ is a $(\cR(T), \delta)$-regular simulator for $f$, \cref{thm:oracle-sim} implies
    \[
        \E\bigl[T(x, y, r)\bigr] \approx_{2m \delta} \E\bigl[T(x, \tilde{y}, r)\bigr] = \E\bigl[\bar{T}(x, \tilde{y})\bigr].
    \]
    Next, since $\widetilde{T}$ is a $(\cG(\widetilde{T}), \gamma)$-regular simulator for $\bar{T}$, \cref{thm:tester-sim} implies
    \[
        \E\bigl[\bar{T}(x, \tilde{y})\bigr] \approx_{2^m\gamma} \E\bigl[\widetilde{T}(x, \tilde{y})\bigr].
    \]
    Finally, since $\tilde{f}$ is a $(\cR(\widetilde{T}), \delta)$-regular simulator for $f$, applying \cref{thm:oracle-sim} again yields
    \[
        \E\bigl[\widetilde{T}(x, \tilde{y})\bigr] \approx_{2m\delta} \E\bigl[\widetilde{T}(x, y)\bigr].
    \]
    Combining these three steps, we deduce that 
    \[
        \E\bigl[T(x, y, r)\bigr] \approx_{4m\delta + 2^m\gamma} \E\bigl[\widetilde{T}(x, y)\bigr].
    \]
    In other words, the probability that $T$ outputs \textsc{Accept} differs from the expected output of $\widetilde{T}$ by at most $4m\delta + 2^m\gamma$, which is less than $1/6$ for appropriately chosen $\delta = \Theta(1/m)$ and $\gamma = \Theta(1/2^m)$.

    \paragraph{Step 3: Defining the New Property}
    Having related $\widetilde{T}$ to $T$, we define the property
    \[
        \cQ = \biggl\{f : \cX \to \{0, 1\} \;\bigg|\; \E\bigl[\widetilde{T}(x, y)\bigr] \ge \frac{1}{2} \biggr\},
    \]
    where, as usual, the expectation is over $m$ independent samples $x_i \sim \cD$, where $y_i = f(x_i)$. In other words, we say that a function $f$ has the property $\cQ$ if the expected output of the simulated tester $\widetilde{T}$ is at least $1/2$ on samples labeled by $f$.
    
    We claim that $\cP \subseteq \cQ \subseteq \cP_\eps$. Indeed, this follows immediately from the previously established relationship between $T$ and $\widetilde{T}$. In slightly more detail, suppose that $f \in \cP$. Then, since $T$ is a valid tester for $\cP$, it outputs \textsc{Accept} with probability at least $2/3$. Consequently, the expected output of $\widetilde{T}$ is at least $2/3 - 1/6 = 1/2$. Thus, $f \in \cQ$. A similar argument shows that $\cQ \subseteq \cP_\eps$. At this point, we have shown that $\cP \subseteq \cQ \subseteq \cP_\eps$, where membership in $\cQ$ depends only on the probability of acceptance by $\widetilde{T}$.
    
    \paragraph{Step 4: Defining the Partition} Finally, we show that $\cQ$ exhibits the desired structured symmetry. For this, recall that the supersimulator lemma states that $\widetilde{T}$ has the form
    \[
        \widetilde{T} = \widetilde{T}_k = \biggl[\frac{\gamma}{2}\bigl(F_1 + \cdots + F_{k}\bigr)\biggr]_0^1
    \]
    for some $k < 2/\gamma^2$, where each term $F_j$ belongs to $\pm \cG(\widetilde{T}_{j-1})$. In other words, $F_j$ is a (possibly negated) consistency indicator for some function
    \[
        f_j \in \cS_{<(2/\delta^2),(\delta/2)}\bigl(\cR(T) \cup \cR(\widetilde{T}_{j-1})\bigr).
    \]
    Consequently, the expected output of $\widetilde{T}$ depends only on the density of the labeling function on the following collection of sets, for various indices $i \in [m]$ and $j \in [k]$ and fixed thresholds $t_{ij} \in [0, 1]$:
    \[
        S_{ij} = \{x \in \cX \mid f_j(x) \ge t_{ij}\}.
    \]
    Taking all $2^{mk} \le 2^{2^{O(m)}}$ intersections of the sets $S_{ij}$ yields a partition witnessing the $2^{2^{O(m)}}$-part symmetry of $\cQ$.
    
    Finally, we address the partition complexity of $\cQ$. For this, let $\cD$ be the uniform distribution on $\cX = \{0, 1\}^n$ and suppose that $T$ is computable by a circuit of size $s$. In this case, the restrictions $\cR(T)$ obtained by hard-wiring various inputs to $T$ are also computable by circuits of size $s$. Therefore, the subsequent \cref{thm:circuit-counting} (with our choices of $\delta = \Omega(1/m)$, $\gamma = \Omega(1/2^m)$, and $k = O(1/\gamma^2)$) shows that $\cQ$ has partition complexity at most $2^{O(m)}s$, concluding the proof of \cref{thm:main-hard}.
\end{proof}

\begin{lemma}[Counting Circuit Gates]
\label[lemma]{thm:circuit-counting}
    Fix $\delta,\gamma>0$, functions $f_j: \cX \to [0, 1]$, thresholds $t_{ij} \in [0, 1]$, and signs $\sigma_j \in \{-1, +1\}$ for indices $i \in [m]$ and $j \in [k]$. Consider any Boolean-valued function $T : (\cX \times \{0, 1\})^m \times \{0, 1\}^\ell \to \{0, 1\}$ and let the function $\widetilde{T}_k : (\cX \times \{0, 1\})^m \to [0, 1]$ be
    \[
        \widetilde{T}_k(x, y) = \biggl[\frac{\gamma}{2}\sum_{j=1}^k \sigma_j \bm{1}\bigl[\forall i \in [m], \, y_i = \bm{1}[f_j(x_i) \ge t_{ij}]\bigr]\biggr]_0^1.
    \]
    Let $\widetilde{T}_0 = 0$. If $f_j \in \cS_{<(2/\delta^2),(\delta/2)}(\cR(T) \cup \cR(\widetilde{T}_{j-1}))$ for each index $j \in \N$, then there exists a circuit, which we call the \emph{classifier circuit for $\widetilde{T}_k$}, which has the following properties:
    \begin{itemize}
        \item The classifier receives as input $p = \mathrm{poly}(mk\log(1/\gamma)/\delta)$ Boolean values $r_j(x)$ for some restriction functions $r_1, \ldots, r_p \in \cR(T)$ evaluated at some particular point $x \in \cX$;
        \item The classifier uses $q = \mathrm{poly}(mk\log(1/\gamma)/\delta)$ Boolean circuit gates;
        \item The classifier outputs the $mk$ Boolean values $\bm{1}[f_j(x) \ge t_{ij}]$ for all $i \in [m]$ and $j \in [k]$.
    \end{itemize}
\end{lemma}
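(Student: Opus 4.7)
The plan is to construct the classifier by induction on $j \in \{0, 1, \ldots, k\}$, maintaining the invariant that after stage $j$ our sub-circuit outputs the $mj$ Boolean values $\bm{1}[f_{j'}(x) \ge t_{i, j'}]$ for all $i \in [m]$ and $j' \le j$, using at most $p_j$ input bits drawn from $\cR(T)$-evaluations at $x$ and at most $q_j$ Boolean gates. Before beginning the induction, I would fix the relevant notion of precision. A restriction of $\widetilde{T}_{j'}$ evaluated at a point in $\cX$ is $\gamma/2$ times a signed sum of at most $k$ Boolean consistency indicators, projected onto $[0, 1]$, so it takes values in the rational set $\{0, \gamma/2, \gamma, \ldots, 1\}$ and is representable in $O(\log(1/\gamma))$ bits. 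A function $f_j \in \cS_{<(2/\delta^2),(\delta/2)}(\cR(T) \cup \cR(\widetilde{T}_{j-1}))$ is $\delta/2$ times a signed sum of at most $2/\delta^2$ terms, each either a Boolean from $\cR(T)$ or a rational of the above form. Hence $f_j(x)$ is representable in $O(\log(1/\delta\gamma))$ bits, and each threshold $t_{ij}$ may be rounded to this precision without changing any test $\bm{1}[f_j(x) \ge t_{ij}]$.

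For the inductive step, I would evaluate $f_j(x) = \bigl[(\delta/2) \sum_{\ell} \sigma_{j,\ell}\, g_{j,\ell}(x)\bigr]_0^1$ by processing each of its at most $2/\delta^2$ constituent terms. If $g_{j,\ell} \in \cR(T)$, I simply add $g_{j,\ell}(x)$ as a fresh input bit to the classifier, contributing one to $p$. The crucial case is $g_{j,\ell} \in \cR(\widetilde{T}_{j-1})$: here the restriction fixes an index $i^* \in [m]$ together with values $x_{\neq i^*}$ and $y$, and varies the $i^*$-th $\cX$-coordinate. Each consistency indicator inside $\widetilde{T}_{j-1}$ factors as a product of $m$ tests $\bm{1}[y_{i'} = \bm{1}[f_{j'}(x_{i'}) \ge t_{i', j'}]]$ over $i' \in [m]$; the factors for $i' \neq i^*$ are constants (depending only on the fixed $x_{i'}$ and $y_{i'}$) that can be precomputed offline and hard-wired into the circuit, while the factor for $i' = i^*$ is exactly $\bm{1}[y_{i^*} = \bm{1}[f_{j'}(x) \ge t_{i^*, j'}]]$ for some $j' < j$. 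By the inductive hypothesis, the inner Boolean is already available from our sub-circuit, so evaluating $g_{j,\ell}(x)$ reduces to fixed-precision arithmetic on at most $k$ previously-computed Booleans together with hard-wired constants, using $\mathrm{poly}(k \log(1/\gamma))$ gates.

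Having computed each term $g_{j,\ell}(x)$ to $O(\log(1/\delta\gamma))$ bits of precision, I would sum the $K_j \le 2/\delta^2$ terms with signs, scale by $\delta/2$, project onto $[0, 1]$, and finally compare against each of the $m$ rounded thresholds $t_{ij}$. This contributes $\mathrm{poly}(m \log(1/\delta\gamma)/\delta)$ gates per level $j$ and adds at most $2/\delta^2$ input bits per level. Summing over $j \in [k]$ yields $p \le O(k/\delta^2)$ input bits and $q \le \mathrm{poly}(mk \log(1/\gamma)/\delta)$ gates total, as required; in particular, absorbing $\log(1/\delta)$ into the polynomial in $1/\delta$ gives the stated bounds.

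The main obstacle I anticipate is verifying the structural observation about restrictions of $\widetilde{T}_{j-1}$: namely, that such a restriction depends on its single free coordinate only through a previously-computed thresholded value $\bm{1}[f_{j'}(x) \ge t_{i^*, j'}]$, so that no independent computation of a new $f_{j'}$-threshold is triggered by the recursion. This requires carefully unfolding the definition of $\Gamma_m$ in \cref{def:ci} and exploiting the product structure across the $m$ sample coordinates; once established, the rest is routine precision bookkeeping and gate counting that does not rely on further properties of $T$ or the supersimulator construction beyond the $\cS_{<\cdot,\cdot}$ decomposition.
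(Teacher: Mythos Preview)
Your proposal is correct and follows essentially the same inductive argument as the paper: build the classifier level by level, add fresh input bits for the $\cR(T)$ terms in each $f_j$, and for the $\cR(\widetilde{T}_{j-1})$ terms observe that hard-wiring all but the $i^*$-th coordinate collapses each consistency indicator to a constant times a single test $\bm{1}[y_{i^*}=\bm{1}[f_{j'}(x)\ge t_{i^*,j'}]]$ already available by the inductive hypothesis. Your explicit precision bookkeeping (finitely many possible values of restrictions, $O(\log(1/\delta\gamma))$-bit representations, threshold rounding) is additional detail the paper leaves implicit, but the structure and key observation are the same.
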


\begin{proof}
    We will prove the lemma by induction on $k$. Suppose we have a classifier circuit for $\widetilde{T}_{k-1}$ using $p$ functions $r_1, \ldots, r_p$ from $\cR(T)$ and $q$ circuit gates. We will show how to construct a classifier circuit for $\widetilde{T}_k$ while increasing $p$ and $q$ by at most $\mathrm{poly}(mk\log(1/\gamma)/\delta)$ each.
    
    By assumption, our circuit already computes the values $\bm{1}[f_j(x) \ge t_{ij}]$ for all indices $i \in [m]$ and $j \in [k - 1]$. Therefore, we just need to modify the circuit to also compute the values $\bm{1}[f_k(x) \ge t_{ik}]$ for each index $i \in [m]$. To this end, recall that
    \[
        f_k \in \cS_{<(2/\delta^2),(\delta/2)}\bigl(\cR(T) \cup \cR(\widetilde{T}_{k-1})\bigr).
    \]
    This means that the function $f_k$ is a structured sum of at most $2/\delta^2$ restrictions of either $T$ or $\widetilde{T}_{k-1}$. We will consider each of these restrictions separately, depending on whether they came from $T$ or $\widetilde{T}_{k-1}$. For the restrictions that came from $T$, add each one to the existing list $r_1, \ldots, r_p$. This increases $p$, the length the list, by at most $2/\delta^2$. Next, consider a restriction $r$ obtained by hard-wiring all inputs to $\widetilde{T}_{k-1}$ except for $x_i$, for some index $i \in [m]$. Then, $r$ has the form
    \[
        r(x) = \biggl[\frac{\gamma}{2} \sum_{j=1}^{k-1} \sigma_j \bm{1}\big[y_i = \bm{1}[f_j(x) \ge t_{ij}] \text{ and } y_{i'} = \bm{1}[f_j(x_{i'}) \ge t_{i'j}] \text{ for all }i' \neq i\bigr] \biggr]_0^1,
    \]
    for some fixed sequence $x_{\neq i}$ comprising values $x_{i'}$ for all $i' \neq i$ and some fixed labels $y \in \{0, 1\}^m$. Notice that the truth values of the conditions $y_{i'} = \bm{1}[f_j(x_{i'}) \ge t_{i'j}]$ that appear in the formula for $r$ are fixed functions of $x_{\neq i}$ and $y$. Therefore, they too may be hard-wired in advance. Consequently, the formula for $r$ can be substantially simplified to only depend on some fixed subset of indices $I \subseteq [k-1]$ in the summation:
    \[
        r(x) = \biggl[\frac{\gamma}{2}\sum_{j \in I} \sigma_j \bm{1}\bigl[ y_i = \bm{1}[f_j(x) \ge t_{ij}]\bigr]\biggr]_0^1.
    \]
    By the inductive hypothesis, our existing classifier circuit has already computed the Boolean values $\bm{1}[f_j(x) \ge t_{ij}]$ for all $j \in I$. Therefore, using the formula above, we can compute $r(x)$ using just $\mathrm{poly}(k\log(1/\gamma))$ additional circuit gates. Since $f_k$ is a sum of $2/\delta^2$ of these restrictions, along with some others from the list $r_1, \ldots, r_p$, we deduce that $f_k$ can be computed using just $\mathrm{poly}(mk\log(1/\gamma)/\delta)$ additional circuit gates. Of course, once we have $f_k(x)$, the Boolean values $\bm{1}[f_k(x) \ge t_{ik}]$ for each index $i \in [m]$ are similarly inexpensive to compute.
\end{proof}

We conclude this section with a short proof of \cref{thm:main-easy}, the converse to \cref{thm:main-hard}.

\maineasy*

\begin{proof}
    We are given a partition of $\cX$ into sets $S_1, \ldots, S_k$, each of which has complexity at most $s$, such that whether or not $f$ belongs to $\cQ$ can be determined from its $k$ \emph{densities}
    \[
        \mu_j(f) = \E\bigl[f(x)\bm{1}[x \in S_j]\bigr].
    \]
    As usual, the above expectation is over $x \sim \cD$. Let $\mu(f) \in [0, 1]^k$ denote the vector of $k$ densities of $f$, and let $\hat\mu_m(f)$ denote its empirical estimate given $m$ independent samples $(x_i, f(x_i))$. By Hoeffding's inequality and a union bound over the $k$ sets, with probability at least $2/3$, we have $\norm{\hat\mu_m(f) - \mu(f)}_1 \le k\delta$ as long as $m = O(\log(k)/\delta^2)$.
    
    Consider the function $T$ which takes as input a $\delta$-granular density profile $v \in \{0, \delta, \ldots, 1 -\delta, 1\}^k$, and outputs $1$ if some $f \in \cQ$ has a nearly matching profile, meaning that $\norm{\mu(f) - v}_1 \le 2k\delta$. For brevity, let $\pi_\delta(v)$ denote the $\delta$-granular coordinate-wise rounding of $v \in [0,1]^k$ to multiples of $\delta$.

    Fix any $f : \cX \to \{0, 1\}$. With probability at least $2/3$, we have
    \[
        \norm{\mu(f) - \pi_\delta(\hat\mu_m(f))}_1 \le \norm{\mu(f) - \hat\mu_m(f)}_1 + \norm{\hat\mu_m(f) - \pi_\delta(\hat\mu_m(f))}_1 \le 2k\delta.
    \]
    Therefore, $f \in \cQ$ implies $T(\pi_\delta(\hat\mu_m(f)) = 1$. Conversely, if $T(\pi_\delta(\hat\mu_m(f))) = 1$, then for some $\tilde{f} \in \cQ$,
    \[
        \norm{\tilde{f} - f}_1 = \norm{\mu(\tilde{f}) - \mu(f)}_1 \le \norm{\mu(\tilde{f}) - \pi_\delta(\hat\mu_m(f))}_1 + \norm{\pi_\delta(\hat\mu_m(f)) - \mu(f)}_1 \le 4k\delta,
    \]
    so $f \in \cQ_{4k\delta}$. Setting $\delta = \eps/4k$, we see that running $T$ on the rounded empirical estimate $\pi_\delta(\hat\mu_m(f))$ yields a valid tester for $\cQ$ with proximity $\eps$. Since membership in each set $S_i$ can be computed with a circuit of size at most $s$, we can determine which parts contain each of the $m = (k/\eps)^{O(1)}$ samples with a circuit of size $(k/\eps)^{O(1)}s$. Maintaining empirical averages can similarly be done with a circuit of size $(k/\eps)^{O(1)}$. Finally, any post-processing function $T : \{0, \delta, \ldots, 1-\delta, 1\}^k \to \{0, 1\}$ with $\delta = \Omega(\eps/k)$ can be computed with at most $(k/\eps)^{O(k)}$ additional circuit gates by brute force.
\end{proof}

\subsection{Consistency Counting}
\label[section]{sec:cc}

In this section, we present our modification to \cref{thm:blais} that circumvents the $2^{2^{O(m)}}$ dependence on the tester's sample size $m$, which allows us to make meaningful statements about sample sizes larger than $O(\log n)$. Recall that our goal is to show that any $m$-sample testable property essentially boils down to what we call \emph{consistency counting} on a collection of $2^{O(m)}$ functions. We first recall the definition of consistency counting and the statement of our theorem.

\defcc*

\thmcc*

The proof of \cref{thm:cc} is broadly similar to that of \cref{thm:main-hard}, but it makes additional use of the additive structure of the simulator from the complexity-theoretic regularity lemma. Since we do not consider computational complexity in this section, we will not require the supersimulators lemma like we did in the previous section. Instead, our proof will only require the basic complexity-theoretic regularity lemma. Dissecting the proof further shows that we can always ensure $\cF_+ \subseteq \cP$ and $\cF_- \cap \cP_\eps = \varnothing$. 

\begin{proof}[Proof of \Cref{thm:cc}]
    As in the proof of \cref{thm:main-hard}, we are given a property $\cP \subseteq \{\cX \to \{0, 1\}\}$ and a tester $T_0 : (\cX \times \{0, 1\})^{m_0} \times \{0, 1\}^{\ell_0} \to \{0, 1\}$ that uses $m_0$ samples. Since $T_0$ is a valid tester for $\cP$, it has success probability at least $2/3$. By running $O(1)$ independent copies of $T_0$ and taking a majority vote, we can boost the success probability from $2/3$ to $11/12$ (or, for that matter, any constant that is strictly less than $1$). Call the resulting tester $T : (\cX \times \{0, 1\})^m \times \{0, 1\}^\ell \to \{0, 1\}$, where $m = O(m_0)$ and $\ell = O(\ell_0)$. 
    
    Next, define $x$, $y$, $r$, and $\bar{T}$ as in the proof of \cref{thm:main-hard}. In order to construct a consistency counter that is a valid tester for $\cP$, consider the distinguisher family $\cC \subseteq \{(\cX \times \{0, 1\})^m \to \{0, 1\}\}$ comprising all possible $m$-fold \emph{consistency} functions:
    \[
        \cC = \Bigl\{(x, y) \mapsto \bm{1}\bigl[y_1 = f(x_1) \land \cdots \land y_m = f(x_m)\bigr] \;\Big|\; f : \cX \to \{0, 1\}\Bigr\}.
    \]
    Next, fix $\gamma > 0$. By \cref{thm:ttv} (complexity-theoretic regularity), there exists a function $\widetilde{T} \in \cS_{k,\gamma/2}(\cC,1/2)$ that is a $(\cC, \gamma)$-regular simulator for $\bar{T}$, where $k < 2/\gamma^2$ and $1/2$ refers to the constant function that always outputs $1/2$. Then, by \cref{thm:tester-sim} (tester simulation), we have
    \[\E\bigl[T(x, y, r)\bigr] = \E\bigl[\bar{T}(x, y)\bigr] \approx_{2^m \gamma} \E\bigl[\widetilde{T}(x, y)\bigr]. \]
    In other words, replacing $T$ with $\widetilde{T}$ changes the expected output by at most $2^m\gamma$, which is strictly less than $1/12$ for appropriately chosen $\gamma = \Theta(1/2^m)$. Next, by the complexity-theoretic regularity lemma, we know that $\widetilde{T}$ has the form
    \[
        \widetilde{T} = \biggl[\frac{1}{2} + \frac{\gamma}{2}\bigl(\sigma_1F_1 + \cdots + \sigma_kF_{k}\bigr)\biggr]_0^1
    \]
    where each term $F_j$ belongs to the family $\cC$ and $\sigma_1, \ldots, \sigma_k \in \{\pm 1\}$ are arbitrary signs. By the definition of $\cC$, each function $F_j$ is testing for consistency with some function $f_j : \cX \to \{0, 1\}$.
    
    Define the set of ``good'' functions $\cF_+$ to be the functions $f_j$ for which $\sigma_j = +1$, and define the set of ``bad'' functions $\cF_-$ to be the set of functions $f_j$ for which $\sigma_j = -1$. Then, the output of $\widetilde{T}$ depends only on the difference between the number of good and bad functions that fit the observed sample:
    \[
        \widetilde{T}(x, y) = \biggl[\frac{1}{2} + \frac{\gamma}{2}\Bigl\lvert\big\{f \in \cF_+ \;\big|\; \forall i \in [m],\, y_i = f(x_i)\bigr \}\Bigr\rvert - \frac{\gamma}{2}\Bigl\lvert\big\{f \in \cF_- \;\big|\; \forall i \in [m],\, y_i = f(x_i)\bigr \}\Bigr\rvert\biggr]_0^1.
    \]
    In particular, $\widetilde{T}(x, y) > 1/2$ if and only if the $(m, \cF_+, \cF_-)$-consistency counter outputs \textsc{Accept}.
    
    Finally, we show that this consistency counter is a valid tester for $\cP$. For this, suppose that $f \in \cP$. By assumption, $T$ has success probability at least $11/12$, and the expected output of $\widetilde{T}$ differs from that of $T$ by at most $1/12$. Therefore, the probability that the consistency counter mistakenly outputs \textsc{Reject} is
    \[
        \Pr\Bigl[\widetilde{T}(x, y) \le \frac{1}{2} \Bigr] = \Pr\Bigl[1 - \widetilde{T}(x, y) \ge \frac{1}{2} \Bigr] \le 2 \cdot \E\bigl[1 - \widetilde{T}(x, y) \bigr] \le 2 \cdot \Bigl(\E\bigl[1 - T(x, y, r)] + \frac{1}{12}\Bigr) \le \frac{1}{3}.
    \]
    Similarly, if $f \notin \cP_\eps$, then the consistency counter mistakenly outputs \textsc{Accept} with probability at most $1/3$. We conclude that the $(m, \cF_+, \cF_-)$-consistency counter is a valid tester for $\cP$.
\end{proof}

\subsection{Dense Distribution Testing}
\label[section]{sec:dt}

In this section, we discuss our extension of \cref{thm:main-hard} from sample-testing Boolean functions to testing properties of dense distributions. First, we review the relevant notions of density and entropy. Given a finite domain $\cX$, fix a distribution $\cD_0 \in \Delta(\cX)$, which we call the \emph{base distribution} or \emph{reference distribution}. For example, when $\cX = \{0, 1\}^n$, we take $\cD_0$ to be the uniform distribution on $\cX$. Given another distribution $\cD$ whose support is contained in that of $\cD_0$, we say that $\cD$ is \emph{$\mu$-dense} in $\cD_0$ if the ratio of $\cD$ to $\cD_0$ is at most $1/\mu$ pointwise. For example, when $\cX = \{0, 1\}^n$, a distribution is $\mu$-dense if it is a uniform distribution over a set of size $\mu\abs{\cX}$, or a convex combination thereof. In this case, we equivalently say that the distribution has \emph{min-entropy} at least $k = n - \log(1/\mu)$.

For another example, suppose that we sample $x \sim \{0, 1\}^n$ uniformly, and then set $y = f(x)$ for some fixed Boolean function $f$. Then, the joint distribution of the pair $(x, y)$ is $\mu$-dense in the uniform distribution over $\{0, 1\}^{n+1}$ for $\mu = 1/2$. Indeed, this is precisely the fact that we used in the proof of \cref{thm:tester-sim}, which lead to the $(1/\mu)^m = 2^m$ blowup in the regularity parameter $\gamma$. In fact, a lower bound on the density of the distribution of $(x, y)$ was essentially the only thing our proof needed; it was not important for $x$ to be uniform and $y$ to be a deterministic function of $x$.

With the aforementioned example in mind, we now define a natural analogue of the property testing framework from \cref{sec:preliminaries} that applies not just to uniform $n$-bit strings with Boolean labels, but also more general dense distribution properties. For this, recall that the total variation distance between two distributions $\cD_0,\cD_1 \in \Delta(\cX)$ is their maximum absolute difference over any subset of the domain: $\max_{S \subseteq \cX}\, \abs{\cD_0(S) - \cD_1(S)}$.

\begin{definition}[Densely Testable Property]
    Let $\cP \subseteq \Delta(\{0, 1\}^n)$ be a distribution property. We say that $\cP$ is \emph{$\mu$-densely testable} with proximity parameter $\eps > 0$ if there exists a randomized circuit $T$ of size $s$ that receives as input $m$ samples $x_i \sim \cD$ for some $\cD \in \Delta(\{0, 1\}^n)$, always outputs either \textsc{Accept} or \textsc{Reject}, and meets the following two requirements:
    \begin{itemize}
        \item If $\cD \in \cP$ and $\cD$ is $\mu$-dense, then $T$ outputs \textsc{Accept} with probability at least $2/3$.
        \item If $\cD$ is $\eps$-far from $\cP$ and $\cD$ is $\mu$-dense, then $T$ outputs \textsc{Reject} with probability at least $2/3$.
    \end{itemize}
    In each condition, the probability is computed over randomness in the sample and internal to $T$.
\end{definition}

In our to state our generalized main result for dense distribution testing, we will again need to invoke the notion of ``structured symmetry'' that we have been using throughout this paper. However, we emphasize that in the context of distribution testing, this is \emph{not} the same as the more standard notion of label-invariance. Rather, we say that a distribution property $\cP$ is \emph{$k$-part symmetric} if there is a partition of $\{0, 1\}^n$ into disjoint parts $S_1, \ldots, S_k$ such that $\cP$ is invariant under any redistribution of probability mass within each part. (More formally, whether or not $\cD \in \cP$ should depend only on the $k$ densities $\cD(S_i)$ for $i \in [k]$.)

We now recall the statement of our result on $\mu$-dense distribution testing. The converse, of course, is straightforward. Since the proof of \cref{thm:distr-hard} is extremely similar to that of \cref{thm:main-hard}, which we carried out in detail in \cref{sec:symmetry-thm}, we simply sketch the main alterations to the argument.

\distrhard*

\begin{proof}[Proof Sketch]
    First, some notation. Given $f : \{0, 1\}^n \to [0, 1/\mu]$, let $\cD_f$ denote the $\mu$-dense distribution over $\{0, 1\}^n$ with mass function $f(x)2^{-n}$. Given $x \in (\{0, 1\}^{n})^m$, let $f^{(m)}(x) = \prod_{i=1}^m f(x_i)$.
    
    Our strategy will be to generalize \cref{thm:oracle-sim} (oracle simulation) and \cref{thm:tester-sim} (tester simulation), which were the key building blocks of \cref{thm:main-hard}. To generalize the oracle simulation lemma, consider a tester $T : (\{0, 1\}^n)^m \times \{0, 1\}^\ell\to \{0, 1\}$ computable by a circuit of size $s$. If $\mu\tilde{f}$ is $(s, \delta)$-indistinguishable from $\mu f$, then by the same hybrid argument as before,
    \[
       \mathbb{E}_f\bigl[T(x,r)\bigr] = \frac{1}{\mu}\E\bigl[\mu f^{(m)}(x) T(x,r)] \approx_{\frac{m\delta}{\mu}} \frac{1}{\mu}\E\bigl[\mu \tilde{f}^{(m)}(x)T(x,r)\bigr] = \mathbb{E}_{\tilde{f}}\bigl[T(x,r)\bigr],
    \]
    where $\mathbb{E}_f$ denotes the expectation over $x \sim \cD_f^m$ and $\E$ denotes the expectation over uniform $x$. Next, to generalize the tester simulation lemma, suppose that $\widetilde{T}$ and $\bar{T}$ (the mean function of $T$) are $\gamma$-indistinguishable with respect to $m$-fold thresholds of $\tilde{f}$. Then, for a uniform $t \sim [0, 1]^m$,
    \[
        \mathbb{E}_{\tilde{f}}\bigl[\bar{T}(x)\bigr] = \mu^{-m}\E\bigl[\bm{1}[\mu \tilde{f}(x) \ge t]\,\bar{T}(x)\bigr] \approx_{\mu^{-m}\gamma} \mu^{-m}\E\bigl[\bm{1}[\mu \tilde{f}(x) \ge t]\, \widetilde{T}(x)\bigr] = \mathbb{E}_{\tilde{f}}\bigl[\widetilde{T}(x)\bigr].
    \]
    
    Having generalized the two lemmas, we proceed as in the proof of \cref{thm:main-hard}. Set $\delta = \Theta(\mu/m)$ and $\gamma = \Theta(\mu^m)$. By \cref{thm:super} (supersimulators), there exists a circuit $\widetilde{T}$ of size $\tilde{s}$ that fools all $m$-fold thresholds of functions $\tilde{f}$ computable by a circuit of size at most $O(\max(s, \tilde{s})/\delta^2)$. Next, by \cref{thm:ttv} (complexity-theoretic regularity), given any $\mu$-dense distribution $\cD_f$, the function $\mu f$ is $(\max(s, \tilde{s}), \delta)$-indistinguishable from some $\mu\tilde{f}$ computable in size $O(\max(s,\tilde{s})/\delta^2)$. Thus, combining the two generalized lemmas, we have that, say, $\abs{\mathbb{E}_f[T(x,r)] - \mathbb{E}_f[\widetilde{T}(x)]} < 1/6$. This implies that $\cP \subseteq \cQ \subseteq \cP_\eps$, where $\cQ$ is the property that the expected output of $\widetilde{T}$ is at least $1/2$. As before, the structured symmetry can be read off from the structure of the circuit $\widetilde{T}$ provided by the supersimulators lemma. The bounds $2^{(1/\mu)^{O(m)}}$ and $(1/\mu)^{O(m)}s$ on the number of parts and the partition complexity, respectively, follow from the same counting arguments as before with our new choices of the parameters $\delta = \Theta(\mu/m)$ and $\gamma = \Theta(\mu^m)$.
\end{proof}

\section{Node-Level Private Graph Testing}
\label[section]{sec:dp-graphs}

In this section, we prove our main results regarding node-level private graph property testing, including our efficient node-level DP algorithm for computing graph regularity templates, as well as our private and function testing versions of the \cite{alon2006regularity} characterization of testability. We begin with \cref{thm:dp-partition}, which comprises the privacy and utility analyses of our sublinear time algorithm for privately computing a regularity template that a given input graph nearly satisfies (\cref{alg:dp-partition}).

\dppartition*

\begin{proof}
    One of the parameters of \cref{alg:dp-partition}, namely $\gamma$, is given in the statement of \cref{thm:dp-partition}. We shall define the algorithm's remaining parameters $k, q, \lambda$ (as well as two auxiliary parameters $r,\delta$ needed for this proof) as functions of $\gamma$ and $\eps$. First, we let $\delta = \lambda = (\eps\gamma)^{100}$. Next, we define $k = O_\gamma(1)$ as in the Szemer\'{e}di regularity lemma (\cref{thm:sz}). Next, we let $\abs{\cR}$ denote the size of the set $\cR$ constructed by \cref{alg:dp-partition} when run with parameters $\gamma, k, \lambda$. Finally, we let $r = \abs{\cR}/(\eps\gamma)^{100}$ and define $q = O_{r,\delta}(1)$ as in \cref{thm:reg-samp} (in particular, $q \ge r$). With these parameters in mind, we proceed to analyze the privacy and utility of the algorithm separately.
    
    \paragraph{Privacy Analysis}
    
    Consider any two node-adjacent graphs $G$ and $G'$ on $n$ vertices, and suppose that the differing vertex appears in the set $Q$ of $q$ subsampled vertices. Then, the induced subgraphs $G[Q]$ and $G'[Q]$ may only differ on the edges adjacent to this particular vertex. It follows that the loss function $\ell(R)$ used in \cref{alg:dp-partition} to instantiate the exponential mechanism, which counts the number of edge edits required to satisfy $R$, has \emph{sensitivity} at most $q$. This means that replacing $G[Q]$ with $G'[Q]$ can change $\ell(R)$ by at most $q$. We deduce from the privacy guarantee of the exponential mechanism (\cref{thm:expmech}) that our algorithm satisfies node-level $\alpha$-differential privacy for $\alpha = O(1)$. Finally, since $Q$ contains only a random sample of $q$ out of the $n$ vertices of the graph, privacy amplification by subsampling (\cref{thm:amp}) further reduces this privacy loss parameter to $\alpha = O(q/n) = O_{\gamma,\eps}(1/n)$, as desired.

    \paragraph{Utility Analysis}
    
    Recall that \cref{alg:dp-partition} samples a subset $Q$ of $q$ vertices from $G$ and operates on the subgraph $G[Q]$ that they induce. By \cref{thm:sz}, $G[Q]$ has some size-$k$ $\gamma$-regular equipartition $\cU = \{U_1, \ldots, U_k\}$. We will use $\cU$ to define a template $R = (\gamma, k, \eta, \bar{R})$ that is close to being satisfied by $G[Q]$ and that belongs to the finite set $\cR$ which the algorithm constructs and uses to choose its output. For this, we let $\eta_{ij}$ be equal to the density $d(U_i, U_j)$ after rounding to the nearest multiple of $\lambda$. Also, we let $\bar{R}$ be the set of indices of pairs of parts of $\cU$ that fail to be $\gamma$-regular. Clearly, $R \in \cR$, and by \cref{thm:tweak}, $G[Q]$ is $O(\lambda/\gamma^2)$-close to satisfying $R$. By the utility guarantee of the exponential mechanism (\cref{thm:expmech}), up to some additive slack, $G[Q]$ will be no further from satisfying whatever template $R' = (\gamma, k, \eta', \bar{R}')$ is actually output by the algorithm. More precisely, with probability at least $9/10$ over the random sample of $R'$ from $\cR$, the subgraph $G[Q]$ is $\tau$-close to some graph $H$ satisfying $R'$, where $\tau = O(\lambda/\gamma^2 + (\log\abs{\cR})/ q)$. For our choice of the parameters $\lambda$ and $q$, the slack $\tau$ is at most, say, $O((\eps\gamma)^{50})$. This is sufficiently small to carry out the remainder of the proof in a similar manner to the non-private case in \cite{alon2006regularity}. We include the argument for the sake of completeness:

    Next, we argue that any equipartition $\cU' = \{U'_1, \ldots, U'_k\}$ of the vertices of $H$ witnessing that $H$ satisfies $R'$ must also be a $\gamma'$-regular equipartition of $G[Q]$, where $\gamma' = \gamma + O(\sqrt{\tau}/\gamma^2)$. For this, recall that $G[Q]$ and $H$ are $\tau$-close, which means that we can transform $H$ back into $G[Q]$ by performing at most $\tau q^2$ edge additions or deletions. Consider how these edits might be distributed across the pairs of parts of $\cU'$. Clearly, there can be no more than $\sqrt{\tau} k^2$ pairs of parts that receive over $\sqrt{\tau} (q/k)^2$ edits, or else there would be over $\tau q^2$ edits in total. We refer to the pairs that receive more than this number of edits as \emph{damaged} pairs and denote their indices by $D \subseteq [k]^2$. For any undamaged pair $(i, j) \notin D$, where a $\sqrt{\tau}$ or smaller fraction of edge edits occurred, we have $\abs{d_{G[Q]}(U'_i, U'_j) - \eta_{ij}} \le \sqrt{\tau}$. Similarly, if an undamaged pair $(U_i', U_j')$ is $\gamma$-regular in $H$, then the transformation from $H$ back to $G[Q]$ worsens the regularity parameter of the pair to at most $\gamma + O(\sqrt{\tau}/\gamma^2)$. Indeed, this follows from the definition of regularity (\cref{def:graph-regularity}), because performing up to $\sqrt{\tau}(q/k)^2$ edge edits between any $A \subseteq U_i'$ and $B \subseteq U_j'$ with sizes at least $\gamma(q/k)$ changes the density of edges from $A$ to $B$ by at most $\sqrt{\tau}/\gamma^2$. Next, since at most a $\gamma$ fraction of pairs of $\cU'$ fail to be $\gamma$-regular in $H$, and since at most a $\sqrt{\tau}$ fraction of these pairs are damaged, we deduce that at most a $\gamma + \sqrt{\tau} \le \gamma'$ fraction of the pairs fail to be $\gamma'$-regular in $G[Q]$.
    
    By \cref{thm:reg-samp}, with probability at least $2/3$ over the random sample $Q$, there exists an equipartition $\cV = \{V_1, \ldots, V_k\}$ of $G$ that is \emph{$\delta$-similar} to the equipartition $\cU'$ of $G[Q]$ (see \cref{def:reg-sim} for the definition of $\delta$-similarity). We will use $\cV$ to show that $G$ is close to satisfying the template $R'$ output by the algorithm. Indeed, $\delta$-similarity implies that $\abs{d_G(V_i, V_j) - d_{G[Q]}(U'_i, U'_j)} \le \delta$ for all pairs $(i, j) \in [k]^2$. Since $d_{G[Q]}(U_i', U_j')$ is within $\sqrt{\tau}$ of $\eta'_{ij} = d_H(U_i', U_j')$ for undamaged pairs $(i, j) \notin D$, it follows from the triangle inequality that $\abs{d_G(V_i, V_j) - \eta'_{ij}} \le \delta + \sqrt{\tau}$ for undamaged pairs. We also know from $\delta$-similarity that $(V_i, V_j)$ is $(\gamma' + \delta)$-regular whenever $(U'_i, U'_j)$ is $\gamma'$-regular. Thus, we can transform $G$ into a graph satisfying $R'$ by first arbitrarily repairing each damaged pair to satisfy $\gamma$-regularity and then applying \cref{thm:tweak} yet again to repair the remaining pairs. The first step requires at most $\sqrt{\tau} n^2$ edge edits because of the limited number of damaged pairs, and the second step requires at most $O(\delta/\gamma^2 + \sqrt{\tau}/\gamma^4) \cdot n^2$ edits by \cref{thm:tweak}. Substituting our choices of $\delta$ and $\tau$, which in turn depend on our choices of $q,\lambda$, we see that the total number of edge edits required to make $G$ satisfy $R'$ is at most $\eps n^2$, as desired.
\end{proof}

Next, we prove \cref{thm:dp-graph-testing}, which extends the main characterization of testable graph properties from \cite{alon2006regularity} (i.e. \cref{thm:non-dp-graph-testing}) to the differentially private case. There are at least two natural approaches to proving a private version of the characterization. The first is to modify the proof from \cite{alon2006regularity} to add noise in various places. Specifically, one could replace the non-private tester for satisfying a fixed template $R$ (a key step in the proof of \cref{thm:non-dp-graph-testing}) with a node-level differentially private tester for satisfying $R$ (a slight modification of our \cref{alg:dp-partition}). The second approach is to directly reduce to \cref{thm:non-dp-graph-testing}. We take the latter approach for the sake of simplicity.

\dptesting*

\begin{proof}
    Consider a non-private tester in the form given by \cref{thm:graph-canonical}, which operates on a uniformly random subset of $q$ vertices from $G$. Such a tester can be made to satisfy node-level $O(1)$-differential privacy by negating the tester's output with some small constant probability (decreasing the tester's accuracy from $2/3$ to, say, $3/5$). Privacy amplification by subsampling (\cref{thm:amp}) further reduces the privacy loss parameter from $O(1)$ to $O(q/n)$, which is $O_\eps(1)$ whenever $q = O_\eps(1)$. At this point, we have shown that a property of $n$-vertex graphs is testable (with proximity $\eps$ using $O_\eps(1)$ queries, for all $\eps > 0$) if and only if it is privately testable (with privacy loss $O_\eps(1/n)$). We conclude by \cref{thm:non-dp-graph-testing}, which equates (non-private) testability to regular-reducibility.
\end{proof}

\subsection{Regularity Templates for Functions}
\label[section]{sec:rt}

We now turn our attention to proving a function testing analogue of the characterization of testability in \cite{alon2006regularity}. We begin by recalling our definition of compatibility with a regularity template, as well as the statement of the forward direction of our partial equivalence.

\compatibility*

\templateshard*

\begin{proof}
    We are given a property $\cP$ and a valid $m$-sample, size-$s$ tester $T$. Set $\delta = 1/13m$. By \cref{thm:ttv} (complexity-theoretic regularity), every Boolean function has an $(s, \delta)$-regular simulator computable by a circuit of size $O(m^2 s)$. With this in mind, let $\cT$ be the set of $(s, \delta)$-regular simulators of size $O(m^2 s)$ for functions in $\cP$. Clearly, $\cP \subseteq \cT_{s,\delta}$.

    Conversely, we claim that $\cT_{s, \delta} \subseteq \cP_\eps$. To prove this, suppose for the sake of contradiction that there exists a function $f \in \cT_{s,\delta} \setminus \cP_\eps$. Let $\tilde{f}$ be a function in $\cT$ that is $(s,\delta)$-indistinguishable from $f$, and let $f'$ be a function in $\cP$ that is $(s, \delta)$-indistinguishable from $\tilde{f}$. Since $T$ has a circuit of size $s$, so does every function in the class $\cR(T)$ of restrictions of $T$. Thus, $(s, \delta)$-indistinguishability implies $(\cR(T), \delta)$-indistinguishability. Therefore, letting $\bar{T}$ denote the expectation of $T$ over its internal randomness as usual, \cref{thm:oracle-sim} (oracle simulation) implies
    \[
        \E\bigl[T(x,f(x),r)\bigr] \approx_{2m\delta} \E\bigl[\bar{T}(x,\tilde{y})\bigr] \approx_{2m\delta} \E\bigl[T(x,f'(x),r)\bigr] \ge \frac{2}{3},
    \]
    where the expectation is over independent $x_i \sim \{0, 1\}^n$ and $\tilde{y}_i|x_i \sim \cB(\tilde{f}(x))$ and random seed $r$. In other words, replacing $f$ with $f'$ changes the probability that $T$ outputs \textsc{Accept} by at most $4m\delta$, which is strictly less than $1/3$ for our choice of $\delta$. However, since $f \notin \cP_\eps$, we also have $\E[T(x,f(x),r)] \le 1/3$, which is a contradiction. We conclude that $\cP \subseteq \cT_{s,\delta} \subseteq \cP_\eps$.
\end{proof}

We now prove our partial converse. The main idea behind the proof is a trick involving convex combinations, which was used in \cite{alon2006regularity} to prove that compatibility with a fixed regularity template is testable. The rest is a series of standard applications of concentration inequalities.

\templateseasy*

\begin{proof}
    Let $g$ denote the function we wish to test. For each $h \in \cT$, size-$s$ $f$, and $\sigma \in \{\pm 1\}$, consider
    \[
        \E\Bigl[\sigma f(x)\bigl(g(x)-h(x)\bigr)\Bigr].
    \]
    By definition, $g$ has the property $\cT_{s,\delta}$ if and only if there exists some template $h \in \cT$ such that the above quantity is at most $\delta$ for all size-$s$ $f$ and $\sigma \in \{\pm 1\}$. For now, consider a fixed $h \in \cT$. Note that there are at most $2^{O(s \log s)}$ circuits $f$ of size at most $s$. Therefore, by Hoeffding's inequality, we can estimate the quantity displayed above for all size-$s$ functions $f$ and signs $\sigma$ up to error $\alpha$ with failure probability $\beta$, given $m = O((s \log(s) + \log(1/\beta))/\alpha^2)$ labeled samples. Let us output \textsc{Accept} if all of these estimates are at most $\delta + \alpha$, and output \textsc{Reject} otherwise. Clearly, if $g \in \cT_{s,\delta}$, then the tester accepts with probability at least $1 - \beta$.
    
    Conversely, we claim that with probability at least $1 - \beta$, if the tester accepts, then $g$ is $\eps$-close to $\cT_{s,\delta}$. To prove this, first note that $g$ is $(s, \delta + \alpha)$-indistinguishable from $h$. Consequently, any convex combination $\lambda h + (1 - \lambda)g$ is $(s, (1-\lambda)(\delta + \alpha))$-indistinguishable from $h$. By performing randomized rounding on the output of this combination, we obtain a Boolean-valued function $\tilde{g}$ that is $2\lambda$-close to $g$ with probability at least $1 - \beta$, for a sufficiently large domain size $\abs{\cX} = O(\log(1/\beta))$. Moreover, this $\tilde{g}$ is $(s, (1-\lambda)(\delta+\alpha)+\gamma)$-indistinguishable from $h$ with probability at least $1 - \beta$, for $\abs{\cX} = O((s \log(s) + \log(1/\beta))/\gamma^2)$. If we set $\lambda = (\alpha + \gamma)/(\delta + \alpha)$ and $\gamma \le \alpha \le \eps\delta/4$, then $\lambda \le \eps/2$, in which case $\tilde{g}$ is both $\eps$-close to $g$ and $(s, \delta)$-indistinguishable from $h$, as desired.

    Until now, we have considered a fixed template $h \in \cT$ and used only $O((s \log(s)+\log(1/\beta))/\eps^2\delta^2)$ samples. However, since every circuit in $\cT$ has size at most $t$, we can run our tester for all $h \in \cT$, at the cost of a $2^{O(t \log t)}$ factor blowup in the failure probability $\beta$. Since $s \le t$, we conclude that $\cT_{s,\delta}$ is testable with proximity $\eps$ from $O(t\log(t)/\eps^2\delta^2)$ samples.
\end{proof}

\section{Simple Hypothesis Testing}
\label[section]{sec:mpv}

In this section, we prove our two results on the computational indistinguishability of product distributions. In \cref{sec:mpv-cma}, we prove our result using calibrated multiaccuracy, \cref{thm:mpv-cma-succinct}, which tightens the complexity gap in the multicalibration-based main result of \cite{marcussen2024characterizing}. In \cref{sec:mpv-super}, we prove our result based on supersimulators, \cref{thm:mpv-super-succinct}, which closes the gap entirely. We remark that all results in this section, although stated in terms of the $k$-fold powers of a single distribution, can be easily generalized to product distributions with $k$ distinct factors.

\subsection{Characterization via Calibrated Multiaccuracy}
\label[section]{sec:mpv-cma}

In this section, we prove \cref{thm:mpv-cma-succinct}, which quantitatively improves the main result of \cite{marcussen2024characterizing}. Essentially, our improvement comes from replacing the use of multicalibration with calibrated multiaccuracy. For background on (multi)calibration and multiaccuracy, see \cref{sec:preliminaries}. For convenience, we recall our theorem statement here.

\mpvcma*

The theorem has two parts, which we will prove separately. The first part involves two proxy distribution $\tcD_0$ and $\tcD_1$, and the second part has just one proxy distribution $\tcD_1$ (since it fixes $\tcD_0 = \cD_0$). Although both parts are stated in terms of size-bounded Boolean circuits, in what follows, we will prove both parts in the setting of an arbitrary distinguisher family. For the first part, our proof is a tighter and simpler analysis of essentially the same construction as in \cite{marcussen2024characterizing}, suggesting that their analysis may have only superficially required the stronger assumption of multicalibration. However, for the second part, we must introduce a new construction that was not present in \cite{marcussen2024characterizing}. Indeed, the corresponding construction and analysis in \cite{marcussen2024characterizing} genuinely requires the full strength of multicalibration.

\subsubsection{First Part of \cref{thm:mpv-cma-succinct}}

In what follows, we say that $\cD_0, \cD_1 : \cX \to \R$ are $(\cF, \eps)$-indistinguishable if for all $f \in \cF$,
\[
    \biggl\lvert\sum_{z \in \cX} f(z) \bigl(\cD_0(z) - \cD_1(z)\bigr)\biggr\rvert \le \eps.
\]
Given two distributions $\cD_0$ and $\cD_1$, our goal is to find an $(\cF, \eps)$-indistinguishable proxy for each, which we call $\tcD_0$ and $\tcD_1$, respectively, such that the information-theoretic distinguishability between ($k$-fold products of) the proxy pair roughly matches 
the computational distinguishability between ($k$-fold products of) the original pair.

To build intuition for the construction of $\tcD_0$ and $\tcD_1$, consider the following game, which is standard in cryptography. The objective of the game is to guess the value of an unknown, uniformly random bit $y \in \{0, 1\}$ given only a single observation $x \in \cX$, which is sampled from the distribution $\cD_0$ if $y = 0$ or from the distribution $\cD_1$ if $y = 1$. Equivalently, the pair $(x, y)$ is generated by first sampling $x$ from the balanced mixture $(\cD_0 + \cD_1)/2$ and then sampling $y$ from its conditional distribution given $x$, namely $\cB(g(x))$ for $g = \cD_1/(\cD_0 + \cD_1)$.

Keeping the marginal distribution of $x$ fixed and replacing the function $g$ with some efficient simulator $h$ yields a pair $(x, \tilde{y})$ with a possibly different distribution. Nevertheless, we can again view the generation of $(x, \tilde{y})$ in two equivalent ways, depending on which of $x$ or $\tilde{y}$ we sample first. Returning to the view in which the bit $\tilde{y} \in \{0, 1\}$ is sampled first, we \emph{define} the proxies $\tcD_0$ and $\tcD_1$ to be the conditional distribution of $x$ given $\tilde{y} = 0$ and $\tilde{y} = 1$, respectively. For future convenience, the following definition of $\tcD_0, \tcD_1$ also includes the case of an \emph{imbalanced} mixture of $\cD_0$ and $\cD_1$, parameterized by some $\eps \in (0, 1)$.

\begin{definition}[Proxy Game]
\label[definition]{def:proxy}
    Fix distributions $\cD_0, \cD_1 \in \Delta(\cX)$, a function $h : \cX \to [0, 1]$, and a parameter $\alpha \in (0, 1)$. In the \emph{proxy game}, we say that the \emph{marginal distribution} is \(\cD_\cX = (1-\alpha)\cD_0 + \alpha\cD_1\). The pair $(x, y)$ is generated by first sampling $x \sim \cD_\cX$ and then sampling $\tilde{y}|x \sim \cB(h(x))$.
    The \emph{proxy distributions} $\tcD_0, \tcD_1$ are the unique distributions such that $x|\tilde{y} \sim \tcD_{\tilde{y}}$.
\end{definition}

The following result, \cref{thm:mpv-cma-1}, shows that such proxy distributions have exactly the desired properties stated in the first part of \cref{thm:mpv-cma-succinct}, provided that $h$ is calibrated and multiaccurate.

\begin{lemma}
\label[lemma]{thm:mpv-cma-1}
    Given $\cF \subseteq \{\cX \to [0, 1]\}$, $\cD_0, \cD_1 \in \Delta(\cX)$, $\eps,\gamma \in (0, 1)$, and $h : \cX \to [0, 1]$, let $\tcD_0, \tcD_1$ be the proxy distributions, as per \cref{def:proxy} with $\alpha = 1/2$. If $h$ is an $(\cF, \eps)$-regular and $\gamma$-calibrated simulator for $g$ under $\cD_\cX$, then:
    \begin{enumerate}[(a)]
        \item $\cD_b$ and $\tcD_b$ are $(\cF, 2\eps + 2 \gamma)$-indistinguishable for each bit $b \in \{0, 1\}$,
        \item $\cD_0^{\otimes k}$ and $\cD_1^{\otimes k}$ are distinguished with advantage $\TV(\tcD_0^{\otimes k}, \tcD_1^{\otimes k}) - 8k\gamma(1+2\gamma)^k $ by
        \[
            h'(z_1, \ldots, z_k) = \bm{1}{\left[\prod_{i=1}^k h(z_i) > \prod_{i=1}^k (1-h(z_i))\right]}.
        \]
    \end{enumerate}
\end{lemma}

To see why \cref{thm:mpv-cma-1} implies the first part of \cref{thm:mpv-cma-succinct}, let $\cF$ contain all size-$s$ circuits, let $\gamma = (\eps/k)^{10}$, say, and let $h$ be the calibrated and multiaccurate simulator guaranteed by \cref{thm:ttv-calibrated}. Then, the left and right inequalities in \cref{thm:mpv-cma-succinct} follow directly from repeated applications of parts (a) and (b) of \cref{thm:mpv-cma-1}, respectively. It remains to prove \cref{thm:mpv-cma-1}.

\begin{proof}[Proof of \Cref{thm:mpv-cma-1}]
    Informally, the idea behind the proof is to first establish that
    \[
        \cD_1 = 2g\cD_\cX, \quad \cD_0 = 2(1-g)\cD_\cX, \quad \tcD_1 \approx 2h\cD_\cX, \quad \tcD_0 \approx 2(1-h)\cD_\cX,
    \]
    and then show that (a) and (b) both follow from these approximate identities. For the first two identities, observe that $\cD_1 = 2g\cD_\cX$ and $\cD_0 = 2(1 - g)\cD_\cX$ are direct consequences of the definitions of $\cD_\cX$ and $g$. To prove the other two identities, let $\hcD_0 = 2(1-h)\cD_\cX$ and $\hcD_1 = 2h\cD_\cX$. Note that $\hcD_b$, unlike $\cD_b$ and $\tcD_b$, is not necessarily a probability distribution because it need not sum to $1$. Nevertheless, we see that $\hcD_b$ differs from $\tcD_b$ by only a scaling factor, since for any $z \in \cX$,
    \[
        \tcD_b(z) = \Pr[x = z \,|\,\tilde{y} = b] = \frac{\Pr[\tilde{y} = b \,|\, x = z] \Pr[x = z]}{\Pr[\tilde{y} = b]} = \frac{\hcD_b(z)}{2 \Pr[\tilde{y}=b]}
.   \]
    It follows that $\tcD_b$ and $\hcD_b$ are close to each other in the sense that
    \begin{equation}
    \label{eq:hcd-cal}
        \sum_{z \in \cX} \bigl\lvert \tcD_b(z) - \hcD_b(z) \bigr\rvert = \sum_{z \in \cX} \bigl\lvert 2\Pr[\tilde{y}=b]-1\bigr\rvert\cdot\tcD_b(z) \le 2\gamma.
    \end{equation}
    where, in the last step, we have used the fact that $\Pr[\tilde{y}=1] = \E[h(x)] \approx_\gamma \E[g(x)] = \Pr[y=1] = 1/2$ by $\gamma$-calibration, as well as the fact that $\tcD_b$ is a distribution, which must sum to $1$. Next, using our formulas for $\cD_b$ and the definition of $\hcD_b$, we see that for all $f \in \cF$,
    \begin{equation}
    \label{eq:hcd-ma}
        \biggl\lvert\sum_{z \in \cX} f(z)\bigl(\cD_b(z)-\hcD_b(z)\bigr)\biggr\rvert = \Abs{\E_{x \sim \cD_\cX}[2f(x)(g(x) - h(x))]} \le 2\eps.
    \end{equation}
    Thus, $\cD_1$ and $\hcD_1$ are $(\cF, 2\eps)$-indistinguishable, as are $\cD_0$ and $\hcD_0$. Combining equations \eqref{eq:hcd-cal} and \eqref{eq:hcd-ma} immediately implies part (a) of the theorem.

    To prove part (b), let $h'(z)$ be the indicator for the event where $\hcD_1^{\otimes k}(z)$ exceeds $\hcD_0^{\otimes k}(z)$ for $z = (z_1, \ldots, z_k) \in \cX^k$, as in the theorem statement. Then,
    \begin{equation}
    \label{eq:mpv-hybrid}
        \biggl\lvert \sum_{z \in \cX^k} h'(z)\bigl(\hcD_1^{\otimes k}(z) - \hcD_0^{\otimes k}(z)\bigr)\biggr\rvert = \sum_{z \in \cX^k} \max\bigl(0, \hcD_1^{\otimes k}(z) - \hcD_0^{\otimes k}(z)\bigr).
    \end{equation}
    We note that any one-way restriction of $h'$ has the form $w \circ h$ for some threshold-based weight function $w : [0, 1] \to [0, 1]$ and that $\hcD_b$ is $(w \circ h, 2\gamma)$-indistinguishable from $\cD_b$ for any such $w$. Indeed, by the assumption that $h$ is $\gamma$-calibrated, the same reasoning as in equation \eqref{eq:hcd-ma} implies
    \[
        \biggl\lvert\sum_{z \in \cX} w(h(z))\bigl(\cD_b(z)-\hcD_b(z)\bigr)\biggr\rvert = \Abs{\E_{x \sim \cD_\cX}[2w(h(z))(g(x) - h(x))]} \le 2\gamma.
    \]  
    In particular, since the $\cD_b$ sums to $1$, the function $\hcD_b$ sums to at most $1 + 2\gamma$, so the function $\hcD_b^{\otimes k}$ sums to at most $(1 + 2\gamma)^k$. Therefore, by a simple $k$-step hybrid argument, we can replace each occurrence of $\hcD_b^{\otimes k}$ on the left side of equation \eqref{eq:mpv-hybrid} with $\cD_b^{\otimes k}$ while introducing an additive slack term of size at most $2k \cdot 2\gamma \cdot (1+2\gamma)^k$.

    Similarly, using equation \eqref{eq:hcd-cal}, we can replace each occurrence of $\hcD_b^{\otimes k}$ on the right side of the equation \eqref{eq:mpv-hybrid} with $\tcD_b^{\otimes k}$ while introducing an additive slack of at most $2k \cdot 2\gamma \cdot (1 + 2\gamma)^k$. Applying both this hybrid argument and the previous simplifies equation \eqref{eq:mpv-hybrid} to
    \[
        \biggl\lvert \sum_{z \in \cX^k} h'(z)\bigl(\cD_0^{\otimes k}(z) - \cD_1^{\otimes k}(z)\bigr)\biggr\rvert \ge \sum_{z \in \cX^k} \max\bigl(0, \tcD_1^{\otimes k}(z) - \tcD_0^{\otimes k}(z)\bigr) - 8k\gamma(1+2\gamma)^k,
    \]
    which is precisely the conclusion of part (b).
\end{proof}

\subsubsection{Second Part of \cref{thm:mpv-cma-succinct}}

We now seek to impose the constraint $\tcD_0 = \cD_0$ while $\tcD_1$ may still differ from $\cD_1$. As before, our construction of $\tcD_1$ will arise from the proxy game (\cref{def:proxy}). The key difference will be in the choice of the marginal distribution of $y$. In the preceding proof, we took $y \sim \cB(1/2)$. In the following proof, we will take $y \sim \cB(\alpha)$ for a small parameter $\alpha > 0$. Roughly speaking, this change ensures that the marginal distribution of $x$ is significantly tilted toward $\cD_0$, eliminating the need for an explicit proxy distribution $\tcD_0$. Because of the many similarities between the preceding result statement and the one below, we have written the key differences in blue:

\begin{lemma}
\label[lemma]{thm:mpv-cma-2}
    Given $\cF$, $\cD_0$, $\cD_1$, $\eps$, $\gamma$, and $h$ as in \cref{thm:mpv-cma-1}, let $\tcD_1$ be the proxy distribution given by \cref{def:proxy} with $\alpha={\color{navy} \eps}$. If $h$ is an $(\cF, {\color{navy} \eps^2})$-regular and $\gamma$-calibrated for $g$ under $\cD_\cX$:
    \begin{enumerate}[(i)]
        \item $\cD_1$ and $\tcD_1$ are $(\cF, {\color{navy} \eps + \gamma/\eps})$-indistinguishable,
        \item $\cD_0^{\otimes k}$ from $\cD_1^{\otimes k}$ are distinguished with advantage $\TV(\cD_0^{\otimes k}, \tcD_1^{\otimes k}) - {\color{navy} 2k(\gamma/\eps)(1+2\gamma)^k - 2k\eps}$ by
        \[
            h'(z_1, \ldots, z_k) = \bm{1}{\left[\prod_{i=1}^k h(z_i) > {\color{navy} \eps^k}\right]}.
        \]
    \end{enumerate}
\end{lemma}

In \cref{thm:mpv-cma-2}, letting $\cF$ contain all size-$s$ circuits, $\gamma = (\eps/k)^{10}$, and $h$ be the $\gamma$-calibrated and $(\cF, \eps^2)$-multiaccurate simulator guaranteed by \cref{thm:ttv-calibrated} yields the second part of \cref{thm:mpv-cma-succinct}.

\begin{proof}[Proof of \Cref{thm:mpv-cma-2}]
    Informally, the key observation is that $\cD_\cX$ is already so tilted toward $\cD_0$ that we have no need for the proxy $\tcD_0$. More formally, we first observe that $\cD_1 = g\cD_\cX / \eps$. To compute a similar formula for the proxy distribution $\tcD_1$, we set $\hcD_1 = h\cD_\cX/\eps$ and observe that for any $z \in \cX$,
    \[
        \tcD_1(z) = \Pr[x = z | \tilde{y} = 1] = \frac{\Pr[\tilde{y} = 1 \,|\, x = z] \Pr[x = z]}{\Pr[\tilde{y} = 1]} = \frac{\hcD_1(z)}{\Pr[\tilde{y}=1]/\eps}.
    \]
    It follows that $\tcD_1$ and $\hcD_1$ are close to each other in the sense that
    \begin{equation}
    \label{eq:v2-hcd-cal}
        \sum_{z \in \cX} \bigl\lvert \tcD_1(z) - \hcD_1(z) \bigr\rvert = \sum_{z \in \cX} \biggl\lvert \frac{\Pr[\tilde{y}=1]}{\eps}-1\biggr\rvert\cdot\tcD_1(z) \le \frac{\gamma}{\eps},
    \end{equation}
    where, in the last step, we have used the fact that $\Pr[\tilde{y}=1] = \E[h(x)] \approx_\gamma \E[g(x)] = \Pr[y = 1] = \eps$ by $\gamma$-calibration, as well as the fact that $\tcD_1$ is a distribution, which must sum to $1$. Next, by $(\cF, \eps^2)$-multiaccuracy, for all $f \in \cF$, 
    \begin{equation}
    \label{eq:v2-hcd-ma}
        \biggl\lvert\sum_{z \in \cX} f(z)\bigl(\cD_1(z)-\hcD_1(z)\bigr)\biggr\rvert = \Abs{\E_{x \sim \cD_\cX}\Bigl[\frac{1}{\eps}f(x)\bigl(g(x) - h(x)\bigr)\Bigr]} \le \eps.
    \end{equation}
    Thus, $\cD_1$ and $\hcD_1$ are $(\cF, \eps)$-indistinguishable. Combining equations \eqref{eq:v2-hcd-cal} and \eqref{eq:v2-hcd-ma} immediately implies part (a) of the theorem.

    To prove part (b), let $h'$ be the indicator for the event where $\hcD_1^{\otimes k}$ exceeds $\cD_\cX^{\otimes k}$, as in the theorem statement. Then,
    \begin{equation}
    \label{eq:v2-mpv-hybrid}
        \biggl\lvert \sum_{z \in \cX^k} h'(z)\bigl(\hcD_1^{\otimes k}(z) - \cD_\cX^{\otimes k}(z)\bigr)\biggr\rvert = \sum_{z \in \cX^k} \max\bigl(0, \hcD_1^{\otimes k}(z) - \cD_\cX^{\otimes k}(z)\bigr).
    \end{equation}

    By the same hybrid arguments as in the proof of \cref{thm:mpv-cma-1}, using equation \eqref{eq:v2-hcd-cal} in place of equation \eqref{eq:hcd-cal}, we can replace each occurrence of $\hcD_1^{\otimes k}$ on both the left and right sides of equation \eqref{eq:v2-mpv-hybrid} with $\cD_1^{\otimes k}$ while introducing an additive slack of at most $2k \cdot (\gamma/\eps) \cdot (1 + 2\gamma)^k$. Similarly, by our deliberately imbalanced construction of the marginal distribution, we have $\TV(\cD_\cX, \cD_0) \le \eps$, so we can similarly replace each occurrence of $\cD_\cX^{\otimes k}$ on the left and right sides of equation \eqref{eq:v2-mpv-hybrid} with $\cD_0^{\otimes k}$ while introducing an additive slack of at most $2k\cdot \eps$. Applying these two hybrid arguments simplifies equation \eqref{eq:v2-mpv-hybrid} to
    \[
        \biggl\lvert \sum_{z \in \cX^k} h'(z)\bigl(\cD_0^{\otimes k}(z) - \cD_1^{\otimes k}(z)\bigr)\biggr\rvert \ge \sum_{z \in \cX^k} \max\bigl(0, \tcD_1^{\otimes k}(z) - \tcD_0^{\otimes k}(z)\bigr) - 2k(\gamma/\eps)(1+2\gamma)^k - 2k\eps,
    \]
    which is precisely the conclusion of part (b). 
\end{proof}

\subsection{Characterization via Supersimulators}
\label[section]{sec:mpv-super}

In this section, we prove \cref{thm:mpv-super-succinct}, which fully closes the complexity gap between the upper and lower bounds in \cref{thm:mpv-cma-succinct}. We recall the theorem statement below. Its proof is extremely short, given the work we have already done to establish \cref{thm:mpv-cma-1,thm:mpv-cma-2}. The only remaining step is to plug in the supersimulator provided by \cref{thm:super}.

\mpvsuper*

\begin{proof}
    In the statement of \cref{thm:mpv-cma-1}, whenever $h$ is computable by a Boolean circuit of size $s_h$ and produces $b$-bit outputs, the corresponding $h'$ has complexity at most $s_{h'} = ks_h + (kb)^{O(1)}$. Thus, given an initial size $s \in \N$, we seek a $\gamma$-calibrated, $\Omega(\gamma)$-precision simulator $h$ of complexity $s_h$ that fools distinguishers of size $\max(s, s_{h'})$ with error $\eps$, where $\gamma = (\eps/k)^{10}$. Let $\cG(h)$ comprise all such distinguishers. Then, by \cref{thm:super},\footnote{See the discussion surrounding \cref{thm:supersimulator-basic} for more detail on incorporating $\gamma$-calibration with $\Omega(\gamma)$-precision.} there exists a suitable supersimulator $h$ with respect to these parameters of size at most $k^{O(1/\eps^2)}s$. Finally, observe that starting from \cref{thm:mpv-cma-2} instead of \cref{thm:mpv-cma-1} doubles the leading exponent on $\eps$ from $2$ to $4$ due to its $\eps^2$-regularity requirement.
\end{proof}

\bibliographystyle{alpha}
\bibliography{sources}

\appendix

\section{Constructon of Supersimulators}
\label[appendix]{sec:supersimulators}

We begin with a simple, unified proof of \cref{thm:ttv,thm:super} (complexity-theoretic regularity and supersimulators). Since these results are stated in terms of a single projection operation, rather than a nested sequence of such operations, we must slightly modify the proof from \cite{dwork2021outcome}. We do so using the following lemma about \emph{prefix sums}.

\begin{lemma}[Prefix Sums]
\label[lemma]{thm:ftrl}
    If $a_0, \ldots, a_k \in \R$ and $s_j = [a_0 + \cdots + a_j]_0^1$ and $b \in [0, 1]$, then
    \[
        \sum_{j=1}^k a_j(b - s_j) \le \frac{1}{2}(b-a_0)^2.
    \]
\end{lemma}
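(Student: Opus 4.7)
\medskip

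\textbf{Proof Plan.} My plan is to treat this as a projected/FTRL-style regret bound. Write $S_0 = 0$ and $S_j = a_1 + \cdots + a_j$, so $a_j = S_j - S_{j-1}$ and $s_j = [S_j]_0^1$ is the Euclidean projection of $S_j$ onto $[0,1]$. The target sum can then be rewritten by Abel summation as
\[
    \sum_{j=1}^k a_j(b - s_j) \;=\; \sum_{j=1}^k (S_j - S_{j-1})(b - s_j) \;=\; S_k(b - s_k) + \sum_{j=1}^{k-1} S_j (s_{j+1} - s_j),
\]
where the boundary term at $j=0$ vanishes because $S_0 = 0$.

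The key observation is a sign comparison driven by the projection. For the final term, $s_k$ is the projection of $S_k$ onto $[0,1]$ and $b \in [0,1]$, so the standard variational inequality gives $(S_k - s_k)(b - s_k) \le 0$, i.e.\ $S_k(b - s_k) \le s_k(b - s_k)$. For each middle term, I will do a small case analysis: if $S_j > 1$ then $s_j = 1$ and therefore $s_{j+1} \le 1 = s_j$, so $S_j - s_j \ge 0$ while $s_{j+1} - s_j \le 0$; if $S_j < 0$ then $s_j = 0$ and $s_{j+1} \ge 0 = s_j$, so the signs again disagree; and if $S_j \in [0,1]$ then $S_j = s_j$ and the two are equal. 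In every case $(S_j - s_j)(s_{j+1} - s_j) \le 0$, hence $S_j(s_{j+1} - s_j) \le s_j(s_{j+1} - s_j)$.

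Plugging these pointwise bounds back in, the target sum is at most $s_k(b - s_k) + \sum_{j=1}^{k-1} s_j(s_{j+1} - s_j)$. Now I can replace $s_j(s_{j+1} - s_j)$ by $\tfrac12(s_{j+1}^2 - s_j^2) - \tfrac12(s_{j+1} - s_j)^2$ (just the identity $2x(y-x) = y^2 - x^2 - (y-x)^2$), so the middle sum telescopes and is bounded by $\tfrac12(s_k^2 - s_1^2) \le \tfrac12 s_k^2$. Combining, I get
\[
    \sum_{j=1}^k a_j(b - s_j) \;\le\; s_k b - s_k^2 + \tfrac12 s_k^2 \;=\; s_k b - \tfrac12 s_k^2 \;=\; \tfrac12 b^2 - \tfrac12(b - s_k)^2 \;\le\; \tfrac12 b^2,
\]
completing the bound.

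The only place I expect to have to be careful is the sign analysis that lets me replace each $S_j$ by $s_j$: this is the one step that genuinely uses the projection and the fact that $b \in [0,1]$. Once that is in hand, everything else is Abel summation, a telescoping identity, and completing the square. I also note that $s_1^2 \ge 0$ and $(b - s_k)^2 \ge 0$ are both being discarded to get the clean right-hand side, reflecting that the bound is tight essentially only when $s_k = b$ and all earlier projections are neutral.
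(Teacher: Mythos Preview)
Your argument is correct. The Abel summation is right, the variational inequality $(S_j - s_j)(c - s_j)\le 0$ for any $c\in[0,1]$ (applied with $c=b$ for the boundary term and $c=s_{j+1}$ for the interior terms) is exactly the projection property, the telescoping via $2x(y-x)=y^2-x^2-(y-x)^2$ is valid, and the final completion of the square is fine.

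The paper takes a different route: it proves the lemma by induction on $k$, using the decomposition
\[
\sum_{j=1}^k a_j(b-s_j)=\sum_{j=1}^{k-1}a_j(s_k-s_j)+\Bigl(\sum_{j=1}^k a_j\Bigr)(b-s_k),
\]
applies the inductive hypothesis with $s_k$ in place of $b$ to bound the first sum by $\tfrac12 s_k^2$, and then handles the second term $S_k(b-s_k)$ with the algebraic identity $S_k(b-s_k)=\tfrac12(b^2-s_k^2)+\tfrac12[(s_k-S_k)^2-(b-S_k)^2]$, where the bracketed part is $\le 0$ because $s_k$ is the nearest point of $[0,1]$ to $S_k$. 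Your proof unrolls what the induction is doing: you make the telescoping explicit and invoke the projection inequality once per index rather than once per inductive step. The paper's version is a bit shorter and avoids Abel summation entirely; yours has the advantage of exhibiting the discarded nonnegative terms $\tfrac12 s_1^2$, $\tfrac12\sum_{j<k}(s_{j+1}-s_j)^2$, and $\tfrac12(b-s_k)^2$, which clarifies when the bound is tight. Incidentally, your three-case analysis is just the variational inequality again (with $s_{j+1}\in[0,1]$ in the role of $b$), so you could collapse it to a single line if you wanted.
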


\begin{proof}
    We induct on $k$. The base case ($k = 0$) is trivial. For $k \ge 1$, we observe that
    \[
        \sum_{j=1}^k a_j(b - s_j) = \sum_{j=1}^{k-1} a_j(s_k - s_j) + \sum_{j=1}^k a_j(b - s_k).
    \]
    By the inductive hypothesis, the first sum on the right side is at most $(s_k-a_0)^2/2$. Thus, we must show that rightmost sum is at most $((b-a_0)^2 - (s_k-a_0)^2)/2$. Some algebra does the trick:
    \[
        \sum_{j=1}^k a_j(b - s_k) = \frac{1}{2}\Bigl((b-a_0)^2 - (s_k-a_0)^2\Bigr) + \frac{1}{2}\biggl(\underbrace{\Bigl(s_k - \sum_{j=0}^k a_j\Bigr)^2 - \Bigl(b - \sum_{j=0}^k a_j\Bigr)^2}_{(*)}\biggr),
    \]
    where the term $(*)$ is $\le 0$ by the definition of $s_k$ and the assumption $b \in [0, 1]$.
\end{proof}

Equipped with this result on prefix sums, we are now ready to prove \cref{thm:super}.

\supersimulators*

\begin{proof}
    Given $h_0$, we will inductively define functions $h_1, \ldots h_{k-1}: \cX \to [0, 1]$ and then argue that some $h_j$ must be a $(\cG(h_j), \delta)$-regular simulator for $g$. For $j \ge 1$, suppose that $h_{j-1}$ is \emph{not} $(\cG(h_{j-1}), \delta)$-regular, and let $f_j$ be any function in $\pm \cG(h_{j-1})$ such that
    \[
        \E\Bigl[f_j(x)\bigl(g(x) - h_{j-1}(x)\bigr)\Bigr] > \delta,
    \]
    where the expectation is over $x \sim \cD$. Next, fix $\eta > 0$ and define
    \[
        h_j(x) = \Bigl[h_0(x) + \eta \sum_{i=1}^j f_i(x)\Bigr]_0^1.
    \]
    Since $\abs{f_j(x)} \le 1$, the functions $h_j$ and $h_{j-1}$ differ pointwise by at most $\eta$. Thus,
    \[
        \E\Bigl[f_j(x)\bigl(g(x) - h_j(x)\bigr)\Bigr] > \delta - \eta.
    \]
    Applying \cref{thm:ftrl} with $b = g(x)$ and $a_0 = h_0(x)$ and $a_j = \eta f_j(x)$ for each index $j \in [k]$ yields
    \[
        k \cdot \eta(\delta - \eta) < \E\Bigl[\sum_{j=1}^k \eta f_j(x)\bigl(g(x) - h_j(x)\bigr)\Bigr] \le \frac{1}{2}\E\Bigl[\bigl(g(x) - h_0(x)\bigr)^2\Bigr] \le \frac{1}{2}.
    \]
    Setting $\eta = \delta / 2$ and $k = 2/\delta^2$ yields a contradiction. We conclude that one of the functions $h_j$ for $0 \le j < k$ must have been an $(\cG(h_j), \delta)$-regular simulator for $g$, as desired.
\end{proof}

Besides the simpler projection operation, which was needed for one of our downstream applications, the preceding proof is largely the same as the proof of Lemma 5.6 of \cite{dwork2021outcome} regarding code-access outcome indistinguishability. That proof, in turn, shares much in common with the boosting-style proof of \cref{thm:ttv} from \cite{trevisan2009regularity}, with the key difference being that one now allows the family of candidate distinguishers to expand dramatically at each step. Interestingly, a modification like this was made in the graph regularity context to give an alternate proof of Szemer\'{e}di's regularity lemma---see Theorem 3.2 of \cite{lovasz2007analyst}, in which the distinguisher family consists of arbitrary unions of increasingly many rectangles.

We also remark that our ``prefix sum lemma'' in \cref{thm:ftrl} is essentially a special case of the analysis of standard algorithms in online learning and convex optimization, such as Follow-the-Regularized-Leader (FTRL) and mirror descent with lazy projections. Rather than invoking the full strength of this machinery, we have chosen to isolate the minimal technical tool required for the job in \cref{thm:ftrl}. For more on online convex optimization, we refer the reader to \cite{bubeck2015convex,mcmahan2017online}.

\subsection{Interpretation of the Construction}

\Cref{thm:super} shows how to construct a simulator $h$ that fools distinguishers whose complexity exceeds that of $h$ by an arbitrary, prespecified \emph{growth} function. To understand what this means, recall from \cref{sec:preliminaries} that for a given function family $\cF$, we measure the complexity of $h \in \cF_{(s_1, s_2)}$ by two numbers $s_1$ and $s_2$, which count the number of calls to functions in $\cF$ and the number of additional circuit gates, respectively, required to compute $h$.

Accordingly, a natural special case of the generic notion of a ``growth function'' $\cG$ in \cref{thm:super} is that of a nondecreasing function $G : \N^2 \to \N^2$ under the partial ordering of $\N^2$ in which $s' \ge s$ if both $s'_1 \ge s_1$ and $s'_2 \ge s_2$. In this slightly more concrete notation, \cref{thm:super} has the following corollary.

\begin{corollary}
\label[corollary]{thm:supersimulator-basic}
    For all distributions $\cD \in \Delta(\cX)$, distinguisher families $\cF \subseteq \{\cX \to [0, 1]\}$, target functions $g : \cX \to [0, 1]$, error tolerances $\eps \in (0, 1/2)$, and nondecreasing $G : \N^2 \to \N^2$, there exists a size bound $s \in \N^2$ and a simulator $h \in \cF_s$ such that:
    \begin{itemize}
        \item \textbf{(regularity)} $h$ is $\bigl(\cF_{G(s)}, \eps\bigr)$-regular,
        \item \textbf{(complexity)} $s \le S_{O(1/\eps^2)}$, where $S_0 = (1, 1)$ and $S_{i+1} = S_i + G(S_i) + \bigl(0, (\log(1/\eps))^{O(1)}\bigr)$.
    \end{itemize}
\end{corollary}

In \cref{thm:supersimulator-basic}, although $h \in \cF_s$, it fools any distinguisher in the class $\cF_{G(s)}$, which can be much larger than $\cF_s$ for appropriately chosen growth functions $G(s) \gg s$. For example, in the Boolean circuit setting, letting $\cF$ consist of the $n$ coordinate functions $x \mapsto x_i$ for $x \in \{0, 1\}^n$ and defining $G(s_1, s_2) = (n, \max(n, s_2)^k)$ yields \cref{thm:super-succinct}, the version of the lemma that we discussed in the introduction. Note also that the complexity of $h$ relative to $\cF$ is still bounded above by a quantity independent of the target function $g$. Indeed, both $s_1$ and $s_2$ can be bounded above by a constant that depends only on the growth function $G$ and the error tolerance $\eps$.

There are several important consequences of being capable of fooling distinguishers sufficiently larger than oneself. For example, one can check that if a simulator $h \in \cF_{s}$ is $(\cF_{s'}, \eps)$-regular, where $s_1' \ge s_1$ and $s_2' \ge s_2 + 1/\eps^{O(1)}$, then $h$, rounded to integer multiples of $\eps$, is automatically $O(\eps)$-calibrated. Roughly speaking, this is because a distinguisher in $\cF_{s'}$ has enough circuit gates to first compute the value $h(x)$ itself and then use its additional $1/\eps^{O(1)}$ gates to evaluate a weighted calibration test (see \cref{sec:preliminaries}). Similarly, if $s_1' \ge s_1 + 1$ and $s_2' \ge s_2 + 1/\eps^{O(1)}$, then the $\eps$-rounded version of $h$ must be $(\cF, \eps^{O(1)})$-multicalibrated. An analogous statement can be made for oracle-access outcome indistinguishability \cite{dwork2021outcome}, in which distinguishers are allowed to make $q$ oracle calls to $h$, if $s_1'$ and $s_2'$ are at least $q \cdot s_1$ and $q \cdot s_2$ respectively.

\subsection{Alternate Construction via Iteration}

In this section, we present a new and slightly more general supersimulator construction, which involves not only an arbitrary growth function $G$ but also a decaying error function $\eps : \N^2 \to (0, 1)$. Our construction is inspired by the iteration technique from the graph regularity literature discussed in \cref{sec:preliminaries}, in which one takes a sequence of increasingly regular vertex partitions. In the graph context, ``increasingly regular'' means shrinking the error parameter $\eps \to 0$ but usually means keeping the distinguisher class (cuts) fixed. In the abstract setting, ``increasingly regular'' could be interpreted as either shrinking $\eps$ or expanding $\cF$, or both. Our first construction (\cref{thm:super,thm:supersimulator-basic} above) can be viewed as expanding $\cF$ while keeping $\eps$ fixed. Our second construction (\cref{thm:supersimulator-shrinking} below) expands $\cF$ and shrinks $\eps$ simultaneously. It can be viewed as a straightforward iteration of \cref{thm:ttv-calibrated}, the calibrated version of complexity theoretic-regularity. Although the following version is more arguably more general than the previous, the previous had a simpler proof, which is why we presented it separately.

\begin{theorem}[Supersimulators, Expanding/Shrinking]
\label[theorem]{thm:supersimulator-shrinking}
    For all $\cD \in \Delta(\cX)$, $\cF \subseteq \{\cX \to [0, 1]\}$, $g : \cX \to [0, 1]$, nonincreasing $\eps : \N^2 \to (0, 1/2)$, $\alpha \in (0, 1/2)$, and nondecreasing $G : \N^2 \to \N^2$, there exist $s, s' \in \N^2$ and $h \in \cF_s$ and $h' \in \cF_{s'}$ such that:
    \begin{itemize}
        \item \textbf{(similarlity)} $\E_{x \sim \cD}(h(x) - h'(x))^2 \le \alpha + O(\eps(s))$,
        \item \textbf{(regularity)} $h'$ is $(\cF_{G(s)}, \eps(s))$-regular,
        \item \textbf{(complexity)} $s, s' \le S_{\lfloor 1/\alpha \rfloor}$, where $S_0 = (1, 1)$ and \[S_{i + 1} \le O\bigl(1/\eps(S_i)^{2}\bigr)G(S_i)  + \Bigl(0, \, \tilde{O}\bigl(1/\eps(S_i)^3\bigr)\Bigr).\]
    \end{itemize}
\end{theorem}

\Cref{thm:supersimulator-shrinking} provides \emph{two} functions, $h$ and $h'$. It guarantees that that the latter is a simulator that fools distinguishers much more complex than the former. Specifically, while $h \in \cF_s$, the simulator $h'$ fools all distinguishers in $\cF_{G(s)}$, where $G$ is the arbitrary growth function. Moreover, it fools them extremely well, allowing only a vanishingly small distinguishing error of $\eps(s)$, which we may take to decay arbitrarily fast with $s$.

The fact that $h'$ fools distinguishers more complex than $h$ is only useful if we know that $h$ is nontrivial. This is achieved by the \emph{similarity} condition, which states that $h$ and $h'$ are similar to each other in $L^2$ norm. One consequence of this condition is that $h$ itself is a simulator that fools distinguishers larger than itself, albeit with an error parameter that is not vanishingly small:

\begin{corollary}
    Let $\cF, G, \alpha, \eps, h, s$ be as in \cref{thm:supersimulator-shrinking}. Then $h$ is $(\cF_{G(s)}, O(\sqrt{\alpha + \eps(s)}))$-regular.
\end{corollary}

\begin{proof}
    Fix any $f \in \pm \cF_{G(s)}$. Then, by Cauchy-Schwarz,
    \[
        \E\bigl[ f(x)(g(x) - h(x)) \bigr] \le \E\bigl[ f(x)(g(x) - h'(x)) \bigr] + \sqrt{\E\bigl[f(x)^2\bigr] \cdot \E\bigl(h(x)-h'(x)\bigr)^2}.
    \]
    To bound the first term, apply the $(\cF_{G(s)}, \eps(s))$-regularity of $h'$. For the second term, note that $f$ takes values in the range $[-1, +1]$ and apply the similarity condition in \cref{thm:supersimulator-shrinking}.
\end{proof}

As already mentioned, the proof of \cref{thm:supersimulator-shrinking} involves a simple iteration of the calibrated version of the complexity-theoretic regularity lemma (\cref{thm:ttv-calibrated}). We use this calibrated version, rather than the default version (\cref{thm:ttv}), in order to establish the $L^2$ similarity condition. Without the calibration condition, we would only be able to upper bound the signed potential difference between $h$ and $h'$, rather than their $L^2$ distance.

\begin{proof}[Proof of \Cref{thm:supersimulator-shrinking}]
    Choose any $h_0 : \cX \to [0, 1]$. For each $i \in \N$, let $h_{i+1}$ be the $(\cF_{G(S_i)}, \eps(S_i))$-regular and $\eps(S_i)$-calibrated predictor that \cref{thm:ttv-calibrated} guarantees lies in
    \[
        (\cF_{G(S_i)})_{\Bigl(O\bigl(1/\eps(S_i)^2\bigr), \, \tilde O\bigl(1/\eps(S_i)^3\bigr)\Bigr)}.
    \] Expanding each call to a function in $\cF_{G(S_i)}$ with calls to functions in $\cF$, we see that $h_{i +1} \in \cF_{S_{i+1}}$ for the specified size bound $S_{i+1}$. Next, define the potential function $\Phi(i) = \E(g(x) - h_i(x)^2)$ as in the proof of \cref{thm:supersimulator-basic}. Similar algebra shows that the $L^2$ distance between $h_i$ and $h_{i+1}$ relates to the difference between $\Phi(i)$ and $\Phi(i+1)$ via:
    \[
         \E(h_i(x) - h_{i+1}(x))^2 = \Phi(i) - \Phi(i+1)- 2\E(h_i(x) - h_{i+1}(x))(g(x) - h_{i+1}(x)).
    \]
    Recall that $(\cF_{G(s)}, \eps(s))$-regularity of $h_{i+1}$ implies \[\Bigl\lvert\E\bigl[h_i(x)(g(x) - h_{i+1}(x))\bigr]\Bigr\rvert \le \eps(s).\] Similarly, $\eps(s)$-calibration of $h_{i+1}$ implies \[\Bigl\lvert\E[h_{i+1}(x)(g(x) - h_{i+1}(x))]\Bigr\rvert \le \eps(s).\] To conclude, choose an $i \le 1/\alpha$ satisfying $\Phi(i) - \Phi(i + 1) \le \alpha$, and let $(h, h') = (h_{i}, h_{i + 1})$.
\end{proof}

\begin{figure}[b]
     \centering
     \begin{subfigure}[b]{0.3\textwidth}
         \centering
         \includegraphics[width=\textwidth]{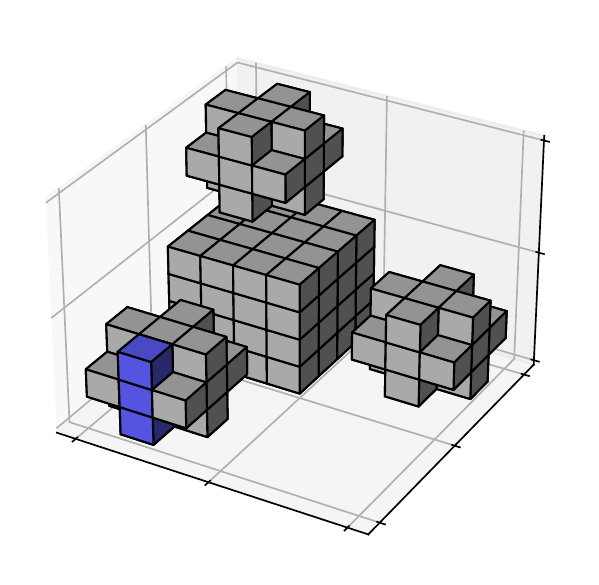}
         \caption{Tester $T$}
         \label{fig:tester}
     \end{subfigure}
     \begin{subfigure}[b]{0.3\textwidth}
         \centering
         \includegraphics[width=\textwidth]{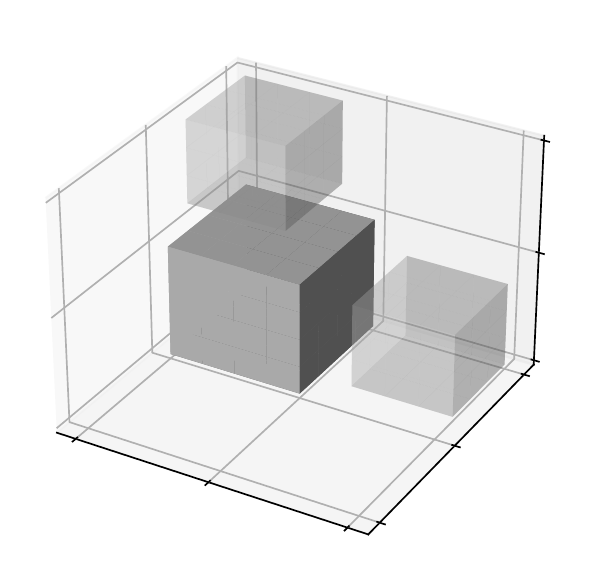}
         \caption{Simulator $\widetilde{T}_j$}
         \label{fig:before}
     \end{subfigure}
     \begin{subfigure}[b]{0.3\textwidth}
         \centering
         \includegraphics[width=\textwidth]{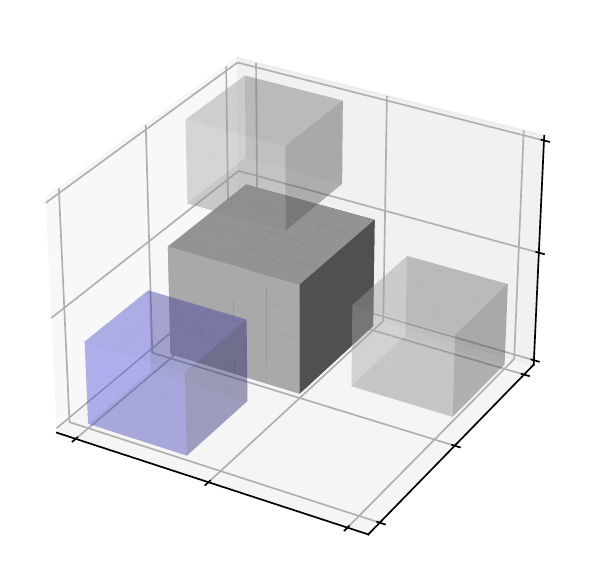}
         \caption{Simulator $\widetilde{T}_{j+1}$}
         \label{fig:after}
     \end{subfigure}
    \caption{Rough illustration of the supersimulator construction used in \cref{sec:symmetry-thm}. Plot (\subref{fig:tester}) depicts a deterministic $3$-sample tester $T$ by the set of triples that cause it to output \textsc{Accept}. A one-way restriction of $T$ is shown in blue. Plot (\subref{fig:before}) depicts a simulator in the sequence, with translucent regions indicating fractional estimates for $T$'s \textsc{Accept} region. Plot (\subref{fig:after}) depicts the next simulator in the sequence, after an update has been performed based on the chosen restriction of $T$.}
    \label{fig:super-visual}
\end{figure}

\end{document}